\documentclass[11pt]{article}
\usepackage[margin=1in]{geometry}
\usepackage[T1]{fontenc}
\usepackage{lmodern}
\usepackage{amsmath,amssymb,amsthm,mathtools,booktabs,array,microtype,enumitem}
\usepackage{xcolor}
\usepackage{tikz}
\usepackage{float,needspace}
\usetikzlibrary{arrows.meta}
\usepackage[colorlinks=true,linkcolor=blue!45!black,citecolor=blue!45!black,urlcolor=blue!45!black]{hyperref}
\usepackage{bookmark}
\hypersetup{pdftitle={Fine-Grained Hardness of Approximating Dynamic Time Warping},pdfauthor={Jihan Wang}}
\newtheorem{theorem}{Theorem}[section]
\newtheorem{lemma}[theorem]{Lemma}
\newtheorem{proposition}[theorem]{Proposition}
\newtheorem{corollary}[theorem]{Corollary}

\theoremstyle{definition}
\newtheorem{definition}[theorem]{Definition}
\newtheorem{hypothesis}[theorem]{Hypothesis}
\newtheorem{remark}[theorem]{Remark}
\newcommand{\DTW}{\operatorname{DTW}}
\newcommand{\shrink}{\operatorname{shrink}}
\newcommand{\bits}{\{0,1\}}
\newcommand{\ind}{\mathbf{1}}
\newcommand{\str}[1]{\mathtt{#1}}
\DeclareMathOperator{\polylog}{polylog}
\setlist{topsep=4pt,itemsep=3pt}
\title{Fine-Grained Hardness of Approximating Dynamic Time Warping}
\author{Jihan Wang\\[0.3em]
    \small The University of Illinois Urbana-Champaign\\
    \small Department of Computer Science\\
    \small Urbana, IL, USA\\
    \small \href{mailto:jihanw2@illinois.edu}{\nolinkurl{jihanw2@illinois.edu}}}
\date{}
\begin{document}
\maketitle
\begin{abstract}
We prove conditional lower bounds for approximating dynamic time
warping (DTW) over the uniform metric on three symbols.
Let $N,M$ be the two string lengths and
$n,m$ their respective numbers of runs.
Under the Gap Block Disjointness (GBD) hypothesis, for every fixed
$\kappa\in(0,1/6)$ and $\delta\in(0,1)$, no deterministic algorithm
$N^{\kappa\delta}$-approximates DTW on two explicit strings of
length $N$ in $O(N^{2-\delta})$ time. This matches the known
tradeoff between approximation and running-time exponents for
deterministic algorithms within any fixed factor greater than six in
the approximation exponent.
For run-length encoded strings, the Orthogonal Vectors
Hypothesis (OVH) rules out any constant-factor approximation in
$\widetilde O((nm)^{1-\delta})$ time for every constant $\delta>0$.
More generally, for every fixed $\eta\in(0,1)$, it rules out
$(N+M)^{1-\eta}$-approximation in the same running time.
These bounds extend to every fixed metric with at least three points, including
absolute distance on $\{0,1,2\}$.

Both reductions use long runs to force equal-symbol matches in
low-cost alignments. For explicit strings, we combine Boolean gadgets
with deterministic gap amplification to obtain a polynomial
approximation gap. For run-length encoded strings, an
encoding of regular-expression membership gives accepting instances
an alignment whose cost does not increase as selected runs grow.
Rejecting instances have distance at least the chosen run length,
while the number of runs stays fixed.
\end{abstract}

\section{Introduction}

Dynamic time warping (DTW) measures string similarity by minimizing
the cost of a monotone alignment that may repeat symbols. Let $X,Y$
be nonempty strings of lengths $N,M$, respectively. A \emph{warping path}
$\pi$ is a sequence of cells in the $N\times M$ grid. It starts at
$(1,1)$ and ends at $(N,M)$, using steps $(1,0),(0,1),(1,1)$.
Under the uniform metric, its cost
is the number of visited cells with unequal symbols. The DTW distance is
\[
 \DTW(X,Y)=\min_\pi\sum_{(i,j)\in\pi}\ind[X[i]\ne Y[j]].
\]

The standard dynamic program computes DTW in $O(NM)$ time.
Under the Strong Exponential Time
Hypothesis (SETH)~\cite{ImpagliazzoPaturi2001}, no deterministic
strongly subquadratic algorithm
computes DTW exactly~\cite{AbboudBackursWilliams2015,BringmannKunnemann2015}.
Kuszmaul~\cite{Kuszmaul2019} proved this bound even for equal-length
strings over a three-symbol uniform metric.

For two strings of length $N$ over a uniform metric,
Kuszmaul~\cite{Kuszmaul2019} gave a deterministic
$O(N^\varepsilon)$-approximation in $\widetilde O(N^{2-\varepsilon})$
time\footnote{Throughout, $\widetilde O$ hides factors polylogarithmic
in the total expanded length $N+M$.}
for every fixed $\varepsilon\in(0,1)$. How much approximation is
necessary to obtain a polynomial saving over quadratic time in $N$?
Under the Gap Block Disjointness hypothesis, we show that such
a saving requires a polynomial approximation factor for deterministic
algorithms, even over three symbols.

A second approximation question arises for run-length encoded (RLE) strings.
This representation stores each maximal run of equal symbols by its
symbol and length. Let $n,m$ be the respective numbers of runs in
$X,Y$. Exact DTW on run-length encoded strings can be computed in
$\widetilde O(nm)$ time~\cite{BonehEtAl2024}. Under SETH, this is
optimal up to subpolynomial factors for deterministic
algorithms~\cite{AbboudBackursWilliams2015,BringmannKunnemann2015}.
Can approximation give a fixed polynomial improvement over this
running time?
Under the Orthogonal Vectors Hypothesis (OVH),
we rule out such an improvement for every constant approximation factor.

\subsection{Our results}

\paragraph{Explicit strings.}
Our first result assumes the Gap Block Disjointness (GBD) hypothesis
of Abboud and Backurs~\cite{AbboudBackurs2017}, introduced to study
deterministic approximation of LCS.

\begin{theorem}\label{thm:explicit}
Assume the Gap Block Disjointness hypothesis
(Hypothesis~\ref{hyp:gbd}).
For every fixed $\kappa\in(0,1/6)$ and $\delta\in(0,1)$,
no deterministic algorithm
$N^{\kappa\delta}$-approximates DTW in $O(N^{2-\delta})$ time.
This holds for two strings of length $N$ over the uniform metric on
$\Sigma=\{\str{a},\str{b},\str{c}\}$.
\end{theorem}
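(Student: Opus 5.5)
We reduce from the Gap Block Disjointness hypothesis (Hypothesis~\ref{hyp:gbd}) of Abboud and Backurs. Recall its shape: we are given families of sets (or vectors) grouped into blocks, and must distinguish instances where some pair of blocks contains many disjoint pairs from instances where every pair of blocks has few disjoint pairs. The hypothesis says this needs quadratic time deterministically, even with a polynomial gap between the two thresholds. The plan has three parts. First, build a DTW instance whose cost separates the two cases by a multiplicative factor that is polynomial in the number of aligned gadgets. Second, amplify that gap deterministically to $N^{\kappa\delta}$. Third, check that the string length $N$ is only slightly superlinear in the input size, so that an $O(N^{2-\delta})$ algorithm would refute GBD.

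\textbf{Gadgets.} Over $\Sigma=\{\str a,\str b,\str c\}$ we use $\str c$ as a separator. We encode coordinate bits by short patterns of $\str a,\str b$, padded with long $\str c$-runs. The key lemma is that in any warping path of cost below the run length $L$, every long run must be matched mostly against an equal-symbol run. This forces coordinate gadgets to align with each other and vector gadgets to align with vector gadgets. Under this forcing, a coordinate pair costs $0$ if the bits are not both $1$, and costs $1$ otherwise. So a vector pair costs $0$ exactly when the vectors are disjoint.

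We then follow Kuszmaul's equal-length, three-symbol construction. A block of $A$-vectors is aligned against a block of $B$-vectors using a cyclic-shift (or padding) structure. Each $A$-vector may either be matched to some $B$-vector or be collapsed onto a ``garbage'' gadget of fixed cost $1$. The block-pair DTW is then roughly (block size) minus (number of disjoint pairs in a matching-like alignment). For the whole instance, we concatenate the blocks with an outer gadget that lets the path select one block pair cheaply. The YES and NO cases of GBD then translate into DTW values $\le T_1$ versus $\ge T_2$, with $T_2/T_1 \ge \text{poly}$ of the block size.

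\textbf{Gap amplification and parameters.} The DTW gap coming from GBD is tied to the block parameters. To reach $N^{\kappa\delta}$ we pick the block size as a polynomial in $N$ with exponent proportional to $\delta$. The constant $1/6$ should come from the losses that occur in this translation. We lose one factor because a block has $g$ vectors of length $d$ with $\text{polylog}$ padding. We lose another because run lengths $L$ must exceed the total cost of a cheap alignment, so $L$ scales with the block size and inflates $N$. If the GBD gap parameter alone is insufficient, we apply a deterministic amplification by composition. We substitute each symbol by a gadget string whose pairwise DTW values are $0$ on equal symbols and large on unequal ones, again using long runs for forcing, so that costs multiply while lengths multiply by a controlled factor.

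\textbf{Accounting and the main obstacle.} The final step is routine. With input size $n$, we have $N=n^{1+O(\delta')}$. An algorithm running in $O(N^{2-\delta})$ time then runs in $n^{2-\Omega(\delta)}$ time, and its approximation suffices to distinguish $T_1$ from $T_2$, contradicting GBD. I expect the main obstacle to be the soundness direction of the forcing lemma under approximation. In the NO case we must show that \emph{every} warping path has cost at least $T_2$, including paths that cheat by misaligning runs, warping one vector against several, or spreading across block boundaries. Cheating of this kind must cost at least as much as it saves. This needs a careful charging argument: each deviation from the canonical alignment is charged to a long run matched against an unequal symbol, contributing $\Omega(L)$. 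Tuning $L$ against the block sizes is exactly where the exponent constraint $\kappa<1/6$ will be determined.
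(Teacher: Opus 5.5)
Your reduction starts from a misreading of the source, and as a result it never produces a multiplicative gap. Hypothesis~\ref{hyp:gbd} is not ``some block pair contains many disjoint pairs versus every block pair contains few.'' The input is two lists of $p$ Boolean $K\times L$ matrices. A cross pair is good if \emph{some} corresponding row pair is disjoint. The promise is that no pair is good versus at least a $1-(\log p)^{-10}$ fraction of pairs being good, which is only a polylogarithmic density gap, not a polynomial one. A good pair is therefore guaranteed just one disjoint row pair among $K$. Local costs of the form ``block size minus number of disjoint pairs,'' with per-coordinate costs $0$ versus $1$, separate good from bad pairs by a single unit. An outer gadget that cheaply selects one block pair cannot use the density promise at all. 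The paper needs three ingredients here:
(i) separate encodings in which true comparisons cost $O(r)$ and false ones cost at least a freely chosen $H$, built from heavy-run literal gadgets and tagged AND/OR gadgets; their soundness (ordered disjoint slot intervals, endpoint isolation) stops a cheap path from straddling children, and this is how the OR over the $K$ rows is realized over three symbols;
(ii) a cap, via an OR with a fixed pair, so that every false comparison costs at most $H+O(r)$;
(iii) the Abboud--Backurs cyclic-shift layout, in which a cheap path pays for all $p$ comparisons of one shift, and disjoint guarded left strips make the \textsc{No} lower bound $pH$ additive.

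Even with (i)--(iii), the best shift still contains about $\theta p$ false comparisons. So the \textsc{Yes} cost is about $pr+\theta pH$, and the certified ratio is only about $1/\theta=(\log p)^{10}$. Your ``amplification by composition'' cannot fix this. Substituting each symbol by a gadget with distance $0$ on equal symbols and $D$ on unequal ones scales the \textsc{Yes} and \textsc{No} distances by at most the same factor, so the ratio does not grow while $N$ does. The missing idea is to amplify the \emph{density} of true pairs. Replace $F$ by $F'(i,j)=\bigvee_{(s,s')\in S}F(u_{i\oplus s},v_{j\oplus s'})$ for an explicit small-bias family $S$ of index translations; each encoder builds its side from its own list. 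The hitting bound drives the false fraction below $1/(16q)$ with $|S|=O(q\log^2p)$, and then $H=\Theta(qr')$ gives a factor-$q$ gap. This is also where $1/6$ comes from, which your accounting ($N=n^{1+O(\delta')}$) does not derive. The length is $\widetilde O(pr^2q^3)$: one factor $q$ comes from $H$ and two from $(r')^2$. With $q=p^\gamma$ and $r\le p^{a}$ (taking $K=\lceil p^\alpha\rceil$ with $\alpha<a$), you need $\kappa\delta(1+3\gamma+2a)<\gamma$ and $(2-\delta)(1+3\gamma+2a)<2$. Some $\gamma>0$ and small $a>0$ satisfy both exactly when $6\kappa<1$.
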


In particular, GBD excludes deterministic strongly subquadratic
algorithms with any $N^{o(1)}$ approximation factor. Together with
Kuszmaul's algorithm, this gives a tradeoff between the approximation
and running-time exponents that is tight up to any fixed factor greater
than six in the approximation exponent.

\paragraph{Run-length encoded strings.}
Our second result assumes the Orthogonal Vectors
Hypothesis~\cite{Williams2005}.

\begin{theorem}\label{thm:rle}
Assume the Orthogonal Vectors Hypothesis (Hypothesis~\ref{hyp:rovh}).
For every $\delta>0$, the following lower bounds hold for
run-length encoded DTW over the uniform metric on $\Sigma$.
\begin{enumerate}[label=(\roman*),ref=\roman*]
\item\label{thm:rle-constant}
For every constant $C\geq1$, no algorithm $C$-approximates DTW in
$\widetilde O((nm)^{1-\delta})$ time.
This holds even when every run has length one or $H$, where
$H=\Theta_C(n+m)$ and
$N+M=\Theta_C((n+m)^2)$.
\item\label{thm:rle-large}
For every fixed $\eta\in(0,1)$, no algorithm
$(N+M)^{1-\eta}$-approximates DTW in
$\widetilde O((nm)^{1-\delta})$ time.
This holds even when every run has length one or $H$ and
$N+M\leq(n+m)^{O(1/\eta)}$.
\end{enumerate}
\end{theorem}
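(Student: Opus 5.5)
The plan is a fine-grained reduction from Orthogonal Vectors, routed through an encoding of regular-expression membership, as the abstract suggests. Start from an OV instance with sets $A,B\subseteq\bits^d$, $|A|=|B|=n_0$ and $d=\polylog n_0$. Build two run-length encoded strings over $\Sigma=\{\str{a},\str{b},\str{c}\}$ with $n,m=\widetilde O(n_0)$ runs. Every run has length one or a parameter $H$. The goal is a gap: an orthogonal pair gives $\DTW(X,Y)\le T$ for a threshold $T=\widetilde O(n_0)$ that does not depend on $H$, while no orthogonal pair gives $\DTW(X,Y)\ge H$. Taking $H\ge C\cdot T$, a $C$-approximation separates the two cases. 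So an $\widetilde O((nm)^{1-\delta})$-time algorithm would solve OV in $n_0^{2-2\delta}\polylog$ time, contradicting OVH.

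\textbf{Step 1 (long runs force equal-symbol matches).} Prove a structural lemma. If an alignment has cost below $H$, then every run of length $H$ in either string must be matched almost entirely to cells carrying the same symbol. Such a run cannot be aligned against a single unequal symbol; that would cost at least $H$, since each of its $H$ positions occupies a distinct row or column. Hence low-cost warping paths respect a "skeleton" of long runs, which act as synchronization markers.

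\textbf{Step 2 (gadgets).} Following the regular-expression lower bound of Backurs--Indyk, design the text $Y$ to encode the vectors $b\in B$, separated by long runs. Design $X$ to encode a pattern that nondeterministically selects one $a\in A$ and checks coordinatewise orthogonality against one $b$, with freedom to skip the other vectors.

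The skip freedom is implemented by DTW's ability to repeat symbols. A long run of $X$ can absorb an arbitrary stretch of equal-symbol runs of $Y$, and vice versa. This is the point of choosing "selected runs" of length $H$: in the accepting alignment, those runs only ever match equal symbols. Growing $H$ therefore leaves the cost unchanged, and the accepting cost $T$ depends only on the coordinate gadgets. For each coordinate, the gadgets are chosen so that $(0,\cdot)$ and $(\cdot,0)$ pairs align at cost $0$ or at a uniform constant, while $(1,1)$ cannot align without crossing a length-$H$ run of a different symbol.

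\textbf{Step 3 (soundness).} Show that if no orthogonal pair exists, every path either violates Step 1 or matches some $(1,1)$ coordinate. Either way it pays at least $H$. This is the main obstacle. DTW paths can shift alignment in many ways: splitting a vector gadget across two text vectors, or collapsing many short runs onto one symbol. The argument needs careful use of the separators, with three distinct symbols for the vector boundary, the coordinate boundary, and the value, so that any misalignment costs a full long run. I would first prove that the boundary runs of $X$ and $Y$ must be matched in order, then argue locally inside a single vector pair.

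\textbf{Step 4 (parameters).} For (i), set $H=\Theta_C(n+m)$. The expanded length is then $N+M=O((n+m)H)=\Theta_C((n+m)^2)$. For (ii), take $H=(n+m)^{c}$ with $c=\Theta(1/\eta)$. Then $N+M\approx(n+m)^{1+c}$, so $(N+M)^{1-\eta}T<H$ once $c$ is large relative to $1/\eta$. This gives $N+M\le(n+m)^{O(1/\eta)}$ while the run count stays fixed, so the reduction still runs in time polynomial in the number of runs.
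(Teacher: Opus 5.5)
Your Steps~1 and~4 agree with the paper. A path of cost below $H$ gives each fully traversed heavy run an equality cell; only one such cell is guaranteed, not ``almost all'' of its positions. Your choices $H=\Theta_C(n+m)$ and $H=(n+m)^{\Theta(1/\eta)}$, with $N+M\le(n+m)H$, are also the paper's. They need the accepting cost to be $O(n+m)$, which you never establish.

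The genuine gap is in Steps~2--3, which contain the whole reduction. The skip mechanism you describe does not work. Consecutive runs have distinct labels, so there is no ``stretch of equal-symbol runs'' to absorb. Traversing a vector gadget of $Y$ against a single run of $X$ costs at least one for every gadget position whose symbol differs from that run's label. If the gadget contains a heavy run of another symbol, this already costs at least $H$. Aligning a non-selected $a$-gadget with a non-selected $b$-gadget instead costs at least $H$ whenever that pair has a $(1,1)$ coordinate, by your own coordinate gadget. So every non-selected gadget needs a neutral partner that it can match at $H$-independent cost. Soundness must then exclude the path that uses a neutral partner everywhere. Your dichotomy (``violates Step~1 or matches a $(1,1)$ coordinate'') omits exactly this path. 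Excluding it is the OR-forcing problem, and the proposal gives neither a construction nor an argument for it.

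The paper solves this in two stages. First, it adapts Backurs--Indyk to a pattern that concatenates factors of eight fixed types. Every left-vector gadget can always emit a neutral block. However, the marker $\str{B}_\star$ and the closing $V(1^d)V(1^d)V_1$ make the all-default alignment fail, so some auxiliary gadget must shorten its emission. The first such shortening can occur only immediately before a left-vector gadget, which must then emit a right-vector block; this certifies an orthogonal pair. The source also bounds consecutive empty factors by $\ell=d/2+1$, which bounds the number of dummy intervals.

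Second, a compiler writes each text symbol as heavy alternating code runs, and each factor as an alternating word with designated heavy positions. Choosing an emission then amounts to deleting light positions, and empty factors match light dummies. Heavy $\str{c}^H$ separators force any path below $H$ to induce an ordered factor--interval matching, which decodes to an accepting derivation. For completeness, a path on the run collapses gives every heavy position an equality cell and has at most $n+m-1$ cells. Expanding the heavy runs therefore keeps the cost at most $n+m-1$ for every $H$. Your Step~2 has no counterpart to either stage.
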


An elementary algorithm compares the two sequences of run labels.
It returns zero if they agree and $N+M$ otherwise. This gives an
$(N+M)$-approximation in $O(n+m)$ time.
Part~(\ref{thm:rle-large}) excludes every fixed polynomial
improvement in this factor in $\widetilde O((nm)^{1-\delta})$
time for any fixed $\delta>0$.
It also excludes every fixed polynomial approximation factor in the
total number of runs $n+m$.

The alphabet size is sharp for both results: binary DTW has a
linear-time algorithm on explicit strings and a near-linear-time
algorithm in the number of runs on RLE inputs~\cite{Kuszmaul2021Binary}.
Both lower bounds extend to every fixed metric with at least three
points, including absolute distance on $\{0,1,2\}$
(Remark~\ref{rem:fixed-metrics}).

The explicit-string construction also gives circuit consequences
without assuming GBD. Here $v$ is the input length, and
$\mathsf{E}^{\mathsf{NP}}$ denotes deterministic time $2^{O(v)}$
with an NP oracle.
For fixed $\kappa\in(0,1/6)$ and
$\delta\in(0,1)$, suppose a deterministic algorithm
$N^{\kappa\delta}$-approximates DTW on two strings of length $N$ in
$O(N^{2-\delta})$ time. For every polynomial size bound and every
polynomial-time computable sublinear depth bound, some language in
$\mathsf{E}^{\mathsf{NP}}$ then has no nonuniform fan-in-two circuits
meeting these bounds. By the connection of Abboud and
Backurs~\cite{AbboudBackurs2017}, the assumed algorithm also rules out
nonuniform linear-size Valiant series-parallel circuits for
$\mathsf{E}^{\mathsf{NP}}$.

\subsection{Related work}

In the one-way communication model, Braverman, Charikar, Kuszmaul, and
Yang~\cite{BravermanEtAl2020} proved an $\Omega(N/\alpha)$-bit lower
bound for randomized protocols that $\alpha$-approximate DTW.
The bound holds for strings of length at most $N$, for
$1\le\alpha\le N$, over any metric with at least three points.
Their protocols match this bound up to polylogarithmic factors for
metrics of polynomial size and aspect ratio.
Their three-symbol uniform-metric hard instances use long runs to
force ordered equal-symbol matches and separate distance at most one
from distance above $\alpha$.
Transferring such gaps to running-time lower bounds also requires
a fine-grained source reduction with instances compact enough to
preserve the running-time saving.

Froese, Jain, Rymar, and Weller~\cite{FroeseEtAl2023} gave an
exact algorithm for RLE time series in $O(n^2m+nm^2)$ time.
For RLE DTW over the uniform metric, Xi and
Kuszmaul~\cite{XiKuszmaul2022} gave a $(1+\varepsilon)$-approximation
in $\widetilde O(nm/\varepsilon^2)$ time.
Boneh et al.~\cite{BonehEtAl2024} subsequently gave an exact
$\widetilde O(nm)$-time algorithm for arbitrary costs.
For Euclidean DTW, Agarwal, Fox, Pan, and
Ying~\cite{AgarwalEtAl2016} gave near-linear-time
$(1+\varepsilon)$-approximations for point sequences drawn from
packed or bounded curves, for fixed dimension and curve parameters.
Boneh, Golan, and Kociumaka~\cite{BonehGolanKociumaka2026} recently
proved an $(N+M)^{1-o(1)}k$ lower bound for exact
DTW on explicit strings in the low-distance regime with threshold
$k$, under OVH. This matches Kuszmaul's
$O((N+M)k)$ upper bound~\cite{Kuszmaul2019} up to subpolynomial
factors. Their reduction separates consecutive distance
values at a growing threshold. Our RLE reduction separates a \textsc{Yes}
bound of $n+m-1$ from a freely chosen threshold $H$.

Under GBD, Abboud and Backurs~\cite{AbboudBackurs2017} showed that
there is no $\delta>0$ such that, for every $\varepsilon>0$, LCS has
a deterministic $(1+\varepsilon)$-approximation in
$O(N^{2-\delta})$ time.
Their construction uses a growing alphabet to select a corresponding
row pair. Our Boolean gadgets implement this selection over three
symbols. Such alphabet growth is necessary for a polynomial
approximation gap for LCS, since it admits a linear-time approximation
whose factor is the alphabet size.
Abboud and Rubinstein~\cite{AbboudRubinstein2018} subsequently showed
that deterministic strongly subquadratic constant-factor approximation
for LCS over a subpolynomial-size alphabet would imply new circuit
lower bounds. Chen et al.~\cite{ChenEtAl2019} strengthened these
consequences to bounded-fan-in circuits of depth $v^{o(1)}$ and
nonuniform $\mathsf{NC}^1$.

Approximation behaves differently for other similarity measures.
Buchin, Ophelders, and Speckmann~\cite{BuchinOpheldersSpeckmann2019}
strengthened earlier SETH-based lower
bounds~\cite{Bringmann2014Frechet,BringmannMulzer2016} to rule out
deterministic strongly subquadratic $(3-\varepsilon)$-approximation
for both continuous and discrete Fr\'echet distance.
This holds for any fixed $\varepsilon>0$, even in one dimension.
A recent deterministic strongly subquadratic
$(3+\varepsilon)$-approximation for Fr\'echet distance
in fixed-dimensional Euclidean space nearly matches this
bound~\cite{VanDerHorstOphelders2026}.
For unit-cost edit distance, no fine-grained lower bound is known
that excludes strongly subquadratic $(1+\varepsilon)$-approximation
for any fixed $\varepsilon>0$.
Randomized algorithms approximate unit-cost edit distance in
subquadratic time~\cite{ChakrabortyEtAl2020,GoldenbergRubinsteinSaha2020,MaoRubinstein2026}
and achieve constant factors in near-linear
time~\cite{AndoniNosatzki2026}.
Metric weighted edit distance also admits randomized constant-factor
approximations in strongly subquadratic
time~\cite{MaderTavasoliWang2026,DasKipouridisKociumaka2026,Gorbachev2026}.

\section{Technical overview}\label{sec:overview}

Both reductions separate the number of runs from their lengths.
Long runs constrain low-cost alignments, while equal-symbol
comparisons let accepting alignments remain cheap as these runs grow.
The explicit-string reduction must also keep the expanded strings
short enough to preserve a running-time saving. The RLE reduction
can choose longer runs without increasing the run count. Its run
labels do not depend on the long-run length, so it fixes them first
and chooses this length afterwards.

Our tagged Boolean gadgets encode the source predicate. A distance
cap controls the cost of bad pairs, and disjoint strips make their
lower bounds add across the list construction.
For RLE strings, our factor encoding uses short words with designated
heavy positions. These positions force the allowed emissions without
increasing the accepting cost.

\Needspace{12\baselineskip}
\subsection{Long runs and low-cost paths}\label{sec:overview-runs}

A cell $(i,j)$ is an \emph{equality cell} if $X[i]=Y[j]$.
The \emph{run collapse}
$\shrink(X)$ replaces each run by one copy of its symbol.
Its length is the number $n$ of runs in $X$; similarly,
$|\shrink(Y)|=m$. Two strings have distance zero exactly when
their run collapses agree.

Fix an integer $H\geq2$. Our gadgets use runs of length one,
called \emph{light}, and runs of length $H$, called \emph{heavy}.
Suppose a path segment traverses every position of a heavy run.
If the segment has cost below $H$, at least one of these positions
must occur in an equality cell. Otherwise its $H$ positions give
$H$ distinct mismatching cells. Along the same warping path,
equality cells from successive heavy runs occur in string order.
When the runs have different
labels, their equality cells must also lie in distinct runs on
the opposite axis.

Equality cells have a second use. Start with a path on
run-collapsed strings, and select positions that have equality
cells. Replace each selected position by $H$ copies of its symbol.
At its equality cell, traverse the added copies while holding the
opposite coordinate fixed. All added cells have cost zero.
Repeating this on both axes gives a path of the same cost in the
expanded strings.

The two properties serve different directions of the reductions.
The first decodes a low-cost path and proves soundness. The second
constructs a low-cost path and proves completeness.

\subsection{Explicit strings}\label{sec:overview-explicit}

A GBD instance consists of two lists of $p$ Boolean matrices with $K$ rows and $L$
columns. A pair is \emph{good} if some corresponding pair of rows
is disjoint. Its predicate is
\[
 F(A,B)=\bigvee_{k=1}^K\bigwedge_{j=1}^L
                 (\neg A_{k,j}\vee\neg B_{k,j}).
\]
For $p\geq3$, the Gap Block Disjointness problem asks to distinguish
the following cases:
\begin{enumerate}[label=(\roman*)]
\item \textsc{No}: no cross-list pair is good.
\item \textsc{Yes}: at least a $1-(\log p)^{-10}$ fraction of
      cross-list pairs is good.
\end{enumerate}
We use the hypothesis of Abboud and
Backurs~\cite{AbboudBackurs2017}.\footnote{We follow their verbal
definition of a good pair. The displayed predicate and \textsc{No} case in
their definition on page 11:4 contain inconsistent signs.}

\begin{hypothesis}\label{hyp:gbd}
For every pair of constants $\varepsilon,\alpha>0$, there is a
constant $c_0>0$ such that no deterministic algorithm solves
Gap Block Disjointness on two lists of $p$ matrices with dimensions
\[
 K=\lceil p^\alpha\rceil,\qquad L=\lceil c_0\log p\rceil
\]
in $O(p^{2-\varepsilon})$ time.
\end{hypothesis}

GBD is a deterministic hypothesis: a uniformly sampled pair
distinguishes the two cases with bounded error in $O(KL)$ time.
We encode each matrix
separately as a ternary string. Good pairs must have small distance,
and bad pairs must have distance at least the freely chosen $H$.

\Needspace{14\baselineskip}
\begin{definition}\label{def:encoders}
Fix an integer $r\ge3$.
A predicate $F(u,v)$ has \emph{separate encodings with run bound $r$}
if, for every integer $H\ge2$, there are
deterministically computable maps $X_F(u)$ and $Y_F(v)$ over $\Sigma$ satisfying:
\begin{enumerate}[label=(\roman*)]
\item Each output has at most $r$ runs, all of length one or $H$.
      It begins with $\str{c}^H$ and ends with $\str{b}^H$.
\item The left and right run collapses are fixed and distinct, independent of
      the inputs and of $H$.
\item True pairs have distance at most $r$, and false pairs have
      distance at least $H$.
\item Each encoder computes its output from its own input in $O(Hr)$ time.
\end{enumerate}
\end{definition}

The fixed collapses supply common dummies. Distinct side collapses
also allow a fixed capping pair with distance at least $H$.
Equal collapses would give distance zero.

\paragraph{Boolean gates.}
We build the matrix encoding from literal encodings, AND gates,
and OR gates. For a literal known to the left encoder, the right
string is $\str{c}^H\str{b}^H$. The left string is
$\str{c}^H\str{a}\str{b}^H$ if the literal is true, and
$\str{c}^H\str{a}^H\str{b}^H$ if it is false. A true literal has
distance at most one. In a false literal, all $H$ copies of
$\str{a}$ mismatch. Exchange the sides for a literal known to the
right encoder.

For a gate with $k$ children, let $X_j,Y_j$ be the strings in
its $j$-th child comparison, $j\in[k]$.
For soundness, we must recover complete comparisons between
corresponding children from a low-cost path. Such a path may enter
a child partway through or compare it with several children on
the other side. We place heavy tags before and after each child.
A leading tag alternates $\str{a},\str{b}$, and a trailing tag
alternates $\str{a},\str{c}$. For children with at most $r$ runs,
the tags have $r+2$ heavy $\str{b}$-runs and $\str{c}$-runs,
respectively. Write $\widehat X_j,\widehat Y_j$ for the tagged
children. Each occupies one active slot. The gate begins with an
outer $\str{c}^H$ and ends with an outer $\str{b}^H$.
A light dummy slot contains a
run-collapsed copy of an opposite tagged child. The tags in this
copy have light runs. We may insert these slots between
active slots, but never two light slots consecutively.

For any path of cost below $H$, the tags give ordered, disjoint,
nonempty intervals of left slots, one for each active right child.
A singleton interval at an active left slot yields a complete
child comparison of cost below $H$. The AND layout forces a
corresponding singleton for every child, and the OR layout forces
one for at least one child.

Consider the leading tag of an active right child.
A path below cost $H$ must give
its alternating heavy runs equality cells, so the $\str{b}$-witnesses
lie in distinct left $\str{b}$-runs. One child and the outer terminal
run can supply at most $r+1$ such runs. Reaching another child
from an active slot crosses a heavy trailing $\str{c}$-run,
which would cost at least $H$. From a light slot, the path must
cross the next heavy left leading tag to reach another child
while remaining in the right leading tag. Each heavy
$\str{b}$-run it crosses must have an equality cell with the
right leading tag. Thus the right leading tag
must have an equality cell in a left leading tag, which may be
heavy or light.
The trailing tags enforce the corresponding statement for
$\str{c}$.

Use these leading and trailing witnesses as anchors for each
right child $\widehat Y_j$. The anchors define a nonempty interval
$I_j$ of left slots, including both anchor slots.
Within each left slot, the leading tag precedes the trailing tag.
On the right, a child's leading anchor precedes its trailing anchor,
which precedes the next child's leading anchor. Monotonicity makes
the intervals ordered. Two successive intervals cannot share a left
slot, since that would require visiting its trailing tag before its
leading tag. If an interval consists of a
single active left slot, the subpath between its anchors traverses both
children completely. Leading tags contain no $\str{c}$, and
trailing tags contain no $\str{b}$. A path below $H$ must
have equality cells joining the two initial
$\str{c}^H$ runs and the two terminal $\str{b}^H$ runs.
Otherwise the endpoint witnesses would cross. These cells
bound a child comparison that we can complete at zero
additional cost.

For AND, there are $k$ right children and $k$ left slots.
The $k$ ordered, disjoint, nonempty intervals must be the $k$
corresponding singletons. Thus every corresponding child pair has
distance below $H$, so every child is true.

An OR requires a common child collapse on each side.
Set $U_X=\shrink(\widehat X_j)$ and
$U_Y=\shrink(\widehat Y_j)$, independently of $j$.
The left dummies are $U_Y$ and the right dummies are $U_X$,
so they permit zero-cost comparisons with opposite tagged children.
There are $2k-1$ left slots,
with active child $j$ at slot $2j-1$ and light dummies between
active slots. On the right, a light dummy appears before the first
active child, between consecutive active children, and after the
last, giving $2k+1$ slots.
Some interval must be the corresponding singleton
$\{2j-1\}$. If the first interval is not $\{1\}$, its right endpoint is
at least two. Each later interval begins after its predecessor.
Avoiding $\{2j-1\}$ makes the $j$-th right endpoint at least
$2j$, by induction. This is impossible for $j=k$.
The resulting singleton gives $\DTW(X_j,Y_j)<H$, so child $j$ is true.
Figure~\ref{fig:or-slots}
illustrates this slot count.

\begin{figure}[htbp]
\centering
\begin{tikzpicture}[x=1.5cm,y=1cm,font=\small,
 slot/.style={draw,minimum width=1.35cm,minimum height=.6cm},
 interval/.style={thick,blue!55!black}]
\node[slot] at (0,0) {$\widehat X_1$};
\node[slot,fill=black!6] at (1.5,0) {$U_Y$};
\node[slot,fill=blue!10] at (3,0) {$\widehat X_2$};
\node[slot,fill=black!6] at (4.5,0) {$U_Y$};
\node[slot] at (6,0) {$\widehat X_3$};
\foreach \i/\x in {1/0,2/1.5,3/3,4/4.5,5/6}
  \node[font=\footnotesize] at (\x,-.55) {slot $\i$};
\draw[interval] (-.45,.65)--(-.45,.85)--(1.95,.85)--(1.95,.65);
\draw[interval] (2.55,.65)--(2.55,.85)--(3.45,.85)--(3.45,.65);
\draw[interval] (4.05,.65)--(4.05,.85)--(6.45,.85)--(6.45,.65);
\node at (.75,1.15) {$I_1=[1,2]$};
\node at (3,1.15) {$I_2=\{3\}$};
\node at (5.25,1.15) {$I_3=[4,5]$};
\end{tikzpicture}
\caption{Every family of $k$ ordered disjoint nonempty intervals in
$2k-1$ slots contains a corresponding singleton $I_j=\{2j-1\}$.
Here $I_2$ isolates the pair $(X_2,Y_2)$.}
\label{fig:or-slots}
\end{figure}

For completeness, select a true child $i$. Use a path of cost at most
$r$ on $X_i,Y_i$ and match their tags at zero cost.
Each unselected tagged child matches
an opposite dummy with the same run collapse, also at zero cost.
Only the two right dummies adjacent to $\widehat Y_i$ remain.
Traverse them while holding the first and last positions of
$\widehat X_i$, respectively. Their total length is $O(r)$, so a
true OR pair has cost $O(r)$.

We standardize the literal collapses using a two-input AND.
Its first slot is reserved for a left literal and its second for
a right literal. Put the given literal in its reserved slot and a
true literal in the other. This preserves its value and gives all
literal encodings the same collapse on each side. Applying the
gates to the three levels of $F$ gives run bound $r=O(KL)$.

\paragraph{Combining lists.}
A fixed pairing of the lists, such as equal indices, may consist
entirely of bad pairs even when most cross-list pairs are good.
If at most a $\theta$ fraction of pairs are bad, averaging over
cyclic shifts gives a shift with at most $\theta p$ bad comparisons.
Following Abboud and Backurs~\cite{AbboudBackurs2017}, we realize
every shift in one fixed pair of strings: repeat the right list
twice and place $p$ light dummy blocks on each side of the left
list. Figure~\ref{fig:explicit-overview} shows the arrangement,
with $A_i,B_j$ denoting the local encoding blocks.

\begin{figure}[htbp]
\centering
\begin{tikzpicture}[x=1.65cm,y=1cm,font=\small,
 block/.style={draw,minimum width=1.12cm,minimum height=.55cm,inner sep=2pt},
 dummy/.style={block,fill=black!6},
 selected/.style={block,fill=blue!8},
 match/.style={-{Stealth[length=2mm]},blue!55!black}]
\node at (-.35,1) {$X$};
\node[dummy,minimum width=2.15cm] at (1.15,1) {$U\ \cdots\ U$};
\foreach \i/\x in {1/2.4,2/3.2,3/4,4/4.8}
  \node[selected] (a\i) at (\x,1) {$A_{\i}$};
\node[dummy,minimum width=2.15cm] at (6.05,1) {$U\ \cdots\ U$};
\node at (-.35,-.15) {$Y$};
\foreach \i/\j/\x in {1/1/.8,2/2/1.6,3/3/2.4,4/4/3.2,5/1/4,6/2/4.8,7/3/5.6,8/4/6.4}
  \node[block] (b\i) at (\x,-.15) {$B_{\j}$};
\foreach \i/\j in {1/3,2/4,3/5,4/6}
  \draw[match] (a\i.south)--(b\j.north);
\node[font=\footnotesize] at (3.6,-.75)
  {selected cyclic shift: $(B_3,B_4,B_1,B_2)$};
\end{tikzpicture}
\caption{The explicit construction with $p=4$, omitting all guards.
The left string contains four dummy blocks on each side of the active
list; the right string repeats its list twice. The highlighted
comparisons realize one cyclic shift. Each dummy $U$ is the common
run collapse of the right matrix strings.}
\label{fig:explicit-overview}
\end{figure}

Two additional features make this arrangement work for DTW.
First, without a cap, each bad local distance has an $O(Hr)$
upper bound. Summing over $\theta p$ bad comparisons gives an
$O(\theta pHr)$ upper bound. This does not ensure a gap from the
\textsc{No} lower bound $pH$ when $\theta r$ is large.
We cap every local distance at $H+O(r)$ by applying OR to the local
encodings and a fixed pair. The fixed pair has the same side
collapses and distance between $H$ and $H+O(r)$.
The OR preserves the $O(r)$ distance on good pairs and the
$H$ lower bound on bad pairs.

Second, the soundness argument cannot assume that a cheap path
follows a cyclic shift.
We place heavy guards around the active left blocks and the
right blocks. For each active left block, take a strip containing
that block, its immediately preceding leading tag, and its
immediately following trailing tag.
Fix an arbitrary warping path. The cost of a strip is the total cost
of the path cells in that strip. A strip costing below $H$
forces a complete comparison with one right block, hence a good
pair. If every pair is bad, each of the $p$ strips costs at least
$H$. Their left-coordinate ranges are disjoint, so these costs
add even if two strips encounter the same right block.

The cyclic-shift alignment and the disjoint-strip argument give
\[
 \textsc{Yes}:\ O(pr+\theta pH),\qquad
 \textsc{No}:\ \Omega(pH).
\]
For an integer approximation gap $q\geq1$, choose $H$ of order $qr$
and require $\theta$ to be a sufficiently small constant times
$1/q$. The \textsc{No} bound then exceeds $q$ times the \textsc{Yes} bound.

\paragraph{Amplifying the density.}
The GBD promise gives $\theta\le(\log p)^{-10}$, which guarantees
only a polylogarithmic $q$. We amplify the density to obtain a
polynomial $q$.
Pad each list to the next power-of-two size by repeating one of
its entries. Write the entries as $u_i,v_j$, with bit-string indices
$i,j$.
Let $\oplus$ denote coordinatewise addition modulo two.
Choose a fixed small-bias
family $S$ of pairs of index
translations, and replace a local predicate by
\[
 F'(i,j)=\bigvee_{(s,s')\in S}
              F(u_{i\oplus s},v_{j\oplus s'}).
\]
The left encoder reads only the translated entries of its own
list, and the right encoder does the same. The OR can be
constructed without evaluating all cross-list pairs.

If no original pair is good, no amplified pair is good.
If at least half are good, padding leaves constant density.
The small-bias hitting bound of Ben-Sasson and
Viola~\cite{BenSassonViola2014} bounds the missed fraction by a
constant times the square of the bias. Bias of order $q^{-1/2}$
suffices. The explicit construction of Alon et
al.~\cite{AlonEtAl1992} gives
$|S|=O(q\log^2 p)$.
The amplified encodings have run bound
$r'=O(qr\log^2 p)$. Taking $H=\Theta(qr')$ in the capped list
construction gives equal-length output strings of total length
and construction time
\[
 O(pq(r')^2)=\widetilde O(pr^2q^3).
\]
One factor of $q$ comes from choosing $H=\Theta(qr')$. The other
two come from squaring $r'$, which grows linearly with $q$ up to
logarithmic factors.

\paragraph{Preserving the running-time saving.}
For positive constants $\gamma,a$, take $q=\lceil p^\gamma\rceil$
and require $r\le p^a$.
The common output length satisfies
$N\le p^{1+3\gamma+2a+o(1)}$.
A hypothetical $N^{\kappa\delta}$-approximation in
$O(N^{2-\delta})$ time would solve the source gap in strongly
subquadratic time if
\[
 \kappa\delta(1+3\gamma+2a)<\gamma,\qquad
 (2-\delta)(1+3\gamma+2a)<2.
\]
The first inequality makes its approximation factor smaller than
$q$. The second keeps both construction and approximation time
subquadratic in $p$, since $2-\delta>1$.
With $a=0$, a positive $\gamma$ satisfies both inequalities exactly
when $6\kappa<1$. The constant six comes from the quadratic source
bound and the cubic dependence on $q$.
Choose such a $\gamma$, then a sufficiently small positive $a$.
The matrix encoding has $r=O(KL)$.
Taking $0<\alpha<a$ in the GBD hypothesis gives
$r=O(KL)\le p^a$ for sufficiently large $p$.
This yields Theorem~\ref{thm:explicit}.

For fan-in-two Boolean formulas, we use an OR of two ANDs at every
level, filling unused inputs with constants. This fixed layout
represents both gate types and gives all formulas with the same depth
bound the same collapses on each side. The run bound is $2^{O(d)}$
at depth $d$. Unfolding a
polynomial-size fan-in-two circuit of depth $o(v)$ on $v$ inputs
gives run bound $2^{o(v)}$.
Split the variables into two groups of nearly equal size and
encode their partial assignments separately. Each list has
$p=2^{v/2+O(1)}$ entries, so $r=2^{o(v)}=p^{o(1)}$.
The hypothetical DTW algorithm above then gives a $2^{(1-\Omega(1))v}$-time algorithm distinguishing no
satisfying assignment from a constant fraction of satisfying
assignments. The circuit criterion of Ben-Sasson and
Viola~\cite{BenSassonViola2014} gives the circuit consequence
described in the introduction.

\subsection{Run-length encoded strings}\label{sec:overview-rle}

For RLE inputs, increasing $H$ does not increase the number of
runs. We use this freedom to obtain the gap
\[
 \textsc{Yes}:\ \DTW(X,Y)\le n+m-1,\qquad
 \textsc{No}:\ \DTW(X,Y)\ge H,
\]
with $n,m$ independent of $H$.
The difficult part is to keep the accepting cost independent of
$H$ while forcing every path below $H$ to certify a solution.

Boneh et al.~\cite{BonehEtAl2024} compute
$\min\{\DTW(X,Y),k+1\}$ in $\widetilde O((n+m)k)$ time for a
positive integer threshold $k$. In our instances,
each source list has $p$ vectors of dimension $O(\log p)$, and the
DTW encoding has $n+m=p^{1+o(1)}$ runs. An accepting cost of at most
$(n+m)^{1-\varepsilon}$ for a fixed $\varepsilon>0$ would
let this algorithm decide the gap in strongly subquadratic
time. We obtain the approximation gap by increasing the rejecting
threshold while keeping the accepting cost linear in the run count.

The reduction passes through regular-expression membership. A pattern
$P=D_1\cdots D_s$ is a concatenation of factors. Each required
factor emits one symbol from a specified set; an optional factor
may instead emit the empty word $\epsilon$. A derivation chooses
one permitted emission from each factor, and $L(P)$ consists of
the words emitted by these derivations. A derivation is accepting
for $T$ if it emits $T$. A factor is empty in a derivation if it
emits $\epsilon$. We ask whether a given
text $T=v_1\cdots v_{L_T}$ belongs to $L(P)$.
The source uses a fixed alphabet and eight fixed factor types.

\paragraph{Encoding membership.}
Give each text symbol $v$ a constant-length alternating code
$\operatorname{code}(v)$ over $\{\str{a},\str{b}\}$, and expand
every code position into a heavy
run. Call the resulting block $T_v$.
Encode each factor by a constant-length alternating word in which
some positions are heavy and the others are light.
We choose the factor words so that a factor can emit $v$ exactly
when a subsequence of its word that retains every heavy position
collapses to
$\operatorname{code}(v)$.

For example, the symbol $\str{U}$ has code $\str{aba}$, so
$T_{\str{U}}=\str{a}^H\str{b}^H\str{a}^H$.
The optional $\str{U}$ factor uses
$\str{a}^H\str{b}\str{a}^H$.
It can match $T_{\str{U}}$ at zero cost. It can also match a
light dummy $\str{a}$ at cost one, representing the empty emission.
Every heavy run in an optional factor has label $\str{a}$.
Holding a dummy $\str{a}$ while traversing such a factor costs
one for each light $\str{b}$ position.
Every required factor has a heavy $\str{b}$-run, which cannot
receive an equality cell against such a dummy.

Let $F_i$ denote the factor blocks. Separate consecutive blocks
on both axes by $\str{c}^H$, and put light dummy intervals between
the text blocks. Figure~\ref{fig:rle-overview} shows an accepting
alignment. On \textsc{Yes} instances, the source guarantees an accepting
derivation with at most $\ell$ consecutive empty factors in each
internal text gap and at most $L_T$ at each end. Place $\ell$
dummies in each internal gap. We also place $L_T$ dummies at each
end to handle empty prefix and suffix factors.

\Needspace{18\baselineskip}
\begin{figure}[htbp]
\centering
\begin{tikzpicture}[x=1.65cm,y=1cm,font=\small,
 profile/.style={draw,minimum width=1.2cm,minimum height=.58cm,inner sep=3pt},
 guard/.style={draw,fill=black!12,minimum width=.55cm,minimum height=.58cm,inner sep=2pt},
 match/.style={-{Stealth[length=2mm]},blue!55!black}]
\node at (-.55,.9) {$X$};
\node at (-.55,-.3) {$Y$};
\foreach \x in {0,2,4,6} {
  \node[guard] at (\x,.9) {$\str{c}^H$};
  \node[guard] at (\x,-.3) {$\str{c}^H$};
}
\node[profile] (f1) at (1,.9) {$F_1$};
\node[profile,fill=blue!8] (f2) at (3,.9) {$F_2$};
\node[profile] (f3) at (5,.9) {$F_3$};
\node[profile] (t1) at (1,-.3) {$T_{v_1}$};
\node[profile,fill=blue!8] (dum) at (3,-.3) {$\str{a}$};
\node[profile] (t2) at (5,-.3) {$T_{v_2}$};
\draw[match] (f1.south)--(t1.north);
\draw[match] (f2.south)--node[right,font=\footnotesize] {empty} (dum.north);
\draw[match] (f3.south)--(t2.north);
\end{tikzpicture}
\caption{Factor gadgets $F_1,F_2,F_3$ in an accepting alignment.
The optional factor $F_2$ emits no symbol and uses a light dummy $\str{a}$;
the other factors emit the symbols represented by $T_{v_1},T_{v_2}$.
Heavy separators force an ordered matching between factors and
text or dummy intervals in every path of cost below $H$.
Other dummy intervals and their separators are omitted.}
\label{fig:rle-overview}
\end{figure}

A path below cost $H$ gives every fully traversed heavy run an
equality cell. The heavy separators prevent one factor from
having equality cells in two text or dummy intervals: the path
would cross a full $\str{c}^H$ while remaining in a block containing
no $\str{c}$. The same argument prevents one interval from meeting
two factors. Equality cells define an ordered matching.
It covers every factor and every real text interval.

Within a matched factor and real interval, retain the factor
positions that have equality cells. Every heavy factor position
is retained, and every heavy code run has a witness. One factor run
cannot have equality cells in two code runs: traversing the
opposite-label heavy run between them would cost at least $H$.
Monotonicity orders the retained groups, whose run collapse is
the whole code. Hence the factor can emit that text symbol.
Factors matched to dummies are optional and emit $\epsilon$.
The ordered matching gives an accepting derivation,
proving the \textsc{No} bound.

For the \textsc{Yes} bound, start with an accepting derivation and work on
the run collapses. Match each factor to its emitted symbol or to
a distinct dummy if it is empty. For each factor that emits $v$,
choose a subsequence of its word that retains every heavy position
and collapses to $\operatorname{code}(v)$. Match each run of this
subsequence to the corresponding code position. Visit deleted
light positions while holding the preceding code position, or
the first one before any retained position is reached.
Between rectangles, hold a left $\str{c}$ while traversing
right separators and unused dummies.
This gives a full path with an equality cell for every position
designated as heavy. Its cost is at most $n+m-1$, the maximum
number of cells in a path on the run collapses.
Expanding the positions designated as heavy to length $H$ adds
no cost.

\Needspace{12\baselineskip}
\paragraph{The Orthogonal Vectors source.}
Orthogonal Vectors (OV) asks whether two lists of $p$ vectors in
$\{0,1\}^d$ contain a pair, one from each list, with inner product
zero. We use the Orthogonal Vectors Hypothesis (OVH).

\begin{hypothesis}\label{hyp:rovh}
For every $\varepsilon>0$, there is a constant $c>0$ such that no
bounded-error randomized algorithm solves Orthogonal Vectors on two lists of $p$
vectors in dimension $d=\lceil c\log p\rceil$ in
$O(p^{2-\varepsilon})$ time.
\end{hypothesis}

The standard reduction from SAT to OV~\cite{Williams2005}, combined
with sparsification~\cite{ImpagliazzoPaturiZane2001}, shows that
randomized SETH~\cite{BennettPeikertTang2022} implies OVH.

We adapt the membership reduction of Backurs and
Indyk~\cite{BackursIndyk2016} to the eight factor types.
The text contains blocks encoding the right vectors,
interleaved with neutral blocks and markers. The pattern contains
a gadget for each left vector. That gadget can emit a right vector
block exactly when the two vectors are orthogonal, and it can
always emit a neutral block.

Auxiliary gadgets before and after each left vector gadget emit
strings whose lengths range from one text block to two. A fixed
gadget for the all-one vector follows each trailing auxiliary gadget.
This fixed gadget cannot emit a marker block. All vector and auxiliary
gadgets begin with $\str{U}$.

If every auxiliary gadget uses its default two-block emission,
each left vector gadget reads a neutral block. This alignment
cannot reach the final marker. In an accepting derivation, consider
the first auxiliary gadget that uses a shorter emission.
An intermediate length places the following gadget at a non-$\str{U}$
text symbol. The first shorter emission must span exactly one block.
If this auxiliary gadget followed a left vector gadget, the fixed
all-one gadget would have to emit a marker block. The first shorter choice
must occur immediately before
a left vector gadget, which must then emit a right vector block.
This certifies an orthogonal pair. Conversely, an orthogonal pair
permits that choice and the later shift needed to reach the final
marker.

Optional padding converts substring membership into membership
of the whole text. The source construction has
$L_T,s=\Theta(pd)$ and allows $\ell=\Theta(d)$.
The factor string has $n=\Theta(pd)$ runs. In the text string,
the $\ell$ dummy intervals in each gap give
$m=\Theta(pd^2)$ runs. Their product is
\[
 nm=\Theta(p^2d^3).
\]
OVH supplies a hard dimension $d=\lceil c\log p\rceil$ for a suitable
constant $c$. Construction then takes
$p^{1+o(1)}$ time, whereas an
$\widetilde O((nm)^{1-\delta})$-time approximation would give a
strongly subquadratic OV algorithm.

For a constant approximation factor $C$, choose
$H=\Theta_C(n+m)$ with $H>C(n+m)$.
A successful estimate is below $H$ on a \textsc{Yes} instance
and at least $H$ on a \textsc{No} instance, proving
Theorem~\ref{thm:rle}(\ref{thm:rle-constant}).
For each fixed $\eta\in(0,1)$, choose
$H=(n+m)^{\lceil2/\eta\rceil}$.
Since $N+M\le(n+m)H$, this choice gives
$(N+M)^{1-\eta}(n+m-1)<H$.
The run count is unchanged, and $\log(N+M)=O_\eta(\log p)$.
The same running-time contradiction proves
Theorem~\ref{thm:rle}(\ref{thm:rle-large}).

\Needspace{8\baselineskip}
\section{Preliminaries}\label{sec:preliminaries}

All logarithms are to base two, and $[k]=\{1,\ldots,k\}$.
Concatenation is written by juxtaposition, and $\bigodot$ denotes
indexed concatenation in increasing index order. For a symbol $\sigma$,
$\sigma^k$ denotes $k$ copies of $\sigma$. We use the standard
word-RAM model, with
$\Theta(\log(N+M))$-bit words for DTW.

A $C$-approximation to a distance $D$ returns a value $\widehat D$
satisfying $D\le\widehat D\le CD$.
Randomized approximation algorithms must satisfy this guarantee
with probability at least $2/3$ on every input.

\subsection{Run notation and fixed metrics}

Write the run decompositions as
\[
 X=x_1^{\ell_1}\cdots x_n^{\ell_n},\qquad
 Y=y_1^{\ell'_1}\cdots y_m^{\ell'_m},
\]
where successive run labels differ and all run lengths are positive.
The expanded lengths are
\[
 N=\sum_{i=1}^n\ell_i,\qquad M=\sum_{j=1}^m\ell'_j.
\]
A run-length encoding stores each run's symbol and length, with
the length written in binary.

\begin{remark}\label{rem:fixed-metrics}
Let $\rho$ be any fixed metric on $\Sigma$. Normalize its minimum
nonzero distance to one, and let $\Delta$ be its maximum distance.
Write $\DTW_\rho$ for the minimum path cost when a cell $(i,j)$ costs
$\rho(X[i],Y[j])$. Every path satisfies the corresponding cost bounds,
so
\[
  \DTW(X,Y)\le\DTW_\rho(X,Y)\le\Delta\DTW(X,Y).
\]
A $C$-approximation for $\DTW_\rho$ is a
$\Delta C$-approximation for uniform-metric DTW on the same strings.
For fixed $\kappa\in(0,1/6)$ and $\delta\in(0,1)$, choose
$\kappa'$ strictly between $\kappa$ and $1/6$. Then
$\Delta N^{\kappa\delta}\le N^{\kappa'\delta}$ for sufficiently
large $N$, so Theorem~\ref{thm:explicit} gives the same exponent
bound for $\rho$. Theorem~\ref{thm:rle}(\ref{thm:rle-constant})
applies with constant factor
$\Delta C$, preserving its run-length restrictions with constants
that may also depend on $\rho$. Likewise,
$\Delta (N+M)^{1-\eta}\le (N+M)^{1-\eta/2}$ for sufficiently large
$N+M$,
so Theorem~\ref{thm:rle}(\ref{thm:rle-large}) extends to $\rho$.
For absolute distance on $\{0,1,2\}$, we have $\Delta=2$.
Restricting to three points gives the same lower bounds in any fixed
metric with at least three points.

For any fixed finite metric,
multiplying a uniform-metric estimate by $\Delta$ gives an estimate
for that metric with a constant-factor loss. This transfers
Kuszmaul's deterministic approximation algorithm. The comparison of
approximation exponents holds within any fixed factor greater than six
for every fixed finite metric with at least three points.
\end{remark}

\Needspace{16\baselineskip}
\subsection{Basic facts}

\begin{lemma}\label{gate:path-facts}
For nonempty strings $U,V$, the following properties hold.
\begin{enumerate}
\item $\DTW(U,V)=0$ if and only if $\shrink(U)=\shrink(V)$.
\item Every path on $U,V$ has at most $|U|+|V|-1$ cells, and
$\DTW(U,V)\le \max\{|U|,|V|\}$.
\item Increasing the lengths of specified runs cannot decrease DTW.
\item If $U=U_1\cdots U_s$ and $V=V_1\cdots V_s$, where every
displayed block is nonempty, then
\[
  \DTW(U,V)\le\sum_{i=1}^{s}\DTW(U_i,V_i).
\]
\item If $U,V$ end in the same symbol, extending the final run of
either string preserves their distance exactly.
\end{enumerate}
\end{lemma}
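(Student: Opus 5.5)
I will prove the five items separately, each directly from the definition of a warping path.

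\emph{Item 1.} Write $\shrink(U)=\sigma_1\cdots\sigma_n$, so $U$ consists of runs $R_1,\dots,R_n$, and write $V$ as runs $S_1,\dots,S_m$. Suppose the collapses agree, so $n=m$ and $\sigma_t=\tau_t$. I build a path rectangle by rectangle: inside $R_t\times S_t$, first advance through $R_t$ with the column fixed, then advance through $S_t$ with the row fixed. A diagonal step joins each rectangle to the next. Every cell lies in some $R_t\times S_t$, so the cost is zero. Conversely, suppose some path has cost zero. Every visited cell is an equality cell. Each step moves to the same run or to the next run on each axis. Consider a step that enters a new run on one axis but not on the other. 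It would produce a cell pairing two different labels, because consecutive runs have different labels. Hence runs change simultaneously on both axes. The sequence of visited run pairs then matches the runs one to one, and the collapses agree.

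\emph{Item 2.} Each step increases $i+j$ by one or two. The path goes from $i+j=2$ to $i+j=|U|+|V|$, so it has at most $|U|+|V|-1$ cells. For the distance bound, assume $|U|\ge|V|$ and take the path that moves diagonally until it reaches the last row or column and then moves straight. It has exactly $\max\{|U|,|V|\}$ cells, so its cost is at most $\max\{|U|,|V|\}$.

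\emph{Item 3.} It suffices to lengthen one run by one symbol. Let $U'$ be obtained from $U$ by duplicating a position $i_0$, and take any path $\pi'$ on $U'$ and $V$. Define the map $\phi$ on rows that sends rows $i_0$ and $i_0+1$ of $U'$ to row $i_0$ of $U$ and shifts later rows down by one. Applying $\phi$ to $\pi'$ and deleting consecutive repeated cells gives a valid warping path on $U,V$. Steps stay in $\{(1,0),(0,1),(1,1)\}$, and every image cell has the same label pair as some cell of $\pi'$. The map is injective on retained cells after deduplication, so the cost does not increase. Hence $\DTW(U,V)\le\DTW(U',V)$. The same argument applies to $V$.

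\emph{Item 4.} Take optimal paths on each pair $(U_i,V_i)$, shift them into the corresponding blocks, and concatenate them. Each consecutive pair is joined by a $(1,1)$ step from the last cell of block $i$ to the first cell of block $i+1$. The resulting path is valid and its cost is the sum of the block costs.

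\emph{Item 5.} By Item 3, extending the final run cannot decrease the distance. For the reverse inequality, let $U$ end in $\sigma$ and append one more $\sigma$ to $U$. An optimal path on $U,V$ ends at $(|U|,|V|)$. Add the cell $(|U|+1,|V|)$ after it. Both symbols there are $\sigma$, so the added cell costs zero. By induction the distance is preserved exactly, and the same argument applies to $V$.

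The only step needing care is the projection in Item 3. There I have to check that deduplication preserves step validity, and that cost does not increase, which holds because images can only merge cells.
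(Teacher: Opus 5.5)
Your proof is correct and follows the paper's argument item by item, including the run-contraction projection for Item~3 and its reuse for the lower bound in Item~5. The only difference is that the paper contracts all enlarged runs at once with a monotone map whose increments are zero or one, whereas you duplicate one symbol at a time. In Item~3, $\phi$ itself is not injective, so the clean justification is this: the deduplicated image consists of distinct cells, each with a preimage in $\pi'$ carrying the same label pair, and choosing one preimage per image cell bounds the cost by that of $\pi'$.
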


\begin{proof}
If the run collapses agree, align the corresponding constant runs
at zero cost and concatenate these alignments.  Conversely, on a
zero-cost path, leaving a pair of corresponding constant runs
requires changing the run index on both axes at the same step.
Changing just one run index would compare distinct symbols.  Every
run on either axis is visited, so the two sequences of run labels
coincide.

Each step increases the sum of the coordinates, so any path has at
most $|U|+|V|-1$ cells. Use diagonal steps until one coordinate
reaches its last position, then advance only the other coordinate.
This path has $\max\{|U|,|V|\}$ cells.

For the third assertion, contract each enlarged run to its original
length. Map its positions monotonically onto the original positions,
with increments zero or one, preserving the two endpoints. Apply
these maps on both axes of a path in the enlarged strings. Each
projected step is either a valid warping-path step or stays at the
same cell. Removing consecutive repeated cells gives a
valid path in the original strings. The symbols at the projected
positions are unchanged, and removing repetitions cannot increase
the cost.

For the fourth assertion, concatenate optimal block paths. The
last cell for block pair $i$ and the first cell for block pair $i+1$
differ by $(1,1)$. The resulting path contains exactly the cells of
the block paths, so its cost is their sum.

For the fifth assertion, an old path extends through added copies of the common
terminal symbol at zero cost.  The reverse inequality follows from
the run-contraction argument above.
\end{proof}

\Needspace{12\baselineskip}
\begin{lemma}\label{gate:heavy-witnesses}
\begin{enumerate}[label=(\roman*)]
\item Let $\pi$ be a path segment of cost less than $H$.
Suppose that $\pi$ visits every position of heavy runs
$R_1,\ldots,R_k$ on one axis, listed in their order in the string.
Each $R_i$ has an equality cell on $\pi$.
If consecutive $R_i$ have different labels, then any choice of one
equality cell from each run gives strictly increasing run indices
on the opposite axis.
\item Let $U,V$ be run-collapsed strings, and select positions
in the two strings to expand into heavy runs. If a path on $U,V$
contains an equality cell for every selected position, then
replacing each selected position by $H$ copies of its symbol gives
expanded strings with a path of the same cost.
\end{enumerate}
\end{lemma}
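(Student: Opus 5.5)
For part (i), the plan is a counting argument followed by a monotonicity argument. Fix $i$. Since $\pi$ visits every position of $R_i$ on its axis, say the left, and each position of $R_i$ carries the same symbol $\sigma_i$, I will pick for each such position one cell of $\pi$ with that left coordinate. These are $H$ distinct cells, because their left coordinates differ. If none were an equality cell, all $H$ would mismatch, and the cost of $\pi$ would be at least $H$, a contradiction. So $R_i$ has an equality cell $(a_i,b_i)$ on $\pi$, and $Y[b_i]=\sigma_i$.

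For the ordering claim, choose such a cell for each $i$. The runs $R_1,\ldots,R_k$ occupy disjoint, increasing position ranges, so $a_1<\cdots<a_k$. A warping path is monotone in both coordinates, so cells appearing later along $\pi$ have coordinatewise larger or equal positions. Cells with larger left coordinate appear later, so $b_1\le\cdots\le b_k$, and the run indices of the $b_i$ on the opposite axis are nondecreasing. When consecutive runs $R_i,R_{i+1}$ have different labels, $Y[b_i]=\sigma_i\ne\sigma_{i+1}=Y[b_{i+1}]$. Thus $b_i$ and $b_{i+1}$ lie in different runs of the opposite axis, and that run index strictly increases. The case where the runs lie on the right axis is symmetric.

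For part (ii), the plan is to expand one selected position at a time and check that the cost and the invariant are preserved. Suppose a selected position $u$ on the left axis has equality cell $(u,w)$ on the current path. I replace $U[u]$ by $H$ copies and shift all later left coordinates by $H-1$. I then modify the path: cells before $(u,w)$ keep their coordinates, $(u,w)$ becomes the vertical segment $(u,w),(u+1,w),\ldots,(u+H-1,w)$, and later cells are shifted by $H-1$ on the left coordinate.

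The step into the new segment is an old step. The step out of it is the old step out of $(u,w)$, since only the left coordinate was shifted. Every new cell compares $U[u]$ with $Y[w]$, so the added cells cost zero. Every other cell compares the same symbols as before, so the cost is unchanged. The added cells all have cost zero, so in particular the new path has no fewer equality cells than before.

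Other selected positions keep an equality cell after expansion, because their original cells survive with shifted coordinates. Iterating over all selected positions on both axes therefore gives the expanded strings, with a path of the same cost. I expect no real obstacle here. The only care needed is verifying that step validity survives the shift at the segment endpoints, and that positions already expanded continue to have their witnesses.
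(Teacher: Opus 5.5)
Your proposal is correct and follows the paper's proof essentially step for step. It uses the same three ingredients: $H$ distinct mismatching cells when a heavy run has no equality cell, monotonicity plus distinct labels to get strictly increasing opposite run indices, and one-at-a-time expansion at an equality cell with zero-cost added cells (your checks of step validity and witness survival just spell out what the paper states briefly).
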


\begin{proof}
For (i), suppose that some $R_i$ has no equality cell on $\pi$.
The path visits every one of its $H$ positions, so each position
occurs in a mismatching cell. These give $H$ distinct unit-cost
cells because their coordinates on that axis differ. This
contradicts the cost bound.

Choose one equality cell in each $R_i$. Since the runs occur in
string order, the chosen cells occur in path order. Monotonicity
makes their run indices on the opposite axis nondecreasing.
If consecutive $R_i$ have different labels, their chosen cells
cannot lie in the same opposite run, whose positions all have the
same symbol. The run indices increase strictly.

For (ii), expand one selected position at an equality cell.
Insert $H-1$ additional copies of its symbol and traverse them
while holding the opposite coordinate fixed. Every added cell
compares equal symbols and has cost zero. The existing comparisons
keep their costs, and the equality cells for the other selected
positions remain available. Repeat this operation for all selected
positions in both strings. The resulting path has the same cost
as the original path.
\end{proof}

\section{Hardness for explicit strings}\label{sec:explicit}

We first encode individual predicate comparisons with Boolean gadgets.
We then amplify the fraction of true pairs, cap the distance of every
comparison, and combine the lists with heavy guards. The guarded
strips turn false comparisons into an additive lower bound. We prove
the cap and list construction before amplification, which uses both.

\subsection{Boolean gadgets}\label{gate:section}

The endpoint labels below let us recover a complete child comparison
from a path segment that traverses both children. This is needed
when tags isolate a singleton slot but the segment starts or ends
outside the child rectangle.

\begin{lemma}\label{gate:endpoint-isolation}
Let $\pi$ be a path segment in the grid for $X^-XX^+$ and $Y^-YY^+$.
Suppose that $\pi$ has cost below $H$ and visits every position of
$X$ and of $Y$. Assume that $X,Y$ begin with $\str{c}^H$ and end with
$\str{b}^H$. Assume also that the prefixes $X^-,Y^-$ contain no
$\str{c}$ and the suffixes $X^+,Y^+$ contain no $\str{b}$.
Then $\DTW(X,Y)<H$.
\end{lemma}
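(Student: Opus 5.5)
We will extract from $\pi$ a path on $X,Y$ alone whose cost is at most the cost of $\pi$; this gives $\DTW(X,Y)<H$. The construction has four steps: locate equality cells at the two ends, cut $\pi$ between them, check that the remaining piece has the required endpoints, and extend it to a full warping path on $X,Y$ at no extra cost.

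\textbf{Step 1: equality cells on the initial runs.} Let $c_X$ be the initial $\str{c}^H$ run of $X$ and $c_Y$ that of $Y$. Since $\pi$ visits every position of $X$, it traverses $c_X$ fully, so Lemma~\ref{gate:heavy-witnesses}(i) gives an equality cell $e_X$ on $\pi$ with first coordinate in $c_X$. Its second coordinate is a $\str{c}$ position on the $Y$ axis, so it lies in $Y$ or $Y^+$, because $Y^-$ contains no $\str{c}$. Symmetrically, there is an equality cell $e_Y$ on $\pi$ with second coordinate in $c_Y$ and first coordinate in $X$ or $X^+$.

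\textbf{Step 2: equality cells on the terminal runs.} Let $b_X$ and $b_Y$ be the final $\str{b}^H$ runs. In the same way we obtain an equality cell $f_X$ with first coordinate in $b_X$ and second coordinate in $Y^-Y$, and an equality cell $f_Y$ with second coordinate in $b_Y$ and first coordinate in $X^-X$, since $X^+,Y^+$ contain no $\str{b}$.

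\textbf{Step 3: cutting $\pi$.} Let $s$ be the earlier of $e_X,e_Y$ in path order. We claim $s$ lies in the rectangle $X\times Y$, and in fact in its first row or first column. Suppose $e_X$ comes first but its second coordinate is in $Y^+$. The cell $e_Y$ comes later on the monotone path, so its second coordinate is at least that of $e_X$. But $e_Y$ has second coordinate in $c_Y$, which precedes $Y^+$, a contradiction.

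So if $e_X$ is first, its second coordinate lies in $Y$, and its first coordinate lies in $c_X$. By monotonicity, every later cell of $\pi$ has first coordinate at least that of $e_X$. The symmetric statement holds when $e_Y$ is first.

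Likewise, let $t$ be the later of $f_X,f_Y$. It lies in $X\times Y$, with first coordinate in $b_X$ or second coordinate in $b_Y$. This is the crossing-witness argument described in the overview.

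\textbf{Step 4: building a full path on $X,Y$.} Take the subpath $\pi'$ of $\pi$ from $s$ to $t$. It stays inside $X\times Y$ by monotonicity, and its cost is less than $H$. To start at the corner of $X\times Y$, we prepend a segment from the first cell of $X\times Y$ to $s$. Suppose $s=e_X$, with first coordinate $i$ in $c_X$ and second coordinate $j$, where $Y[j]=\str{c}$. We prepend the walk that holds the first position of $Y$ (a $\str{c}$) while advancing through the $\str{c}$ positions of $c_X$ up to $i$, and then holds $i$ while advancing the $Y$ coordinate to $j$.

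The first part costs zero. The second part is the main obstacle: it passes through $Y$ positions before $j$, which need not be $\str{c}$. I would avoid it by instead comparing against the portion of $\pi$ itself. The path $\pi$ reaches $e_X$ after visiting all $Y$ positions before $j$, and those visits occur with first coordinate at most $i$, that is, within $X^-c_X$. I would project the first coordinates of these cells onto $c_X$ by clamping positions in $X^-$ to the start of $X$. A cell whose first coordinate lies in $X^-$, which contains no $\str{c}$, becomes a cell comparing against $\str{c}$. Its cost can change from zero to one only if $Y$ there is not $\str{c}$, and then the original cell already compared a non-$\str{c}$ symbol of $X^-$ against it. I would handle this case by clamping more carefully, using the facts about run contraction from Lemma~\ref{gate:path-facts} to control the effect of the projection.

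More simply, I would apply this clamping to the whole of $\pi$: map $X^-$ positions to the first position of $X$, map $X^+$ positions to the last position of $X$, and do the same on the $Y$ axis. I would then show, using the equality witnesses from Steps 1 and 2, that the projected path costs no more than $\pi$ restricted to suitable cells. Checking this cost comparison is the step I expect to require the most care.
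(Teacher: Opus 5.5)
Your Steps~1--3 match the paper's setup, but the proof is not finished. Step~4 ends with an unverified clamping plan, and the obstacle it is meant to address does not exist. The missing fact is already implicit in your Step~3. Suppose $s=e_X=(i,j)$ comes first and $e_Y=(i',j')$ comes later. Monotonicity then gives more than $j\notin Y^+$: it gives $j\le j'$, and $j'$ lies in the initial run $c_Y$. Since $Y[j]=\str{c}$ and $Y^-$ contains no $\str{c}$, $j$ is at or after the first position of $Y$. Hence $j\in c_Y$. The case where $e_Y$ comes first is symmetric.

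So $s\in c_X\times c_Y$, not merely in ``the first row or first column.'' This is exactly the paper's key claim that some equality cell joins the two initial $\str{c}$-runs. The paper argues by contradiction: otherwise the two witnesses would satisfy $i<i'$ but $j>j'$. In the same way, $t\in b_X\times b_Y$. Then every $Y$ position in your prepended walk is a $\str{c}$, so that walk through $c_X\times c_Y$ costs zero, as does the appended walk through $b_X\times b_Y$. The subpath of $\pi$ from $s$ to $t$ stays in $X\times Y$; note that $s$ precedes $t$ because $c_X$ precedes $b_X$. Together these give a full path on $X,Y$ of cost below $H$.

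The clamping alternative cannot close the gap without this same fact. Take a cell of $\pi$ in $X^-\times Y$ that compares two equal non-$\str{c}$ symbols. It costs zero in $\pi$ but one after its first coordinate is clamped to the first position of $X$, which holds a $\str{c}$. So the projected cost is not bounded by the cost of $\pi$ in general. The bound holds only once you know that the part of $\pi$ before $s$ projects into $c_X\times c_Y$, and similarly after $t$. That is precisely the claim $s\in c_X\times c_Y$, $t\in b_X\times b_Y$ you left unproved. With it, your argument coincides with the paper's.
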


\begin{proof}
Both initial heavy runs have equality cells on $\pi$, by
Lemma~\ref{gate:heavy-witnesses}. Suppose no equality cell joins
these two runs. An equality cell $(i,j)$ for the initial run of $X$
must have its right coordinate after the initial run of $Y$:
it cannot be in $Y^-$, which contains no $\str{c}$. Similarly, an
equality cell $(i',j')$ for the initial run of $Y$ must have its left
coordinate after the initial run of $X$. Thus $i<i'$ but $j>j'$.
A monotone path cannot contain both cells. The two initial
$\str{c}$-runs must have an equality cell with each other.

The same argument at the terminal runs, using the absence of
$\str{b}$ in $X^+,Y^+$, gives an equality cell between the two
terminal $\str{b}$-runs. By monotonicity, the subpath between these
two cells stays in the child rectangle. Extend it to the rectangle's
first corner through the initial $\str{c}$-runs and to its last
corner through the terminal $\str{b}$-runs. These extensions cost
zero, so the resulting full path on $X,Y$ has cost at most the cost
of $\pi$, which is below $H$.
\end{proof}

\subsubsection{Tags and ordered slot intervals}

Fix a child run bound $r\ge3$ and put
\[
 h=r+2,\qquad P=(\str{a}^H\str{b}^H)^h,\qquad Q=(\str{a}^H\str{c}^H)^h,
 \qquad t=4h+r=5r+8.
\]
Throughout the gate constructions, every child has at most $r$ runs,
all of length one or $H$, and endpoints $\str{c}^H,\str{b}^H$.
For a child $X$, its \emph{active wrapper} is $\widehat X=PXQ$.
The run collapse of an active wrapper is a \emph{light wrapper}.
Each wrapper has at most
$t$ runs. A slot contains one active or light wrapper.
The child inside a light wrapper is the run collapse of the
corresponding active child, so it also has at most $r$ runs.
The displayed $P,Q$ tags at this construction level, including their
light versions, are called \emph{genuine}; occurrences inside children
are not genuine tags.

Child endpoints differ from the adjacent tag labels.
Wrappers begin with $\str{a}$ and end with $\str{c}$, so no runs merge
at the child, slot, or outer-run boundaries.

\begin{lemma}\label{gate:slot-spans}
Consider a left string consisting of an initial $\str{c}^H$, a sequence of
slots with no two consecutive light slots, and a terminal $\str{b}^H$.
The right string consists of the same outer runs and a sequence of
active or light wrappers.
Every path of cost below $H$ has the following properties.
\begin{enumerate}[label=(\roman*)]
\item Every genuine heavy $P$-tag on the right has a $\str{b}/\str{b}$
equality cell in a genuine $P$-tag on the left.
Every genuine heavy $Q$-tag on the right has a $\str{c}/\str{c}$
equality cell in a genuine $Q$-tag on the left.
\item For each active right wrapper formed by genuine tags, choose
these leading and trailing equality cells as anchors. The left
slots from the leading anchor's slot to the trailing anchor's slot,
including both, form a nonempty interval. The intervals for
successive active right wrappers are ordered and disjoint.
\item If the interval for an active right wrapper with child $Y_j$
consists of a single active left slot with child $X_i$, then
$\DTW(X_i,Y_j)<H$.
\end{enumerate}
\end{lemma}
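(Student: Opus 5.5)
Part (i) is the core of the argument. Fix a path $\pi$ of cost below $H$ and a genuine heavy $P$-tag on the right, $P=(\str{a}^H\str{b}^H)^h$. A warping path visits every right position, so $\pi$ traverses all $2h$ heavy runs of this tag. Lemma~\ref{gate:heavy-witnesses}(i) then gives each run an equality cell. Consecutive runs have different labels, so the chosen cells lie in strictly increasing left runs. In particular, the $h=r+2$ heavy $\str{b}$-runs have witnesses in $h$ distinct left $\str{b}$-runs, and the whole witness sequence occupies a contiguous stretch of the left string between the first and last witness.

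I will argue by contradiction that some $\str{b}$-witness lies in a genuine left $P$-tag. Suppose none does. Each $\str{b}$-witness then lies in a left child, or in the terminal $\str{b}^H$; it cannot lie in a $Q$-tag or the initial $\str{c}^H$, which contain no $\str{b}$. Look at the left region the path occupies while it is on this right tag. This region cannot contain a full heavy left $\str{c}$-run, because the right tag has no $\str{c}$ and such a run would cost at least $H$. Every active left slot ends in a heavy $Q$ containing $\str{c}^H$, and every child begins with $\str{c}^H$. So the witnesses occupy at most one child plus whatever lies beyond it without crossing a heavy $\str{c}$. After an active slot's child, the path reaches a heavy $Q$, which blocks it. A light slot contains only light runs, so the path could pass through it into the next slot, but that next slot is active and begins with a genuine heavy $P$. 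Crossing that $P$ forces its $\str{b}$-runs to have equality cells with the right tag, and those cells lie in a genuine left $P$-tag, which we assumed does not happen. The light slot's own $P$ is also genuine. Counting the remaining possibilities, the available $\str{b}$-runs are at most the $\le r$ runs of one child plus the terminal run, giving at most $r+1<h$, a contradiction. The $Q$ case is symmetric, with the roles of $\str{b}$ and $\str{c}$ swapped and the outer $\str{c}^H$ playing the role of the terminal run.

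This case analysis is the main obstacle. I will need to carefully separate the cases by where the first $\str{b}$-witness lands: in a child of an active slot, in a child of a light slot, or in the terminal run. I also have to check that light $P$ tags count as genuine, so that witnesses landing there already satisfy the claim.

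For (ii), the leading anchor comes before the trailing anchor in path order, so its left slot index is at most the trailing anchor's slot. If the two slot indices are equal, then within that slot $P$ precedes $Q$, so the cells are consistent with monotonicity. The interval is therefore nonempty. For successive right wrappers $j<j'$, the trailing anchor of $j$ precedes the leading anchor of $j'$ on the right, so monotonicity makes the left positions nondecreasing. If the two intervals shared a slot, the trailing anchor of $j$, lying in that slot's $Q$, would have to precede the leading anchor of $j'$, lying in the same slot's $P$, on the left. That contradicts $P$ preceding $Q$ inside a slot. Hence the intervals are ordered and disjoint.

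For (iii), take the subpath of $\pi$ between the two anchors. It starts in the left $P$ and right $P$ and ends in both $Q$'s, so it visits every position of $X_i$ and $Y_j$. I then apply Lemma~\ref{gate:endpoint-isolation} with $X^-,Y^-$ the portions of the $P$-tags, which contain no $\str{c}$, and $X^+,Y^+$ the portions of the $Q$-tags, which contain no $\str{b}$. Its cost is below $H$, so $\DTW(X_i,Y_j)<H$.
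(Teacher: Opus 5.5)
Your proposal is correct and follows the paper's proof essentially step for step. You use the same $h=r+2$ distinct-witness count from Lemma~\ref{gate:heavy-witnesses}, the same two blocked crossings (an active slot's heavy trailing $\str{c}$-run, or, under the contradiction hypothesis, the next active slot's heavy leading $\str{b}$-run) to confine the witnesses to one child plus an outer run, and the same anchor-ordering and Lemma~\ref{gate:endpoint-isolation} arguments for (ii) and (iii). One small slip is harmless: light children do not begin with $\str{c}^H$, since all their runs are light, but your argument never actually relies on that remark.
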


\begin{proof}
For the leading tags, we show that a right $P$-tag cannot obtain all
its equality witnesses from children and the outer run. By
Lemma~\ref{gate:heavy-witnesses}, its alternating heavy runs have
witnesses with strictly increasing left run indices. In particular,
its $h$ $\str{b}$-runs require $h$ distinct left $\str{b}$-runs.
Suppose the right $P$ has no $\str{b}/\str{b}$ equality cell in
any genuine $P$-tag on the left. Its witnesses must then lie in children or
in the outer terminal run.

Witnesses in two different children would force the path to cross
a slot boundary while its right coordinate stays in this $P$.
If the earlier slot is active, the path must traverse a heavy
$\str{c}$-run of its trailing $Q$. Every cell in that traversal
mismatches because $P$ contains no $\str{c}$. If the earlier slot
is light, the next slot is active. The path must then traverse a
heavy $\str{b}$-run of that slot's leading $P$. This run has no equality
cell with the right $P$, by our assumption. Either traversal costs
at least $H$. All child witnesses must lie in one child.
That child has at most $r$ runs. Together with the outer terminal run, this gives
at most $r+1<h$ distinct left $\str{b}$-runs, a contradiction.
Figure~\ref{fig:tag-crossing} shows the crossing when the earlier
slot is active.

For a genuine heavy $Q$-tag on the right, suppose it has no $\str{c}/\str{c}$
equality cell in any genuine $Q$-tag on the left. Crossing an active slot's trailing $Q$
would traverse a heavy $\str{c}$-run with no equality cell by this
assumption. Crossing a light slot's next active $P$ would traverse
a heavy $\str{b}$-run, which always mismatches the right $Q$.
The witnesses must again lie in one child, together with possibly
the initial outer run. These cannot supply the $h$ distinct
$\str{c}$-runs required by the right tag.

The leading anchor precedes the trailing anchor on the right.
The anchors occur in the same order on the path.
Their left slot indices are nondecreasing,
so they define a nonempty interval. For two successive active right
wrappers, the first wrapper's trailing anchor precedes the second
wrapper's leading anchor. If both anchors were in the same left
slot, the path would visit that slot's $Q$ before its $P$. This is
impossible because $P$ precedes $Q$ on the left.
The first wrapper's trailing anchor lies in an earlier left slot
than the second wrapper's leading anchor. The intervals are ordered
and disjoint.

For a singleton interval containing an active left child $X_i$,
the subpath between the anchors traverses both $X_i$ and the right
child $Y_j$ completely. On each axis, the leading fragment lies
in $P$ and contains no $\str{c}$, and the trailing fragment lies
in $Q$ and contains no $\str{b}$. Lemma~\ref{gate:endpoint-isolation}
gives $\DTW(X_i,Y_j)<H$.
\end{proof}

\begin{figure}[htbp]
\centering
\begin{tikzpicture}[x=.95cm,y=.85cm,font=\small,
 pathline/.style={thick,blue!60!black,-{Stealth[length=2mm]}}]
\fill[red!8] (2.6,0) rectangle (4.5,3.1);
\draw[dashed,red!50!black] (2.6,0)--(2.6,3.1);
\draw[dashed,red!50!black] (4.5,0)--(4.5,3.1);
\draw[-{Stealth[length=2mm]}] (0,0)--(7.5,0)
 node[right,font=\footnotesize] {left coordinate};
\draw[-{Stealth[length=2mm]}] (0,0)--(0,3.4);
\node[left,align=center,font=\footnotesize] at (-.15,1.7)
 {right coordinate\\inside $P$};
\draw[pathline] (.35,.6)--(1.15,.6)--(1.15,1.1)--(2.6,1.1)
 --(3.1,1.6)--(4.5,1.6)--(5.25,2.35)--(6.9,2.35);
\fill[blue!60!black] (.9,.6) circle (2pt);
\fill[blue!60!black] (6.4,2.35) circle (2pt);
\node[above,font=\footnotesize] at (.9,.65) {$\str{b}/\str{b}$};
\node[above,font=\footnotesize] at (6.4,2.4) {$\str{b}/\str{b}$};
\node[align=center,font=\footnotesize] at (3.55,2.65)
 {every cell\\mismatches};
\node[above] at (3.55,3.15) {$\str{c}^H$ in the left $Q$};
\node[below] at (1.3,-.05) {earlier child};
\node[below] at (5.8,-.05) {later child};
\draw (2.6,-.25)--(2.6,-.45)--(4.5,-.45)--(4.5,-.25);
\node[below,font=\footnotesize] at (3.55,-.45) {$H$ left positions};
\end{tikzpicture}
\caption{An attempted passage between children while the right
coordinate stays in a leading tag $P$. If the earlier left slot is
active, the path crosses a heavy $\str{c}$-run in its trailing tag $Q$.
Since $P$ contains no $\str{c}$, all cells in the shaded band mismatch.
The path visits all $H$ left positions of this run and pays at least $H$.
Only the relevant segment is shown.}
\label{fig:tag-crossing}
\end{figure}

\Needspace{16\baselineskip}
\subsubsection{AND and OR}

The AND construction concatenates the wrapped children in the same
order on both sides.
\begin{proposition}\label{gate:and}
For $k\ge1$ child pairs satisfying the preceding endpoint and run
assumptions, set
\[
 X_\wedge=\str{c}^H\widehat X_1\cdots\widehat X_k \str{b}^H,
 \qquad
 Y_\wedge=\str{c}^H\widehat Y_1\cdots\widehat Y_k \str{b}^H.
\]
Then
\[
 \DTW(X_\wedge,Y_\wedge)\le\sum_{i=1}^k\DTW(X_i,Y_i),
 \qquad
 \bigl(\exists i:\DTW(X_i,Y_i)\ge H\bigr)
 \Longrightarrow\DTW(X_\wedge,Y_\wedge)\ge H.
\]
Each output has at most $kt+2$ runs, all of length one or $H$, and
endpoints $\str{c}^H,\str{b}^H$.
\end{proposition}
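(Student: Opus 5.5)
The proof splits into three parts: the upper bound by block decomposition, the lower bound via the slot-interval structure of Lemma~\ref{gate:slot-spans}, and the bookkeeping of runs.

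\textbf{Upper bound.} We decompose both strings into the $2k+2$ matching blocks
\[
\str{c}^H,\ P,\ X_1,\ Q,\ \ldots,\ P,\ X_k,\ Q,\ \str{b}^H
\]
and the corresponding blocks on the right, then apply Lemma~\ref{gate:path-facts}(4). The outer runs and the tags $P,Q$ are identical on both sides, so those block pairs contribute distance zero. The child blocks contribute $\sum_i\DTW(X_i,Y_i)$. Every block is nonempty, so the lemma applies and gives the stated upper bound.

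\textbf{Lower bound.} We prove the contrapositive. Suppose some path has cost below $H$; we will show $\DTW(X_i,Y_i)<H$ for every $i$.
\begin{itemize}
\item The left string has the shape required by Lemma~\ref{gate:slot-spans}: an initial $\str{c}^H$, then $k$ active slots with no light slots at all, then a terminal $\str{b}^H$. The right string consists of the same outer runs and $k$ active wrappers. All tags here are genuine.
\item By Lemma~\ref{gate:slot-spans}(i)--(ii), each right wrapper $\widehat Y_j$ gets a nonempty interval $I_j$ of left slots. The intervals $I_1,\ldots,I_k$ are ordered and pairwise disjoint, and they lie inside $\{1,\ldots,k\}$.
\item A counting step pins down the intervals. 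Write $\min I_j$ and $\max I_j$ for the endpoints. Ordering and disjointness give $\min I_{j+1}>\max I_j\ge\min I_j$. Induction then yields $\min I_j\ge j$ and $\max I_j\le k-(k-j)=j$, so $I_j=\{j\}$.
\item Each $I_j=\{j\}$ is a singleton containing the active left slot with child $X_j$. Lemma~\ref{gate:slot-spans}(iii) therefore gives $\DTW(X_j,Y_j)<H$ for every $j$.
\end{itemize}
This is exactly the contrapositive of the stated implication.

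\textbf{Structural claims.} Each wrapper has at most $t$ runs, as noted before Lemma~\ref{gate:slot-spans}. Adding the two outer runs gives at most $kt+2$ runs per output. No runs merge at the boundaries, because each wrapper begins with $\str{a}$ and ends with $\str{c}$. The remark before Lemma~\ref{gate:slot-spans} also covers the junctions with the outer runs: the initial $\str{c}^H$ is followed by $\str{a}$, and a $\str{c}$ is followed by the terminal $\str{b}^H$. All runs are children's runs or tag and outer runs, so all have length one or $H$. The endpoints are $\str{c}^H$ and $\str{b}^H$ by construction.

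The main obstacle was verifying the hypotheses of Lemma~\ref{gate:slot-spans} and the interval-counting step. Both are immediate here because the AND layout has no light slots and exactly $k$ slots on each side.
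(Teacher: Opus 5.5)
Your proposal is correct and follows the paper's proof essentially step by step. You use block concatenation via Lemma~\ref{gate:path-facts} for the upper bound. For soundness you apply Lemma~\ref{gate:slot-spans} and observe that $k$ ordered disjoint nonempty intervals among $k$ slots must be the corresponding singletons, and you count wrappers and joins for the structural claims. One trivial slip: your block decomposition has $3k+2$ blocks, not $2k+2$.
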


\begin{proof}
Match corresponding tags and outer runs at zero cost and concatenate
the child paths, using Lemma~\ref{gate:path-facts}.
For soundness, a path below $H$ would assign the $k$ right wrappers
$k$ ordered disjoint nonempty intervals among exactly $k$ left slots,
by Lemma~\ref{gate:slot-spans}. Every interval is its corresponding
singleton, so every corresponding child pair has distance below $H$.
The remaining assertions follow by counting wrappers and the
distinct-label joins above.
\end{proof}

\paragraph{OR.}
Assume in addition that all left children have the same fixed
collapse $A$, and all right children have the same fixed collapse $B$.
Define the light wrappers
\[
 U_X=\shrink(PX_iQ),\qquad U_Y=\shrink(PY_iQ).
\]
Their lengths are at most $t$, and they do not depend on the child
index or the encoded data.
In the layout below, every unselected active wrapper can match a
dummy on the opposite side with the same collapse. Selecting one
corresponding child pair leaves two right dummies to traverse while
holding the left coordinate fixed.
Using run collapses as zero-cost dummy gadgets also appears in the
reduction of Boneh, Golan, and Kociumaka~\cite{BonehGolanKociumaka2026}.

\Needspace{16\baselineskip}
\begin{proposition}\label{gate:or}
For $k\ge1$, define
\[
\begin{aligned}
 X_\vee&=\str{c}^H\widehat X_1U_Y\widehat X_2U_Y\cdots
             U_Y\widehat X_k \str{b}^H,\\
 Y_\vee&=\str{c}^HU_X\widehat Y_1U_X\widehat Y_2U_X\cdots
             \widehat Y_kU_X \str{b}^H.
\end{aligned}
\]
Then
\[
 \DTW(X_\vee,Y_\vee)\le\min_i\DTW(X_i,Y_i)+2t,
 \qquad
 \bigl(\forall i:\DTW(X_i,Y_i)\ge H\bigr)
 \Longrightarrow\DTW(X_\vee,Y_\vee)\ge H.
\]
The left and right outputs have at most $(2k-1)t+2$ and $(2k+1)t+2$
runs, respectively. Every run has length one or $H$.
Both outputs begin with $\str{c}^H$ and end with $\str{b}^H$.
\end{proposition}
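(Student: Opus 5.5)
The proof has three parts: the completeness bound, soundness via the slot intervals, and run counting.

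\emph{Completeness.} Fix an index $i$ minimizing $\DTW(X_i,Y_i)$. We build a path block by block and concatenate, as in Lemma~\ref{gate:path-facts}(4). Since some dummies have no partner, the decomposition is not a simple pairing, and the blocks are as follows.
\begin{itemize}
\item The outer $\str{c}^H$ runs match each other, and so do the outer $\str{b}^H$ runs, at zero cost.
\item For $j<i$, the left active wrapper $\widehat X_j$ is paired with the right dummy $U_X$ immediately preceding $\widehat Y_j$. Their collapses agree, so this costs zero by Lemma~\ref{gate:path-facts}(1).
\item For $j<i$, the left dummy $U_Y$ following $\widehat X_j$ is paired with $\widehat Y_j$, again at zero cost.
\item For $j>i$ the pairing is symmetric: $U_Y$ preceding $\widehat X_j$ goes with the $U_X$ following $\widehat Y_{j-1}$, which has equal collapse, and $\widehat X_j$ goes with the $U_X$ following $\widehat Y_j$.
\item Counting shows that only $\widehat X_i$ remains on the left. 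On the right, $\widehat Y_i$ remains together with its two adjacent dummies.
\end{itemize}
I will check this bookkeeping by listing the slots: the left string has $2k-1$ slots and the right has $2k+1$, with the excess two being exactly these neighbors.

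For the central block, compare $\widehat X_i$ against $U_X\widehat Y_iU_X$. Match the $P$-tags at zero cost, use an optimal path on $X_i,Y_i$, and match the $Q$-tags at zero cost. Traverse the left neighbor $U_X$ while holding the first position of $\widehat X_i$, and the right neighbor while holding the last position. Each costs at most $|U_X|\le t$, giving the bound $\min_i\DTW(X_i,Y_i)+2t$. The blocks must be nonempty and consecutive for concatenation, which holds by construction.

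\emph{Soundness.} Suppose a path has cost below $H$. The left string satisfies the hypothesis of Lemma~\ref{gate:slot-spans}, since no two consecutive slots are light. The lemma gives $k$ ordered, disjoint, nonempty intervals $I_1,\dots,I_k$ of left slots, one per active right wrapper, inside the slots $1,\dots,2k-1$; the active left slots are the odd ones. I claim some $I_j=\{2j-1\}$. Suppose not, and show by induction that $\max I_j\ge 2j$. For $j=1$: $I_1$ starts at slot $\ge1$, and if it ends at slot $1$ it is $\{1\}$, so it ends at $\ge2$. For the step, $\min I_j\ge \max I_{j-1}+1\ge 2j-1$. If $\min I_j>2j-1$ then $\max I_j\ge 2j$; if $\min I_j=2j-1$, then $I_j\ne\{2j-1\}$ forces $\max I_j\ge2j$. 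At $j=k$ this gives $\max I_k\ge 2k>2k-1$, a contradiction. The singleton $\{2j-1\}$ is an active slot holding $X_j$, so Lemma~\ref{gate:slot-spans}(iii) gives $\DTW(X_j,Y_j)<H$. This contradicts the hypothesis that every pair has distance at least $H$.

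\emph{Run counts and format.} Each wrapper has at most $t$ runs, and the two outer runs add $2$. Adjacent blocks have distinct boundary labels: wrappers begin with $\str{a}$ and end with $\str{c}$, the outer runs are $\str{c}^H$ and $\str{b}^H$, and dummies also begin with $\str{a}$ and end with $\str{c}$. Hence no runs merge, giving $(2k-1)t+2$ left runs and $(2k+1)t+2$ right runs. All runs have length one or $H$, and the endpoints are $\str{c}^H$ and $\str{b}^H$.

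I expect the main obstacle to be the completeness bookkeeping, especially showing that the concatenated blocks partition both strings exactly. The soundness argument follows directly from Lemma~\ref{gate:slot-spans} plus the counting induction.
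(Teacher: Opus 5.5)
Your soundness argument (the interval induction via Lemma~\ref{gate:slot-spans}) and your run and endpoint count match the paper's proof. So do your central-block estimate and your pairing for $j<i$. The gap is in the completeness bookkeeping for $j>i$, the step you yourself flagged.

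You pair the left dummy $U_Y$ preceding $\widehat X_j$ with the right dummy $U_X$ following $\widehat Y_{j-1}$, and you assert that these have equal collapse. They do not in general. We have $U_X=(\str{ab})^hA(\str{ac})^h$ and $U_Y=(\str{ab})^hB(\str{ac})^h$, where $A,B$ are the common left and right child collapses. Whenever $A\ne B$, which holds in every use of the gate in the paper, this comparison has positive cost by Lemma~\ref{gate:path-facts}.

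More seriously, your assignment does not partition the right string. No $\widehat Y_j$ with $j>i$ is ever compared with anything. Each right dummy following $\widehat Y_j$ for $i\le j<k$ is used twice: once as the partner of $\widehat X_j$ (or as the right neighbor of the central block when $j=i$), and again as the partner of the $U_Y$ preceding $\widehat X_{j+1}$. So Lemma~\ref{gate:path-facts}(4) cannot be applied to your blocks. Counting $2k-1$ versus $2k+1$ slots does not detect this.

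The fix is to mirror the $j<i$ pairing rather than shift it. For $j>i$, compare the left dummy preceding $\widehat X_j$ with $\widehat Y_j$ (both collapse to $U_Y$), and compare $\widehat X_j$ with the right dummy following $\widehat Y_j$ (both collapse to $U_X$). With the slots numbered in order, this gives the following correspondence:
\begin{itemize}
\item left slot $s$ faces right slot $s$ for $s\le 2i-2$;
\item left slot $2i-1$ faces right slots $2i-1,2i,2i+1$;
\item left slot $s\ge 2i$ faces right slot $s+2$.
\end{itemize}
This is an exact decomposition into consecutive nonempty block pairs. Every pair except the central one has equal collapses, so your central-block estimate yields $\DTW(X_\vee,Y_\vee)\le\DTW(X_i,Y_i)+2t$. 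This is exactly the table in the paper's proof.
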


\begin{proof}
Fix a selected child index $i$. The dummies let us match every
other active wrapper at zero cost. The table gives the order of
comparisons.
\[
\begin{array}{c|cc}
\text{child index}&\text{first comparison}&\text{second comparison}\\
j<i&\widehat X_j\leftrightarrow U_X&U_Y\leftrightarrow\widehat Y_j\\
j>i&U_Y\leftrightarrow\widehat Y_j&\widehat X_j\leftrightarrow U_X
\end{array}
\]
Before $i$, use the right dummy preceding $\widehat Y_j$ and the
left dummy following $\widehat X_j$. After $i$, use the left dummy
preceding $\widehat X_j$ and the right dummy following
$\widehat Y_j$. Each comparison has equal run collapses.

At the selected index, traverse the preceding right $U_X$ while
holding the first position of $\widehat X_i$. Compare
$\widehat X_i$ with $\widehat Y_i$ by matching their tags at zero
cost and using an optimal path on $X_i,Y_i$. Then traverse the
following right $U_X$ while holding the last position of
$\widehat X_i$. These are the only dummy traversals that may incur
cost, and together they cost at most $2|U_X|$.

To join these pieces, enter the first dummy traversal by a diagonal
step, then advance only the right coordinate into the selected
wrapper. After the selected comparison, advance only the right
coordinate into the second dummy and leave it by a diagonal step.
All other joins concatenate complete block paths, including the
outer runs. The outer runs provide the adjacent blocks when the
selected child is first or last. The total cost is at
most $\DTW(X_i,Y_i)+2|U_X|\le\DTW(X_i,Y_i)+2t$.

For soundness, Lemma~\ref{gate:slot-spans} assigns a path below $H$
ordered disjoint nonempty intervals $I_j=[a_j,b_j]$, $1\le j\le k$,
among the $2k-1$ left slots. Slot $2j-1$ contains $X_j$.
We show that some interval is exactly $\{2j-1\}$, as illustrated
in Figure~\ref{fig:or-slots}. If no corresponding singleton exists,
then $b_1\ge2$.
Suppose inductively that $b_{j-1}\ge2j-2$. Disjointness and order
give $a_j\ge b_{j-1}+1\ge2j-1$. If $b_j=2j-1$, the interval
would be the excluded singleton. Hence $b_j\ge2j$. At $j=k$
this contradicts $b_k\le2k-1$. The resulting singleton interval
gives $\DTW(X_j,Y_j)<H$ by Lemma~\ref{gate:slot-spans}, proving
soundness. Counting the wrappers and using the distinct-label
joins gives the run and endpoint assertions.
\end{proof}

\begin{lemma}\label{gate:collapse-preservation}
The AND and OR constructions have the following collapse properties.
\begin{enumerate}[label=(\roman*)]
\item If each child's collapses are independent of the encoded
data, then the AND output collapses are independent of the data.
The child collapses may depend on the child index. The same holds
for OR under its common-collapse hypothesis. In both constructions,
if the child collapses are independent of $H$, so are the output
collapses.
\item If the AND children have common left and right collapses
$A\ne B$, its output collapses are distinct. The OR output
collapses are always distinct.
\end{enumerate}
\end{lemma}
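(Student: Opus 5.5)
The plan is to write the output collapses explicitly as concatenations of fixed pieces, and then compare them symbol by symbol.

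\textbf{Part (i).} The first step is to note that collapse commutes with concatenation up to merging equal adjacent labels. Every boundary in the constructions is a distinct-label join, since wrappers begin with $\str{a}$ and end with $\str{c}$, children begin with $\str{c}$ and end with $\str{b}$, and the outer runs are $\str{c}^H$ and $\str{b}^H$. Hence
\[
\shrink(X_\wedge)=\str{c}\,\bigl(\shrink(P)\,\shrink(X_1)\,\shrink(Q)\bigr)\cdots\bigl(\shrink(P)\,\shrink(X_k)\,\shrink(Q)\bigr)\,\str{b},
\]
where $\shrink(P)=(\str{ab})^h$ and $\shrink(Q)=(\str{ac})^h$, and similarly on the right. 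The only data-dependent ingredients are the child collapses. If these are independent of the data and of $H$, then so is the output; the child collapses may depend on the child index, since that index is part of the fixed layout. The tag collapses depend only on $h=r+2$, which is fixed and not related to $H$.

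For OR, the same expansion applies. The dummies $U_X=\shrink(P)A\shrink(Q)$ and $U_Y=\shrink(P)B\shrink(Q)$ are fixed by the common-collapse hypothesis. The active wrappers collapse to $U_X$ on the left and $U_Y$ on the right. Therefore $\shrink(X_\vee)=\str{c}\,U_X(U_YU_X)^{k-1}\str{b}$ and $\shrink(Y_\vee)=\str{c}\,(U_XU_Y)^{k}U_X\,\str{b}$. Both are independent of the data and of $H$.

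\textbf{Part (ii).} For OR, the two collapses contain $(2k-1)$ and $(2k+1)$ wrapper collapses respectively. This suggests comparing lengths, $|\shrink(X_\vee)|=2+k|U_X|+(k-1)|U_Y|$ versus $2+(k+1)|U_X|+k|U_Y|$. The second exceeds the first by $|U_X|+|U_Y|>0$, so the collapses are distinct. This argument needs no assumption on $A,B$, which matches the claim that OR collapses are always distinct.

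For AND with $A\ne B$, $\shrink(X_\wedge)=\str{c}(\shrink(P)A\shrink(Q))^k\str{b}$ and $\shrink(Y_\wedge)=\str{c}(\shrink(P)B\shrink(Q))^k\str{b}$. If $|A|\ne|B|$, the lengths differ and we are done. Otherwise both strings have the same positions. The first wrapper occupies the same offset on each side, and the child substring begins right after $\str{c}\shrink(P)$, at position $2+2h$ on both sides. Since $A\ne B$ with $|A|=|B|$, these aligned substrings differ, so the output collapses differ.

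\textbf{Main obstacle.} The only delicate point is the claim that no merging occurs at the joins, so that the concatenation formula for collapses is exact. I will verify this from the stated endpoint labels: $\str{a}$ against $\str{c}$ at the wrapper/outer and wrapper/wrapper joins, $\str{b}$ against $\str{c}$ at the $P$/child join, and $\str{b}$ against $\str{a}$ at the child/$Q$ join. A light wrapper inherits the same endpoints as the active one.
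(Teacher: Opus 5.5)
Your proposal is correct and follows essentially the same route as the paper. Both arguments write each output collapse as a concatenation of the fixed tag and outer-run collapses with the child collapses, with no merging at joins. Both separate the OR collapses by the length difference $|U_X|+|U_Y|>0$, and for AND both note that either the lengths differ or the aligned first child blocks $A$ and $B$ differ. One small slip: the join between the last wrapper and the terminal outer run is $\str{c}$ against $\str{b}$, not $\str{a}$ against $\str{c}$, but it is still a join of distinct labels, so the argument is unaffected.
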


\begin{proof}
The tags and outer runs have fixed collapses, and no joins merge
runs. The output collapses are concatenations of these
fixed strings and the child collapses. This proves the independence
assertions.

For common child collapses $A,B$, write
$C_A=(\str{ab})^hA(\str{ac})^h$ and $C_B=(\str{ab})^hB(\str{ac})^h$.
The AND collapses are $\str{c}C_A^k\str{b}$ and $\str{c}C_B^k\str{b}$.
Equality of the output collapses forces $|C_A|=|C_B|$.
The first blocks then give $C_A=C_B$.
Removing the common tags gives $A=B$.
The right OR collapse is longer than the left by $|C_A|+|C_B|>0$.
\end{proof}

The run bounds depend only on the child run bound and the number of
children. Once the children and fixed dummies are available, each gate
output is written in time linear in its expanded length.

\subsection{Encoding the matrix predicate}\label{sec:encoding}

\subsubsection{Standardized literal encodings}\label{sec:literal-encoders}

For a literal whose truth value $z$ is known to the left encoder, set
\[
 \ell(z)=1+(H-1)(1-z),\qquad X(z)=\str{c}^H\str{a}^{\ell(z)}\str{b}^H,
 \qquad Y=\str{c}^H\str{b}^H.
\]
For a literal known to the right encoder, exchange the strings.
A true literal has distance
at most one. A false literal has distance at least $H$, because all
positions of its middle $\str{a}$-run mismatch.

These collapses depend on which side owns the literal. Standardize
them with a two-input AND whose first slot is reserved for a left
literal and whose second is reserved for a right literal. Put the
given literal in its reserved slot and a true literal in the other.
The left child collapses in slot order are $\str{cab},\str{cb}$; the right child
collapses are $\str{cb},\str{cab}$. The output collapse on each side is fixed,
independently of the literal's owner or value. The two side collapses
differ inside the first child, immediately after its initial $\str{c}$.
True literals have distance at most two, and false literals have
distance at least $H$. The run bound is $r_0=2(5\cdot3+8)+2=48$.

\subsubsection{From literals to matrices}

\begin{proposition}\label{prop:predicate-encoders}
For every $K,L\ge1$, the Gap Block Disjointness predicate
\[
 F(A,B)=\bigvee_{k=1}^K\bigwedge_{j=1}^L
                (\neg A_{k,j}\vee\neg B_{k,j})
\]
has separate encodings with run bound $r=O(KL)$.
\end{proposition}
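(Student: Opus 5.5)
We build the encoding bottom-up along the three levels of $F$, using the standardized literal encodings of Section~\ref{sec:literal-encoders} as leaves, Proposition~\ref{gate:or} for the clauses and the outer disjunction, and Proposition~\ref{gate:and} for the inner conjunctions.

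\textbf{Leaves.} For each $(k,j)$, encode the clause $\neg A_{k,j}\vee\neg B_{k,j}$ as a two-child OR. Its first child is the standardized literal for $\neg A_{k,j}$, owned by the left encoder. Its second child is the standardized literal for $\neg B_{k,j}$, owned by the right encoder. Both children have the same fixed left collapse and the same fixed right collapse, independent of owner and value. So the common-collapse hypothesis of Proposition~\ref{gate:or} holds with child run bound $r_0=48$. A true clause has distance at most $2+2t_0$, where $t_0=5r_0+8$. A false clause has distance at least $H$. The output has $O(1)$ runs and a fixed pair of distinct collapses, by Lemma~\ref{gate:collapse-preservation}(ii).

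\textbf{Rows.} For each $k$, apply AND with $L$ children. The children have common distinct collapses $A\ne B$ and a constant run bound, so the output has $O(L)$ runs. By Lemma~\ref{gate:collapse-preservation}, the output collapses are fixed, common across all $k$, and distinct. A row with all clauses true has distance at most $L\cdot O(1)=O(L)$. If any clause is false, the distance is at least $H$.

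\textbf{Top level.} Apply OR over the $K$ rows. This is legitimate because all rows share collapses. The output has $O(KL)$ runs and distinct fixed collapses. Good pairs have distance at most $O(L)+2t$, where $t=O(L)$. Bad pairs have distance at least $H$.

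\textbf{Fitting Definition~\ref{def:encoders}.} Set $r=c\,KL$ for a large enough constant $c$ with $r\ge3$. Then the output has at most $r$ runs and the true-pair distance $O(L)$ is at most $r$. Condition (i) holds because every gate preserves run lengths in $\{1,H\}$ and the endpoints $\str{c}^H,\str{b}^H$. Condition (ii) follows from Lemma~\ref{gate:collapse-preservation}, since no collapse depends on the data or on $H$.

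\textbf{Main obstacle: separability and running time.} One must check that each side's string depends only on its own matrix. A literal owned by the other side is encoded by the fixed string $\str{c}^H\str{b}^H$, and the literal placed in the unreserved slot is a fixed true literal. Hence every gadget on the left is computed from $A$ alone, and likewise on the right from $B$; all dummies are fixed collapses. Each gate writes its output in time linear in its expanded length, and that length is at most $Hr$. So each encoder runs in $O(Hr)$ time, which is condition (iv).

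One subtlety to verify is that Proposition~\ref{gate:or} requires every child to begin with $\str{c}^H$ and end with $\str{b}^H$ and to have at most $r$ runs. We use the actual child run bound at each level when forming the tags, so these hypotheses hold level by level.
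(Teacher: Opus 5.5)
Your proposal is correct and follows the paper's proof. Both use an OR of the two standardized literals at each coordinate, an AND over the $L$ coordinates of each row, and an OR over the $K$ rows, with Lemma~\ref{gate:collapse-preservation} supplying the common and distinct collapses and level-specific run bounds (the paper's $r_1,r_2,r_3$) feeding the tags. One justification is loose: for the $O(Hr)$ time bound it is not enough that each gate's output has length at most $Hr$, since there are many gates. You need that children appear verbatim inside their parents' wrappers, so each of the four levels writes $O(Hr)$ symbols in total, which is how the paper argues it.
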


\begin{proof}
Use an OR of the two standardized literals at each coordinate, an
AND of the $L$ coordinates in each row, and an OR of the $K$ rows.
Lemma~\ref{gate:collapse-preservation} maintains the common-collapse
condition at both OR levels and distinct side collapses throughout.
Let $r_1,r_2,r_3$ bound the runs in a coordinate encoding, a row
encoding, and the final matrix encoding, respectively. The gate
bounds give
\[
 r_1=5(5r_0+8)+2,\qquad
 r_2=L(5r_1+8)+2,\qquad
 r_3=(2K+1)(5r_2+8)+2=O(KL).
\]
At a true AND, sum the child costs; at a true OR, select a true child
and pay the additional $2(5r+8)$. The displayed run bounds also bound
these costs. A false AND has a false child, and every child of a false
OR is false, so Propositions~\ref{gate:and} and~\ref{gate:or} give
distance at least $H$. Both gates preserve the endpoints and allowed
run lengths. There are four construction levels, including literal
standardization, each written in time linear in its output size.
Each encoder uses only its own matrix, giving $O(Hr_3)$ time.
\end{proof}

\subsection{Combining separate encodings}\label{pack:section}

Fix a predicate with separate encodings of run bound $r$.
We cap every local distance at $H+O(r)$ while preserving the
true- and false-pair bounds. To apply OR, the fixed pair that
provides the cap must have the original side collapses.

\subsubsection{Capping local costs}

\begin{lemma}\label{pack:cap}
Let $C_X\ne C_Y$ be collapsed strings of length at most $r$, both
beginning with $\str{c}$ and ending with $\str{b}$.
For every $H\ge2$, there are strings $x_{\mathrm{cap}},y_{\mathrm{cap}}$
with these exact collapses. Every run has length one or $H$,
and both strings begin with $\str{c}^H$ and end with $\str{b}^H$.
Their distance satisfies
\[
 H\le\DTW(x_{\mathrm{cap}},y_{\mathrm{cap}})\le H+r.
\]
\end{lemma}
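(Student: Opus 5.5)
The plan is to expand only a carefully chosen set of positions of the two collapses. The lower bound will then follow from Lemma~\ref{gate:heavy-witnesses}(i) and a subsequence obstruction, and the upper bound from an explicit path with a single costly heavy run.

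\emph{Choosing the side.} Since $C_X\ne C_Y$, they cannot each be a subsequence of the other: mutual subsequences have equal length and hence coincide. Let $S\in\{C_X,C_Y\}$ be a collapse that is not a subsequence of the other one, $T$. Embed $S$ greedily into $T$: match $S[1]$ at the first possible position of $T$, and in general $S[i]$ at the first matching position of $T$ after the previous match. Let $j+1$ be the first index of $S$ that cannot be matched. Here $j\ge1$, since both collapses begin with $\str{c}$. Greedy embedding is optimal, so $S[1..j+1]$ is not a subsequence of $T$.

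\emph{Construction.} In the string with collapse $S$, make positions $1,\dots,j+1$ heavy, together with the final $\str{b}$. In the string with collapse $T$, make only the first $\str{c}$ and the last $\str{b}$ heavy. All other positions are light. Both strings then begin with $\str{c}^H$ and end with $\str{b}^H$, have the exact collapses, and use only runs of length one or $H$. If $j+1=|S|$, the final $\str{b}$ is already among the heavy positions.

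\emph{Lower bound.} Take a full path of cost below $H$. By Lemma~\ref{gate:heavy-witnesses}(i), the consecutive heavy runs $S[1],\dots,S[j+1]$ have distinct consecutive labels, so they receive equality witnesses with strictly increasing run indices on the $T$ axis. Those witnesses exhibit $S[1..j+1]$ as a subsequence of $T$, which is a contradiction. Hence $\DTW\ge H$.

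\emph{Upper bound.} Work on the collapses and follow the greedy embedding: $S[i]$ for $i\le j$ sits at its greedy partner $p_i$. Traverse the intervening $T$ positions while holding $S[i]$, which costs one per light $T$ position. After $S[j]$, traverse the heavy run $S[j+1]$ while holding a fixed $T$ position; this costs at most $H$. Then advance diagonally to the end, where the two final $\str{b}$ runs meet at an equality cell. Lemma~\ref{gate:heavy-witnesses}(ii) lets us expand every heavy position that has an equality cell at zero cost. This covers $S[1..j]$, the initial $\str{c}$ runs, and the terminal $\str{b}$ runs. The total cost is at most $H$ plus the number of unit cells on the collapsed path.

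\emph{Main obstacle.} The hardest step is making this count at most $r$ rather than the crude $|C_X|+|C_Y|-1\le 2r$. A collapsed path has at most $|C_X|+|C_Y|-1$ cells. The greedy prefix contributes at least $j$ equality cells, and the final $\str{b}/\str{b}$ contributes one more. I would first try to route the suffix diagonally, so that each remaining step advances both coordinates while possible. Then the mismatch cells after $S[j+1]$ number at most $\max\{|S|-j-1,\,|T|-p_j\}$. Before $p_j$, the cells in the prefix charge one per $T$ position. Together these should give at most $\max\{|C_X|,|C_Y|\}\le r$ unit cells in total. If this accounting fails, the fallback is to accept a bound $H+2r$ and adjust constants downstream, although the statement as written requires $H+r$.
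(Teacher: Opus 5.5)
Your lower bound is correct and takes a different route from the paper's. You make only one side heavy, along the whole greedy prefix $S[1..j+1]$, and use that this prefix is not a subsequence of $T$.

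\textbf{The gap is the upper bound, and it is structural, not just unfinished bookkeeping.} First, the count you hope for fails for the path you describe. Take $C_Y=(\str{cb})^5\str{cab}$ and $S=C_X=\str{cac}(\str{ac})^5\str{b}$, so $r=14$, $j=2$ and $p_2=12$. Holding $S[1]$ across $T[2..11]$ pays for five $\str{b}$'s, the heavy $S[3]$ pays $H$, and the diagonal finish pays ten more. The path costs $H+15>H+r$, and your estimate $p_j-j+\max\{|S|-j-1,|T|-p_j\}$ evaluates to $21$.

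Second, no rerouting saves the construction, because the heavy set is too long. A path of cost below $2H$ must give equality cells to all but one of $S[1],\dots,S[j+1]$. This can force it through almost all of $T$ while $S$ is still in its heavy prefix, paying for the light positions of $T$ it passes. The long light tail of $S$ is then left facing only a few positions of $T$, and the two costs add.

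For example, let $C_Y=[(\str{cb})^x\str{ca}]^k\str{b}$ and $S=C_X=(\str{ca})^k\str{c}(\str{abc})^m\str{b}$ with $m<kx$. Then neither collapse is a subsequence of the other, your rule allows this choice, and $j=2k$.
\begin{itemize}
\item A path of cost below $2H$ must have equality witnesses at no fewer than $k-1$ of the $k$ letters $\str{a}$ of $C_Y$.
\item So besides $H$, it pays at least $(k-1)x-1$ for the light $\str{b}$'s of the first $k-1$ blocks, all traversed while $S$ is in its heavy prefix.
\item Then at most $2x+4$ positions of $C_Y$ remain for the $3m$ light tail positions of $S$, and at least $2m-2x-4$ of these get no equality cell.
\end{itemize}
With $k=x=20$ and $m=266$, $r=841$ is admissible, yet $\DTW\ge\min\{2H,H+867\}>H+r$ for every $H>r$.

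\textbf{The missing idea is to stop the heavy prefix at the first disagreement and make it heavy on both sides, as the paper does.} Let $\ell$ be the common-prefix length. Make the first $\min\{\ell+1,|C_X|\}$ left runs and the first $\min\{\ell+1,|C_Y|\}$ right runs heavy, together with both terminal $\str{b}$-runs.
\begin{itemize}
\item If both collapses continue past $\ell$, the witness of left run $\ell+1$ lies strictly after right run $\ell+1$. Symmetrically, the witness of right run $\ell+1$ lies strictly after left run $\ell+1$. These two cells cross, contradicting monotonicity.
\item If one collapse is a proper prefix of the other, the $\ell+1$ heavy runs of the longer one need $\ell+1$ distinct witness runs among only $\ell$.
\end{itemize}
The upper bound is then immediate. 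Match the common prefix at zero cost. Pay $H$ on the diagonal at position $\ell+1$, or, in the proper-prefix case, while holding the shorter string's final $\str{b}$. Then run the light tails together, for at most $r$ further mismatches.

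Your fallback $H+2r$ does hold for your construction, since a collapsed path has at most $|S|+|T|-1$ cells. It would even suffice for Lemma~\ref{pack:normalized}, but it does not prove the lemma as stated.
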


\begin{proof}
Let $\ell$ be the common-prefix length of $C_X,C_Y$. Make the first
$\min\{\ell+1,|C_X|\}$ left runs and the first
$\min\{\ell+1,|C_Y|\}$ right runs heavy, leaving the rest light.
Finally make both terminal $\str{b}$-runs heavy as well.

Suppose there is a path of cost below $H$. By
Lemma~\ref{gate:heavy-witnesses}, the heavy prefix runs have
equality witnesses with strictly increasing opposite run indices.
If both collapses continue past their common prefix, put
$i=\ell+1$. The witness for left run $i$ must have right run index
at least $i$. It cannot have index $i$, since the two labels there
differ, so it lies after right run $i$. The witness for right run
$i$ similarly lies after left run $i$. The first cell is earlier
on the left axis and later on the right axis than the second cell,
contradicting monotonicity.

If one collapse is a proper prefix of the other, all runs of the
shorter collapse and the first $\ell+1$ runs of the longer collapse
are heavy. The latter require $\ell+1$ distinct witness runs on
the shorter axis, which has only $\ell$ runs. This is also
impossible. The distance is at least $H$.

For the upper bound when both collapses continue past the common
prefix, first match that prefix at zero cost. Compare the next
heavy runs diagonally, paying $H$. Consume the remaining light
tails together. If one axis reaches its terminal $\str{b}$-run
first, hold a $\str{b}$ there while finishing the other light tail.
These tails cost at most $r$ further mismatches. Finish by matching
the terminal $\str{b}$-runs at zero cost.

In the proper-prefix case, match the common prefix and then hold
the shorter string's final $\str{b}$. Traverse the next heavy run
of the longer string and its light tail, paying at most $H+r$,
then finish on the terminal $\str{b}$-runs.
\end{proof}

\begin{lemma}\label{pack:normalized}
Set $r_\ast=100r$. There are separate encodings $A(u),B(v)$ with
run bound $r_\ast$ that also satisfy
\[
 \DTW(A(u),B(v))\le H+r_\ast\qquad\text{for every }u,v.
\]
\end{lemma}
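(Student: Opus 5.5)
The plan is to apply the OR gate of Proposition~\ref{gate:or} with $k=2$, taking as children the original encoding pair $(X_F(u),Y_F(v))$ and the capping pair $(x_{\mathrm{cap}},y_{\mathrm{cap}})$ from Lemma~\ref{pack:cap}. We set $A(u)=X_\vee$ and $B(v)=Y_\vee$.

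First we check the hypotheses. By Definition~\ref{def:encoders}, the original encodings have at most $r$ runs, all of length one or $H$. They begin with $\str{c}^H$ and end with $\str{b}^H$. Their left and right collapses $C_X\ne C_Y$ are fixed, independent of the inputs and of $H$. Both collapses begin with $\str{c}$, end with $\str{b}$, and have length at most $r$. Lemma~\ref{pack:cap} therefore supplies a capping pair with exactly these collapses, the same endpoint and run-length properties, and distance between $H$ and $H+r$. Both left children then share the collapse $C_X$ and both right children share $C_Y$, which is the common-collapse hypothesis of the OR construction. The left encoder writes $X_\vee$ from $X_F(u)$ and the fixed $x_{\mathrm{cap}}$ only, and symmetrically on the right, so the encodings remain separate.

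Next we verify the four conditions of Definition~\ref{def:encoders} with run bound $r_\ast=100r$. Take $t=5r+8$. Proposition~\ref{gate:or} bounds the number of runs by $(2k+1)t+2=3(5r+8)+2=15r+26$, which is at most $100r$ because $r\ge3$. It also gives the endpoints and run lengths. By Lemma~\ref{gate:collapse-preservation}, the output collapses are independent of the data and of $H$, and the OR collapses are always distinct.

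For the distance bounds, a true pair has distance at most $r+2t=11r+16\le r_\ast$. A false pair has both children at distance at least $H$, so the soundness part of Proposition~\ref{gate:or} gives $\DTW(X_\vee,Y_\vee)\ge H$. For every $u,v$, selecting the capping child gives $\DTW\le H+r+2t\le H+r_\ast$. Encoding time is linear in the output length, which is $O(Hr)=O(Hr_\ast)$.

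The only point requiring care is confirming that the capping pair meets every hypothesis of the OR gate: it must have exactly the encoders' collapses and the right endpoints. This is exactly what Lemma~\ref{pack:cap} guarantees, so the argument is mainly bookkeeping of constants.
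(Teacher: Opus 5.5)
Your argument is correct and is the paper's own proof: an OR gate (Proposition~\ref{gate:or}) whose two children are the predicate pair and the cap pair from Lemma~\ref{pack:cap}, giving cost bounds $r+2t$ on true pairs and $H+r+2t$ on all pairs. One arithmetic slip: with $k=2$ the right output can have up to $(2k+1)t+2=5t+2=25r+42$ runs, and your $15r+26=3t+2$ is the left bound; since $25r+42\le100r$ still holds for $r\ge3$, the conclusion is unaffected.
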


\begin{proof}
Apply Lemma~\ref{pack:cap} to the fixed side collapses. Use the
predicate pair and the resulting cap pair as the two children of
Proposition~\ref{gate:or}. Their side collapses agree.
With $t=5r+8$, its upper bound is
\[
 \min\{\DTW(X_F(u),Y_F(v)),
        \DTW(x_{\mathrm{cap}},y_{\mathrm{cap}})\}+2t.
\]
Selecting the predicate child costs at most $r+2t\le r_\ast$ on a
true pair; selecting the cap child always costs at most $H+r_\ast$.
On a false pair both children have distance at least $H$, so OR
soundness applies. The larger run bound is $5t+2\le r_\ast$.
The remaining properties in Definition~\ref{def:encoders} follow from
the OR construction and Lemma~\ref{gate:collapse-preservation}.
\end{proof}

\subsubsection{Extracting a guarded block}

Define the heavy outer tags and their guard by
\[
 P_{\mathrm{out}}=(\str{a}^H\str{b}^H)^{r_\ast+1},\qquad
 Q_{\mathrm{out}}=(\str{a}^H\str{c}^H)^{r_\ast+1},\qquad
 G^{\mathrm{heavy}}=Q_{\mathrm{out}}P_{\mathrm{out}}.
\]
We place $P_{\mathrm{out}}$ before each block and $Q_{\mathrm{out}}$
after it. The preceding tag contains no $\str{c}$, and the following
tag contains no $\str{b}$, as Lemma~\ref{gate:endpoint-isolation} requires.
A genuine outer tag is one of these displayed guard parts, not an
occurrence inside a block.

\begin{lemma}\label{pack:extraction}
Let $A$ and $B_1,\ldots,B_k$ have at most $r_\ast$ runs each,
begin with $\str{c}^H$, and end with $\str{b}^H$. Let the right string be
\[
 G^{\mathrm{heavy}}B_1G^{\mathrm{heavy}}\cdots
 B_kG^{\mathrm{heavy}}.
\]
Consider a path segment that
visits every left position of $P_{\mathrm{out}}AQ_{\mathrm{out}}$
and no left position outside it.
If the segment costs less than $H$, then $\DTW(A,B_j)<H$ for some $j$.
\end{lemma}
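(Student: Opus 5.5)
The plan mirrors the proof of Lemma~\ref{gate:slot-spans}, with the roles of the axes exchanged: here the tags $P_{\mathrm{out}}$ and $Q_{\mathrm{out}}$ sit on the \emph{left}, and the right string supplies genuine outer tags inside each $G^{\mathrm{heavy}}$. Since the segment visits every left position of $P_{\mathrm{out}}AQ_{\mathrm{out}}$ and costs less than $H$, Lemma~\ref{gate:heavy-witnesses} gives every heavy run of $P_{\mathrm{out}}$ and of $Q_{\mathrm{out}}$ an equality cell. Consecutive heavy runs alternate labels, so the witnesses have strictly increasing right run indices. In particular, the $r_\ast+1$ heavy $\str{b}$-runs of $P_{\mathrm{out}}$ need $r_\ast+1$ distinct right $\str{b}$-runs, and the $r_\ast+1$ heavy $\str{c}$-runs of $Q_{\mathrm{out}}$ need $r_\ast+1$ distinct right $\str{c}$-runs.

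\textbf{First step: anchors in genuine tags.} I first show that left $P_{\mathrm{out}}$ has a $\str{b}/\str{b}$ equality cell in some genuine right $P_{\mathrm{out}}$, and that left $Q_{\mathrm{out}}$ has a $\str{c}/\str{c}$ equality cell in some genuine right $Q_{\mathrm{out}}$. Suppose the $\str{b}$-witnesses of left $P_{\mathrm{out}}$ avoid genuine right $P_{\mathrm{out}}$ tags. The right $Q_{\mathrm{out}}$ tags contain no $\str{b}$, so every witness lies inside some block $B_j$. Witnesses in two different blocks would force the path, while its left coordinate stays in $P_{\mathrm{out}}$, to traverse the entire intervening $G^{\mathrm{heavy}}$. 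In particular it traverses a heavy $\str{c}$-run of $Q_{\mathrm{out}}$, and every such cell mismatches because $P_{\mathrm{out}}$ has no $\str{c}$. That costs at least $H$. So all witnesses lie in a single $B_j$, which has at most $r_\ast<r_\ast+1$ runs, a contradiction.

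The symmetric argument handles $Q_{\mathrm{out}}$. The right $P_{\mathrm{out}}$ tags contain no $\str{c}$, so its witnesses would lie in blocks. Crossing between two blocks traverses a heavy $\str{b}$-run of a right $P_{\mathrm{out}}$, which mismatches everything in $Q_{\mathrm{out}}$. Unlike Lemma~\ref{gate:slot-spans}, there are no outer terminal runs or light slots here: the right string begins and ends with guards and consists only of heavy guards and blocks. I expect this to make the case analysis simpler, but I must confirm that both ends of the right string are covered.

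\textbf{Second step: one block between the anchors.} Let the leading anchor lie in the right $P_{\mathrm{out}}$ preceding $B_j$, and the trailing anchor in the right $Q_{\mathrm{out}}$ following $B_{j'}$. Monotonicity gives $j\le j'$, and I expect to show $j=j'$. If $j<j'$, then the left coordinate stays in $A$ or in the tags, and the path must cross the whole guard $G^{\mathrm{heavy}}$ between $B_j$ and $B_{j+1}$ between the two anchors. The path enters $A$ only after the leading anchor and leaves it before the trailing anchor. I will count the witnesses that the heavy runs of that guard need, using the same witness-counting argument with the axes swapped. Those runs need many alternating witnesses on the left, but $A$ supplies at most $r_\ast$ runs. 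The left tag fragments can supply only $\str{b}$ (in $P_{\mathrm{out}}$) or only $\str{c}$ (in $Q_{\mathrm{out}}$), and reaching them would cross the anchors in the wrong order. This is the delicate step, and it is the main obstacle. I will handle it by choosing the anchors as the last $\str{b}$-witness in $P_{\mathrm{out}}$ and the first $\str{c}$-witness in $Q_{\mathrm{out}}$. Then any heavy guard runs lying between the anchors must be fully traversed while the left coordinate is in $A$, which forces $r_\ast+1$ distinct $\str{c}$-runs of $A$.

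\textbf{Final step.} Once $j=j'$, the subpath between the anchors visits every position of $A$ and of $B_j$. The prefixes lie in $P_{\mathrm{out}}$ and contain no $\str{c}$; the suffixes lie in $Q_{\mathrm{out}}$ and contain no $\str{b}$. Applying Lemma~\ref{gate:endpoint-isolation} then yields $\DTW(A,B_j)<H$.
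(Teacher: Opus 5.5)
Your proof is correct and follows the paper's structure: you find anchors in genuine outer tags by witness counting, show that exactly one block lies between the anchors, and then apply Lemma~\ref{gate:endpoint-isolation}. The only difference is how you exclude an intervening guard. The paper keeps arbitrary anchors and gets a monotonicity violation from a $\str{c}$-witness in the left $Q_{\mathrm{out}}$ and a $\str{b}$-witness in the left $P_{\mathrm{out}}$ for that guard's two halves. You instead take the trailing anchor to be the first $\str{c}/\str{c}$ cell of the left $Q_{\mathrm{out}}$ lying in a genuine right $Q_{\mathrm{out}}$. That forces all $r_\ast+1$ $\str{c}$-witnesses of the guard's $Q_{\mathrm{out}}$ into $A$ directly, which also works and makes your extremal choice of the leading anchor unnecessary.
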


\begin{proof}
We first find equality cells joining the two left outer tags to
right outer tags. We then show that these right tags bound a single
block, which lets us recover a full path on $A$ and that block.

By Lemma~\ref{gate:heavy-witnesses}, the leading left
$P_{\mathrm{out}}$ requires witnesses in $r_\ast+1$ distinct right
$\str{b}$-runs. If it has no $\str{b}/\str{b}$ equality cell in a
genuine right $P_{\mathrm{out}}$, all these witnesses lie in right
blocks. They cannot lie in two blocks: between them the path would
traverse a heavy $\str{c}$-run of $Q_{\mathrm{out}}$ while its left
coordinate stays in $P_{\mathrm{out}}$. Every such cell mismatches,
giving cost at least $H$. One block has at most $r_\ast$ runs,
so it cannot supply all witnesses. Choose a resulting
$\str{b}/\str{b}$ equality cell in a genuine right
$P_{\mathrm{out}}$ as the leading anchor. The symmetric argument
gives a $\str{c}/\str{c}$ equality cell between the trailing left
$Q_{\mathrm{out}}$ and a genuine right $Q_{\mathrm{out}}$.

Index right guards from $0$ to $k$, and let $j,j'$ be the guard
indices of the leading and trailing anchors. The leading anchor
comes first on the left axis, so it also comes first on the right
axis. Within one right guard, however, $Q_{\mathrm{out}}$ precedes
$P_{\mathrm{out}}$. The two anchors cannot lie in the
same guard, and $j'\ge j+1$.

Suppose $j'\ge j+2$. A complete right guard lies between the
anchors and is traversed while the left coordinate remains in the
strip. Its $Q_{\mathrm{out}}$ part has $r_\ast+1$ heavy
$\str{c}$-runs, requiring that many distinct left witness runs.
The block $A$ has at most $r_\ast$ runs, and the leading left tag
contains no $\str{c}$. Hence some $\str{c}$-witness lies in the
trailing left $Q_{\mathrm{out}}$. Likewise, some $\str{b}$-witness
for the right guard's $P_{\mathrm{out}}$ lies in the leading left
$P_{\mathrm{out}}$. On the right, the $\str{c}$-cell comes before
the $\str{b}$-cell. On the left, the $\str{b}$-cell comes before
the $\str{c}$-cell. This contradicts monotonicity, so $j'=j+1$.

The subpath between the anchors now traverses the complete blocks
$A$ and $B_{j+1}$. On both axes, the leading fragment contains
no $\str{c}$ and the trailing fragment contains no $\str{b}$.
Lemma~\ref{gate:endpoint-isolation} gives $\DTW(A,B_{j+1})<H$.
\end{proof}

\subsubsection{Combining dense lists}

The doubled right list and the left padding below follow the
cyclic-alignment construction of Abboud and
Backurs~\cite{AbboudBackurs2017}. We use capped DTW encodings and
heavy guards so that false local comparisons incur additive costs
in disjoint left-coordinate strips.

Take capped encodings $A_1,\ldots,A_p$ and $B_1,\ldots,B_p$.
Let $U=\shrink(B_j)$ be their common right collapse and put
$G^{\mathrm{light}}=\shrink(G^{\mathrm{heavy}})$.
Define left blocks and right blocks by
\[
 S_i=\begin{cases}
 U,&1\le i\le p,\\
 A_{i-p},&p<i\le2p,\\
 U,&2p<i\le3p,
 \end{cases}
 \qquad T_j=B_{1+((j-1)\bmod p)}\quad(1\le j\le2p).
\]
For $0\le i\le3p$, set $G_i=G^{\mathrm{heavy}}$ when
$p\le i\le2p$ and $G_i=G^{\mathrm{light}}$ otherwise.
Set $G'_j=G^{\mathrm{heavy}}$ for $0\le j\le2p$. Define
\[
 X=G_0S_1G_1\cdots S_{3p}G_{3p},
 \qquad
 Y=G'_0T_1G'_1\cdots T_{2p}G'_{2p}.
\]
\begin{proposition}\label{pack:packing}
The strings $X,Y$ have total expanded length and construction time
$O(pHr_\ast)$, and satisfy:
\begin{enumerate}
\item If every local pair has distance at least $H$, then
      $\DTW(X,Y)\ge pH$.
\item If at most $\theta p^2$ local pairs have distance greater than
      $r_\ast$, then
      \[
       \DTW(X,Y)\le p\bigl(6r_\ast+4+\theta(H+r_\ast)\bigr).
      \]
\end{enumerate}
The strings can be padded to equal length without changing their
distance and with at most a factor-two increase in total length.
\end{proposition}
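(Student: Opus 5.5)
The proof has four parts: the size bound, the \textsc{No} bound via disjoint strips, the \textsc{Yes} bound via an averaged cyclic shift, and the equal-length padding.

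\textbf{Size.} $X$ has $3p$ blocks and $3p+1$ guards. $Y$ has $2p$ blocks and $2p+1$ guards. Each block or guard has $O(r_\ast)$ runs, each of length at most $H$. So both strings have expanded length $O(pHr_\ast)$ and can be written in linear time from the encoders.

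\textbf{The \textsc{No} bound.} Fix any warping path. For each active index $p<i\le2p$, consider the strip of left positions occupied by $P_{\mathrm{out}}S_iQ_{\mathrm{out}}$. Here $P_{\mathrm{out}}$ is the trailing part of the heavy guard $G_{i-1}$, and $Q_{\mathrm{out}}$ is the leading part of the heavy guard $G_i$.
\begin{itemize}
\item These $p$ strips occupy pairwise disjoint left-coordinate ranges: consecutive strips are separated because each heavy guard splits as $Q_{\mathrm{out}}P_{\mathrm{out}}$.
\item The cells of the path whose left coordinate lies in a strip form a contiguous segment that visits every left position of the strip and none outside it.
\item Every right block is surrounded by heavy guards, so Lemma~\ref{pack:extraction} applies to each strip segment with $A=A_{i-p}$ and $B_1,\ldots,B_k$ the right blocks $T_1,\ldots,T_{2p}$.
\end{itemize}
If every local pair has distance at least $H$, each segment costs at least $H$. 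Disjointness of the strips lets these costs add, giving $\DTW(X,Y)\ge pH$.

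\textbf{The \textsc{Yes} bound.} Average over the $p$ cyclic shifts $s\in\{0,\ldots,p-1\}$, where shift $s$ pairs $A_i$ with $B_{1+((i-1+s)\bmod p)}$. The shifts partition all $p^2$ pairs, so some shift has at most $\theta p$ pairs of distance exceeding $r_\ast$. We align $A_i$ with the block $T_{i+s}$. The left prefix of $p$ dummy blocks and the right prefix $T_1,\ldots,T_s$ are handled as follows.
\begin{itemize}
\item Match $T_1,\ldots,T_s$ against $s$ left dummies $U$ at zero cost, since $U=\shrink(T_j)$.
\item The remaining $p-s$ leading left dummies and their light guards are traversed while holding a right position.
\item The tail is handled symmetrically: $T_{p+s+1},\ldots,T_{2p}$ are matched to $p-s$ trailing left dummies, and the remaining $s$ trailing dummies are absorbed.
\end{itemize}
Guards are matched to guards: heavy to heavy exactly, and light left guards to heavy right guards at zero cost, since their collapses agree. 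Zero-cost matchings of equal collapses come from Lemma~\ref{gate:path-facts}.

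To absorb an unmatched light dummy-plus-guard group, hold the right coordinate at a position inside an adjacent right guard, or at a right outer run. This costs at most the group's length, $O(r_\ast)$. Over $p$ absorbed groups this totals $O(pr_\ast)$.

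Concatenating the block paths via Lemma~\ref{gate:path-facts}(4) gives local costs of at most $r_\ast$ for good pairs and at most $H+r_\ast$ for the at most $\theta p$ bad pairs, by Lemma~\ref{pack:normalized}. The total is therefore bounded by $p(6r_\ast+4+\theta(H+r_\ast))$.

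\textbf{Main obstacle.} The delicate step is the bookkeeping of absorbing unmatched light groups. Each absorbed group has length $|U|+|G^{\mathrm{light}}|\le r_\ast+4(r_\ast+1)+O(1)$. The hold point should be chosen where consecutive $\str{a}/\str{b}$ symbols partially match, so that the cost is at most about $5r_\ast+4$ per group. Together with at most $r_\ast$ per good block, this gives the constant $6r_\ast+4$.

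\textbf{Padding.} Both strings end in $\str{b}^H$, coming from the final $P_{\mathrm{out}}$ of the last guard. Extend the final run of the shorter string; by Lemma~\ref{gate:path-facts}(5) this preserves the distance exactly. Since $|X|$ and $|Y|$ are within a constant factor of each other, the total length at most doubles.
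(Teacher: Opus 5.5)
Your proposal is correct and follows the paper's proof essentially step for step. The lower bound uses the disjoint strips $P_{\mathrm{out}}A_iQ_{\mathrm{out}}$ fed into Lemma~\ref{pack:extraction}, and the upper bound uses an averaged cyclic shift, absorbing the unmatched dummies by holding the first $\str{a}$ and last $\str{b}$ of $Y$. Your ``main obstacle'' worry is unfounded and the $+O(1)$ is spurious, so no special hold point is needed. Each absorbed group has at most $|U|+|G^{\mathrm{light}}|\le r_\ast+4(r_\ast+1)=5r_\ast+4$ held cells of cost at most one, and the joins add no cells. Two further slips are harmless. First, $X$ ends in a light $\str{b}$ (its last guard is light), not $\str{b}^H$; only the common final symbol matters for padding. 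Second, the factor-two bound holds because $2\max\{|X|,|Y|\}\le2(|X|+|Y|)$, not because the lengths are comparable.
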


\begin{proof}
For soundness, the strip for $A_i$ consists of the $P_{\mathrm{out}}$
half of $G_{p+i-1}$, the block $A_i$, and the $Q_{\mathrm{out}}$ half
of $G_{p+i}$. Charge $A_i$ the total cost of cells whose left
coordinate lies in this strip. These cells form a contiguous segment traversing the whole
strip. Adjacent strips use different halves of their shared guard,
so their sets of cells are disjoint
(Figure~\ref{fig:packing-strips}). A charge below $H$ would give
$\DTW(A_i,T_j)<H$ for some $j$ by Lemma~\ref{pack:extraction},
contrary to the premise. Summing the $p$ charges proves the lower
bound.

For completeness, each local pair occurs in exactly one of the
$p$ cyclic matchings. At most $\theta p^2$ pairs have distance
greater than $r_\ast$, so the average number of such pairs in a
matching is at most $\theta p$. Choose a shift with at most
$\theta p$ such comparisons and represent it by the window
$T_{t+1},\ldots,T_{t+p}$, where $0\le t<p$.
The dummy blocks before and after the active list accommodate
the blocks outside this window. The alignment has five parts.
\[
\begin{array}{c|c|c}
\text{part of }X&\text{comparison on }Y&\text{number of blocks}\\
\text{unmatched prefix}&\text{hold the first }\str{a}&p-t\\
\text{matched prefix}&\text{first }t\text{ right blocks}&t\\
\text{active blocks}&\text{selected window}&p\\
\text{matched suffix}&\text{last }p-t\text{ right blocks}&p-t\\
\text{unmatched suffix}&\text{hold the last }\str{b}&t
\end{array}
\]
Traverse the unmatched prefix dummies and their preceding light
guards while holding the first $\str{a}$ of $Y$. At the next left
guard, match $G_{p-t}$ to $G'_0$. Match the remaining $t$ prefix
dummies to $T_1,\ldots,T_t$, including the intervening guards.
These comparisons cost zero because the corresponding collapses
agree.

Next compare each $A_i$ with $T_{t+i}$, matching the heavy guards
between them at zero cost. Match the first $p-t$ suffix dummies
to the remaining right blocks and guards. Finally, traverse the
last $t$ suffix dummies and their following light guards while
holding the last $\str{b}$ of $Y$. The two traversals with a fixed
right coordinate join the complete block paths at their endpoints.

Only the unmatched prefix and suffix incur cost for dummy blocks
and guards. They contain $(p-t)+t=p$ dummies and the same number
of light guards. Their cost is at most
$p(|U|+4(r_\ast+1))$.
The active comparisons cost at most
$pr_\ast+\theta p(H+r_\ast)$ by the good-pair and cap bounds.
Using $|U|\le r_\ast$ gives the stated upper bound.

There are $O(p)$ blocks and guards, each of length $O(Hr_\ast)$,
which proves the length and construction bounds. Both strings end
in $\str{b}$, so extending the shorter final run preserves their distance
by Lemma~\ref{gate:path-facts} and increases total length by at most
a factor two.
\end{proof}

\begin{figure}[htbp]
\centering
\begin{tikzpicture}[x=1.5cm,y=1cm,font=\small,
 part/.style={draw,minimum width=1.7cm,minimum height=.6cm}]
\node[part,fill=blue!8] at (0,0) {$P_{\mathrm{out}}$};
\node[part,fill=blue!8] at (1.4,0) {$A_i$};
\node[part,fill=blue!8] at (2.8,0) {$Q_{\mathrm{out}}$};
\node[part,fill=black!6] at (4.2,0) {$P_{\mathrm{out}}$};
\node[part,fill=black!6] at (5.6,0) {$A_{i+1}$};
\node[part,fill=black!6] at (7,0) {$Q_{\mathrm{out}}$};
\draw[dashed] (2.15,-.4) rectangle (4.85,.4);
\node at (3.5,.75) {shared guard $G_{p+i}$};
\draw[thick,blue!55!black] (-.57,-.55)--(-.57,-.75)--(3.37,-.75)--(3.37,-.55);
\draw[thick] (3.63,-.55)--(3.63,-.75)--(7.57,-.75)--(7.57,-.55);
\node at (1.4,-1.08) {strip for $A_i$};
\node at (5.6,-1.08) {strip for $A_{i+1}$};
\end{tikzpicture}
\caption{Adjacent strips use different halves of their shared guard,
so no path cell is charged twice. Their right-coordinate ranges may
overlap. Each displayed tag is heavy.}
\label{fig:packing-strips}
\end{figure}

\begin{lemma}\label{lem:constant-gap}
Fix separate encodings of run bound $r$ and an integer $q\ge1$.
Given two nonempty lists of $p$ inputs, one can
construct equal-length strings
of total length $O(qpr^2)$ and a threshold $B_0$ in $O(qpr^2)$ time.
Every $q$-approximate estimate is at least $B_0$ if all input pairs
are false, and below $B_0/2$ if at most $p^2/(16q)$ pairs are false.
\end{lemma}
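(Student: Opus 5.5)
The plan is to apply Lemma~\ref{pack:normalized} and then Proposition~\ref{pack:packing} with $H$ chosen proportional to $qr$, and to read off the threshold from the two bounds.

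\textbf{Choice of parameters.} First I replace the given encodings by the capped ones of Lemma~\ref{pack:normalized}. These have run bound $r_\ast=100r$ and local distance at most $H+r_\ast$ on every pair. They still satisfy the true-pair bound $\le r\le r_\ast$ and the false-pair bound $\ge H$. I would set $H=c\,q\,r_\ast$ for an absolute constant $c$, say $c=64$. I then build $X,Y$ as in Proposition~\ref{pack:packing} and pad them to equal length. The total length and construction time are $O(pHr_\ast)=O(qpr_\ast^2)=O(qpr^2)$. Each encoder runs in $O(Hr_\ast)$ time per input, so the $2p$ encodings also fit in this budget. I take the threshold to be $B_0=pH$.

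\textbf{All pairs false.} Then $\DTW(X,Y)\ge pH=B_0$ by Proposition~\ref{pack:packing}(1). Any $q$-approximate estimate $\widehat D$ satisfies $\widehat D\ge \DTW$, so it is at least $B_0$.

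\textbf{Few false pairs.} Suppose at most $p^2/(16q)$ pairs are false. True pairs have capped distance at most $r_\ast$, so a pair with distance exceeding $r_\ast$ must be false. Proposition~\ref{pack:packing}(2) therefore applies with $\theta=1/(16q)$, giving
\[
\DTW(X,Y)\le p\bigl(6r_\ast+4+(H+r_\ast)/(16q)\bigr)\le p\bigl(7r_\ast+H/(16q)+r_\ast\bigr),
\]
using $4\le r_\ast$ and $q\ge1$. Multiplying by $q$ bounds the estimate by
\[
\widehat D\le qp\cdot 8r_\ast+pH/16 .
\]
With $H=64qr_\ast$ this is $pH/8+pH/16<pH/2=B_0/2$.

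\textbf{Expected obstacle.} The only delicate step is the arithmetic: checking that $H$ linear in $qr_\ast$ absorbs the additive $O(r_\ast)$ terms after multiplying by $q$. It must do so while the $\theta H$ term, which is already divided by $16q$ before the factor $q$, stays below a constant fraction of $pH$. A secondary check is that the hypothesis of Proposition~\ref{pack:packing}(2) counts pairs with distance greater than $r_\ast$, and that this count is dominated by the number of false pairs; the run bound of the capped encodings handles this. Equal-length padding does not change the distance, by Lemma~\ref{gate:path-facts}.
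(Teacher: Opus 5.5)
Your proposal is correct and follows the paper's proof: you use the capped encodings with $r_\ast=100r$ and set $B_0=pH$, then take the all-false bound from Proposition~\ref{pack:packing}(1) and the dense bound from part~(2) with $\theta=1/(16q)$; your $H=64qr_\ast$ replaces the paper's $H=128q(r_\ast+1)$ and gives $q\DTW(X,Y)\le 3pH/16<B_0/2$. One small wording fix: the capped true pairs have distance at most $r+2t\le r_\ast$ (from the OR upper bound in Lemma~\ref{pack:normalized}), not at most $r$, though only the bound $r_\ast$ is used.
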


\begin{proof}
Use the capped encodings with $r_\ast=100r$, choose
$H=128q(r_\ast+1)$, and set $B_0=pH$.
Proposition~\ref{pack:packing} gives the all-false lower bound. In
the dense case, with $\theta\le1/(16q)$,
\[
 \frac{q\DTW(X,Y)}{pH}
 \le\frac{q(6r_\ast+4)}H
       +q\theta\left(1+\frac{r_\ast}H\right)
 <\frac6{128}+\frac1{16}\left(1+\frac1{128q}\right)<\frac12.
\]
The construction, including padding, takes
$O(pHr_\ast)=O(qpr^2)$ time.
\end{proof}

\subsection{Density amplification}\label{sec:density-amplification}

Lemma~\ref{lem:constant-gap} requires a false-pair fraction at most
$1/(16q)$. We obtain this from constant true-pair density using a
fixed family of index translations, chosen independently of the
predicate values.

We use the standard small-bias hitting bound stated by Ben-Sasson
and Viola~\cite{BenSassonViola2014}, together with the explicit
small-bias construction of Alon, Goldreich, H{\aa}stad, and
Peralta~\cite{AlonEtAl1992}.

\begin{lemma}\label{lem:small-bias}
For integers $d\ge1$ and $b\ge2$, put $\varepsilon=1/b$.
One can deterministically construct a nonempty indexed family $S$
in $\bits^d$ of size $O(d^2b^2)$ in
$O(d^3b^2\polylog(db))$ time, such that every nonempty $V\subseteq\bits^d$ of
density $\mu$ satisfies
\[
 \Pr_{z\in\bits^d}\bigl[(z\oplus S)\cap V=\varnothing\bigr]
 \le\frac{\varepsilon^2}{\mu}.
\]
Here $z$ is uniform, and $\oplus$ is coordinatewise addition modulo two.
\end{lemma}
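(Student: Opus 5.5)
The plan has three steps: build $S$ from an explicit $\varepsilon$-biased family, prove the hitting bound by a second-moment argument on the hit count, and then account for the size and construction time.

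\emph{Construction.} Take $S$ to be an $\varepsilon$-biased multiset of vectors in $\bits^d$ from the explicit construction of Alon, Goldreich, H{\aa}stad, and Peralta~\cite{AlonEtAl1992}. By definition, for every nonzero character $\chi_\alpha(x)=(-1)^{\langle\alpha,x\rangle}$, its average over $S$ has absolute value at most $\varepsilon$. For instance, the powering construction gives such a family of size $O(d^2/\varepsilon^2)=O(d^2b^2)$. Each element is indexed by a pair of field elements of $\mathbb{F}_{2^m}$ with $2^m=\Theta(d/\varepsilon)$, and its $d$ bits are $\langle x^i,y\rangle$ for $i<d$. Computing one element therefore takes $d$ field multiplications and inner products, each $\polylog(db)$ word operations on $O(\log(db))$-bit words. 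This gives $O(d^3b^2\polylog(db))$ total time, which is the claimed bound.

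\emph{Hitting bound.} Let $f=\ind_V$ with mean $\mu$. For uniform $z$, set
\[
 W(z)=\frac1{|S|}\sum_{s\in S}f(z\oplus s),
\]
so that $\mathbb{E}[W]=\mu$. The event $(z\oplus S)\cap V=\varnothing$ is exactly $W(z)=0$, so Chebyshev gives
\[
 \Pr[W=0]\le\Pr\bigl[|W-\mu|\ge\mu\bigr]\le\frac{\operatorname{Var}(W)}{\mu^2}.
\]
To bound the variance, expand in Fourier: $W(z)=\sum_\alpha\hat f(\alpha)\chi_\alpha(z)\,\mathbb{E}_{s\in S}\chi_\alpha(s)$. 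By Parseval,
\[
 \operatorname{Var}(W)=\sum_{\alpha\ne0}\hat f(\alpha)^2\bigl(\mathbb{E}_{s}\chi_\alpha(s)\bigr)^2\le\varepsilon^2\sum_{\alpha\ne0}\hat f(\alpha)^2\le\varepsilon^2\mu .
\]
The last step uses $\sum_\alpha\hat f(\alpha)^2=\mathbb{E}[f^2]=\mu$. Combining the two displays yields $\Pr[W=0]\le\varepsilon^2/\mu$.

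\emph{Main obstacle.} The only step needing care is the explicit size and time accounting, namely checking that the AGHP family attains bias $\varepsilon$ with $O(d^2b^2)$ elements and that generating it fits the stated budget; the probabilistic part is routine. If the chosen AGHP variant has size $O(d^2/\varepsilon^2)$ only up to rounding $m$ to a power of two, that rounding changes the size by just a constant factor. Nonemptiness of $S$ is immediate from the construction.
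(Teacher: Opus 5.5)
Your proposal is correct and matches the paper's proof essentially step for step. Both use the same AGHP powering family with field size $\Theta(db)$ and the same averaged indicator $W$. Both bound $\operatorname{Var}(W)\le\varepsilon^2\mu(1-\mu)$ by Fourier analysis and Parseval, and both conclude with the same second-moment step $\Pr[W=0]\,\mu^2\le\operatorname{Var}(W)$. The only detail the paper mentions that you leave implicit is building $\mathbb{F}_{2^m}$ itself, i.e.\ finding an irreducible polynomial of degree $m=O(\log(db))$; brute force over all $2^m$ candidates does this in $O(db\,\polylog(db))$ time, well within the budget.
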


\begin{proof}
Use the powering construction of Alon et al.~\cite{AlonEtAl1992}
to obtain an $\varepsilon$-biased family:
for every nonzero $a\in\bits^d$,
$|\mathbb E_{s\in S}(-1)^{a\cdot s}|\le\varepsilon$, where the
expectation counts indices with multiplicity. The construction indexes
the family by two elements of a field of size $O(db)$, giving
$|S|=O(d^2b^2)$. Generating all $d$ coordinates for each index,
including the field construction, takes $O(d^3b^2\polylog(db))$ time.
Let $f(z)=\ind[z\in V]$ and $g(z)=\mathbb E_{s\in S}f(z\oplus s)$.
The quantity $g(z)$ is the fraction of the translated family that
lies in $V$. The translation misses $V$ exactly when $g(z)=0$.
Also, $\mathbb E_z g(z)=\mu$.

Write $\widehat f(a)=\mathbb E_z f(z)(-1)^{a\cdot z}$ for the
Fourier coefficients of $f$, and define $\widehat g(a)$ in the
same way. Averaging over $S$ gives, for every nonzero $a$,
\[
 \widehat g(a)=\widehat f(a)\mathbb E_{s\in S}(-1)^{a\cdot s}.
\]
Parseval's identity and the bias bound give
\[
 \mathbb E_z(g(z)-\mu)^2
 \le\varepsilon^2\mathbb E_z(f(z)-\mu)^2
 =\varepsilon^2\mu(1-\mu)\le\varepsilon^2\mu.
\]
When the translation misses $V$, its squared deviation from the
mean is $\mu^2$. Hence
\[
 \Pr_z[g(z)=0]\mu^2\le\mathbb E_z(g(z)-\mu)^2.
\]
Combining this with the variance bound and dividing by $\mu^2$
proves the claim.
\end{proof}

\begin{lemma}\label{lem:amplified-gap}
Fix separate encodings of run bound $r$ and an integer $q\ge1$.
Given two nonempty lists of $p$ inputs, one can deterministically
construct equal-length strings over $\Sigma$ and a
positive integer threshold $B_0$ in time
\[
 \widetilde O(pr^2q^3).
\]
The total expanded length satisfies the same bound. Every
$q$-approximate estimate is at least $B_0$ if all input pairs are
false, and below $B_0/2$ if at least half are true.
\end{lemma}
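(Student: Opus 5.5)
Pad each list to length $2^d$ by repeating one entry, with $d=\lceil\log p\rceil$, and index entries by $i,j\in\bits^d$. If at least half of the original $p^2$ pairs are true, then at least $p^2/2$ of the $4^d\le 4p^2$ padded pairs are true, since each original pair appears at least once. So the padded density of true pairs is $\mu\ge1/8$. If no original pair is true, no padded pair is true.

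Apply Lemma~\ref{lem:small-bias} in dimension $2d$ to the set $V\subseteq\bits^{2d}$ of true index pairs $(i,j)$. Take $b=\lceil c\sqrt q\rceil$ for a suitable constant $c$, so that the miss probability satisfies $\varepsilon^2/\mu\le 8/b^2\le1/(16q)$. This gives a family $S$ of pairs $(s,s')$ with $|S|=O(d^2q)=O(q\log^2p)$, constructed in $\widetilde O(q)$ time. Define the amplified predicate
\[
F'(i,j)=\bigvee_{(s,s')\in S}F(u_{i\oplus s},v_{j\oplus s'}).
\]
A pair $(i,j)$ is false for $F'$ exactly when the translate $(i,j)\oplus S$ misses $V$. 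Hence at most a $1/(16q)$ fraction of amplified pairs are false in the dense case, and all amplified pairs are false when all original pairs are false.

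The amplified predicate has separate encodings. Build them with Proposition~\ref{gate:or} over $|S|$ children, where child $(s,s')$ is the pair $X_F(u_{i\oplus s}),Y_F(v_{j\oplus s'})$.

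\begin{itemize}
\item The OR hypotheses hold, since all children share the fixed side collapses of Definition~\ref{def:encoders}.
\item By Lemma~\ref{gate:collapse-preservation}, the output collapses are fixed, distinct and independent of $H$.
\item The endpoints and run lengths are preserved.
\item A true OR costs at most $r+2(5r+8)$, which is at most the new run bound $r'=(2|S|+1)(5r+8)+2=O(qr\log^2p)$.
\item A false OR has distance at least $H$.
\item Each side reads only its own list, and the output is written in $O(Hr')$ time.
\end{itemize}

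This checks every clause of Definition~\ref{def:encoders}.

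Feed the $2^d\le2p$ amplified inputs on each side to Lemma~\ref{lem:constant-gap} with run bound $r'$. The dense hypothesis requires at most $(2^d)^2/(16q)$ false pairs, which the amplification supplies. The output length and construction time are $O(q\cdot p\cdot (r')^2)=\widetilde O(pr^2q^3)$. The threshold $B_0$ and its two guarantees carry over directly.

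The main obstacle is the density bookkeeping. The small-bias bound must be applied with the padded density, not the original one, and the constant in $b$ must be chosen so that $8/b^2\le1/(16q)$ while keeping $|S|$ linear in $q$. I would also check two smaller points. First, the index set of $S$ may have repetitions, which Proposition~\ref{gate:or} tolerates. Second, the $\polylog$ factors from $d$ are absorbed into the $\widetilde O$.
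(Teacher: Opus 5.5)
Your proposal is correct and follows the paper's proof essentially step for step: pad to a power of two so the true-pair density is at least $1/8$, apply Lemma~\ref{lem:small-bias} in dimension $2d$ with bias of order $q^{-1/2}$, encode the amplified OR through Proposition~\ref{gate:or} with run bound $r'=O(qr\log^2p)$, and finish with Lemma~\ref{lem:constant-gap} to get $O(p(r')^2q)=\widetilde O(pr^2q^3)$. One cosmetic point: the paper pads to $2^t$ with $t=\max\{1,\lceil\log p\rceil\}$, which avoids invoking Lemma~\ref{lem:small-bias} in dimension zero when $p=1$, so your $d=\lceil\log p\rceil$ needs the same adjustment or a separate trivial treatment of $p=1$.
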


\begin{proof}
Put $t=\max\{1,\lceil\log p\rceil\}$ and $p'=2^t$.
Pad each list to length $p'$ by repeating one of its entries, and index
the padded lists $(u_i),(v_j)$ by $\bits^t$. The all-false case is
preserved. Since $p\le p'\le2p$, at least half true pairs before
padding give a set
\[
 V=\{(i,j):F(u_i,v_j)=1\}\subseteq\bits^{2t}
\]
of density at least $1/8$ afterwards.

We need the amplified false-pair fraction to be at most $1/(16q)$.
At density at least $1/8$, Lemma~\ref{lem:small-bias} bounds this
fraction by $8\varepsilon^2$. A bias of order $q^{-1/2}$
suffices. Apply the lemma with $d=2t$ and
$\varepsilon=1/(16\lceil\sqrt q\rceil)$. The resulting family has
$k=O(qt^2)$ indices. Write its members as
$(s_\ell,s'_\ell)\in\bits^t\times\bits^t$, for $1\le\ell\le k$,
and define
\[
 F'(i,j)=\bigvee_{\ell=1}^k
               F(u_{i\oplus s_\ell},v_{j\oplus s'_\ell}).
\]
All amplified pairs are false in the all-false case. In the dense
case, $F'(i,j)=0$ exactly when $((i,j)\oplus S)\cap V$ is empty,
so the false fraction is at most
$\varepsilon^2/(1/8)\le1/(32q)<1/(16q)$.

To encode the amplified predicate, the left encoder at index $i$
uses children $X_F(u_{i\oplus s_\ell})$, and the right encoder
at index $j$ uses children $Y_F(v_{j\oplus s'_\ell})$.
Each side can construct its children from its own list, without
evaluating the cross-list predicate values. The left children
have a common collapse, as do the right children, because they
all use the original encoders. Proposition~\ref{gate:or}
and Lemma~\ref{gate:collapse-preservation} give separate encodings
with run bound
\[
 r'=(2k+1)(5r+8)+2=O(kr).
\]
In particular, true pairs cost at most $r+2(5r+8)\le r'$, and false
pairs cost at least $H$. Each encoder writes its output in $O(Hr')$ time.

Applying Lemma~\ref{lem:constant-gap} gives
the required strings and threshold in $O(qp'(r')^2)=O(qpk^2r^2)$
time and length. Substituting $k=O(qt^2)$ gives
$O(pr^2q^3t^4)$. The family construction takes
$O(qt^3\polylog(qt))$ time, which also fits within
$\widetilde O(pr^2q^3)$.
\end{proof}

\subsection{Fine-grained consequences}\label{sec:hardness}

\begin{corollary}\label{lem:fine-grained-transfer}
For every fixed $\kappa\in(0,1/6)$ and $\delta\in(0,1)$, there are
constants $a,\varepsilon>0$ with the following
property. Suppose DTW over $\Sigma$ admits a deterministic
$N^{\kappa\delta}$-approximation in $O(N^{2-\delta})$ time on
two strings of length $N$.
Consider any two lists of $p$ inputs to a predicate with separate
encodings of run bound $r\le p^a$.
One can then deterministically distinguish no true pair from at least half true pairs in
$O(p^{2-\varepsilon})$ time.
\end{corollary}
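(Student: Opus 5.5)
We will instantiate Lemma~\ref{lem:amplified-gap} with $q=\lceil p^\gamma\rceil$ for a suitable constant $\gamma>0$, run the hypothetical algorithm once on the resulting strings, and compare its output with the threshold $B_0/2$. The plan is to fix the constants in the order $\gamma$, then $a$, then $\varepsilon$, all depending only on $\kappa$ and $\delta$.

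\textbf{Choice of constants.} Since $6\kappa<1$, pick $\gamma>0$ small enough that
\[
 \kappa\delta(1+3\gamma)<\gamma
 \qquad\text{and}\qquad
 (2-\delta)(1+3\gamma)<2 .
\]
The first inequality is possible because its left side tends to $\kappa\delta$ as $\gamma\to0$. Concretely, $\gamma(1-3\kappa\delta)>\kappa\delta$ holds for $\gamma$ slightly above $\kappa\delta/(1-3\kappa\delta)$. The second inequality then needs $3\gamma(2-\delta)<\delta$. So we need some $\gamma$ with
\[
 \frac{\kappa\delta}{1-3\kappa\delta}<\gamma<\frac{\delta}{3(2-\delta)} .
\]
Such a $\gamma$ exists iff $3\kappa(2-\delta)<1-3\kappa\delta$, that is, iff $6\kappa<1$. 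Here the constant six appears. Both inequalities are strict, so we may then choose $a>0$ small enough that they still hold with $1+3\gamma$ replaced by $1+3\gamma+2a$ plus a small slack $\xi>0$ to absorb polylogarithmic factors. Let $\varepsilon>0$ be the resulting margin in the second inequality, also at most $1-\gamma$ to cover construction time.

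\textbf{The reduction.} Given the two lists with $r\le p^a$, apply Lemma~\ref{lem:amplified-gap} with $q=\lceil p^\gamma\rceil$. This yields equal-length strings of length
\[
 N\le\widetilde O(pr^2q^3)\le p^{1+3\gamma+2a+o(1)}
\]
and a threshold $B_0$, within the same time bound. Construction time is therefore $O(p^{2-\varepsilon})$, because $1+3\gamma+2a<2$.

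Run the assumed algorithm. It returns a $N^{\kappa\delta}$-approximate estimate, and
\[
 N^{\kappa\delta}\le p^{\kappa\delta(1+3\gamma+2a+o(1))}<p^\gamma\le q
\]
for large $p$. Hence the estimate is also $q$-approximate. Declare ``all false'' iff the estimate is at least $B_0$, or equivalently at least $B_0/2$; Lemma~\ref{lem:amplified-gap} guarantees correctness in both promised cases. The running time is
\[
 O(N^{2-\delta})=p^{(2-\delta)(1+3\gamma+2a)+o(1)}\le O(p^{2-\varepsilon}).
\]
Small $p$ are handled by brute force in constant time.

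\textbf{Main obstacle.} The only delicate point is the constant bookkeeping: the $\widetilde O$ and $o(1)$ factors from Lemma~\ref{lem:amplified-gap} must be absorbed. We handle this with the strict slack $\xi$ and by taking $p$ larger than a threshold depending on $\kappa,\delta$. One must also note that $N$ may be smaller than the bound, which only helps, since the approximation factor and running time are monotone in $N$.
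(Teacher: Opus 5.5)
Your proposal is correct and follows the paper's proof essentially step for step. It uses the same choice $q=\lceil p^\gamma\rceil$ in Lemma~\ref{lem:amplified-gap}, the same two inequalities $\kappa\delta(1+\beta)<\gamma$ and $(2-\delta)(1+\beta)<2$ with $\beta=3\gamma+2a$, the same interval $\kappa\delta/(1-3\kappa\delta)<\gamma<\delta/(3(2-\delta))$ (nonempty exactly when $6\kappa<1$), and the same perturbation to a small $a>0$.

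Two small wording fixes. First, $\gamma$ must lie in that interval rather than be ``small enough'', since the first inequality fails as $\gamma\to0$. Second, the construction time needs no separate cap such as $1-\gamma$: because $2-\delta>1$, the construction exponent $1+\beta$ is below the running-time exponent $(2-\delta)(1+\beta)$, so the margin already covers it, as the paper notes.
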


\begin{proof}
Set $q=\lceil p^\gamma\rceil$. Lemma~\ref{lem:amplified-gap} bounds
both the common output length $N$ and the construction time by
$p^{1+\beta+o(1)}$, where $\beta=3\gamma+2a$.
There are two requirements on $\gamma,a$. The approximation
factor must be smaller than $q$, and the DTW running time must
remain strictly subquadratic in $p$. These require, respectively,
\[
 \kappa\delta(1+\beta)<\gamma,
 \qquad (2-\delta)(1+\beta)<2.
\]
With $a=0$, both inequalities hold precisely when
\[
 \frac{\kappa\delta}{1-3\kappa\delta}
 <\gamma<\frac{\delta}{3(2-\delta)}.
\]
This interval is nonempty because $6\kappa<1$. Choose a positive
rational $\gamma$ in the interval. Since both inequalities are
strict, they remain valid for a sufficiently small constant $a>0$.

Use the reduction with these choices. The first inequality gives
$N^{\kappa\delta}<q$ for sufficiently large $p$, so the
approximation separates the two cases at threshold $B_0/2$.
Its running time is at most
\[
 p^{(1+\beta)(2-\delta)+o(1)}.
\]
The second inequality makes this exponent strictly below two.
Since $2-\delta>1$, the construction exponent $1+\beta$ is also
strictly below two. The fixed margins absorb the $o(1)$ terms,
giving the claimed saving $\varepsilon>0$.
\end{proof}

\begin{proof}[Proof of Theorem~\ref{thm:explicit}]
Suppose the stated approximation algorithm exists, and let
$a,\varepsilon$ be given by Corollary~\ref{lem:fine-grained-transfer}.
Choose a fixed $0<\alpha<a$. For any fixed $c_0>0$, consider GBD
with $K=\lceil p^\alpha\rceil$ and $L=\lceil c_0\log p\rceil$.
Proposition~\ref{prop:predicate-encoders} gives run bound
$r=O(KL)\le p^a$ for sufficiently large $p$. The dense GBD case
has at least half true pairs, so the corollary solves GBD in
$O(p^{2-\varepsilon})$ time. The constants $\alpha,\varepsilon$ do not
depend on $c_0$, contradicting Hypothesis~\ref{hyp:gbd}.
\end{proof}

\subsection{Circuit consequences}\label{sec:formula-encodings}

The OR construction requires common child collapses on each side.
For arbitrary formulas, we choose collapses that depend
only on the depth bound and the side. A fixed gate layout gives
this uniformity even when the formulas use different root gates.

\begin{lemma}\label{lem:formula-encoders}
Let $\Phi(x,y)$ be a Boolean formula of depth at most $d$, with
two-input AND and OR gates and literals or constants at the leaves.
The variables are partitioned between $x$ and $y$.
There are separate encoders for $\Phi$ satisfying
Definition~\ref{def:encoders} with a run bound $\tau_d=2^{O(d)}$.
Their side collapses depend only on $d$, independently of the formula,
the variable partition, and the assignments.
\end{lemma}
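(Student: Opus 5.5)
We proceed by induction on $d$. The key is a fixed universal gate layout, so that every formula of depth at most $d$ is encoded by the same shape of gadgets and hence has the same side collapses, with only the data (the $\str{a}$-run lengths inside literals) changing. The induction hypothesis at depth $d$ is the full statement. Each depth-$d$ encoding has at most $\tau_d$ runs, all of length one or $H$, with endpoints $\str{c}^H,\str{b}^H$. Its left collapse $A_d$ and right collapse $B_d$ are distinct and depend only on $d$. True pairs have distance at most $\tau_d$ and false pairs distance at least $H$, and each encoder runs in $O(H\tau_d)$ time using only its own assignment.

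\textbf{Base case, $d=0$.} The formula is a literal or a constant. We use the standardized literal encoding of Section~\ref{sec:literal-encoders}, whose collapses are fixed and distinct, with run bound $r_0=48$, true cost at most two and false cost at least $H$. A constant $z$ is treated as a literal known to the left encoder with value $z$. Then all depth-$0$ leaves share the collapses $A_0\ne B_0$, regardless of literal owner, variable, or value.

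\textbf{Inductive step.} A formula of depth at most $d+1$ is either a gate $g(\Phi_1,\Phi_2)$ with subformulas of depth at most $d$, or already has depth at most $d$. In the latter case we pad it to a gate, for instance $\Phi\wedge\mathrm{true}$, with the constant realized as a depth-$d$ formula by nesting; this keeps the layout uniform. We then fix the layout
\[
 \mathrm{OR}\bigl(\mathrm{AND}(\Phi_1,\Phi_2),\,\mathrm{AND}(\Psi_1,\Psi_2)\bigr),
\]
where each $\Phi_i,\Psi_i$ is encoded at depth $d$. For an AND gate we take $\Phi_1,\Phi_2$ to be its children and set $\Psi_1=\Psi_2=\mathrm{false}$. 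For an OR gate we take $(\Phi_1,\Phi_2)=(\text{child}_1,\mathrm{true})$ and $(\Psi_1,\Psi_2)=(\text{child}_2,\mathrm{true})$. The constants true and false are encoded as fixed depth-$d$ formulas, for example by iterating $z\wedge z$, so that they also have collapses $A_d,B_d$. The value is preserved by Propositions~\ref{gate:and} and~\ref{gate:or}. Both ANDs have children with common collapses $A_d\ne B_d$, so by Lemma~\ref{gate:collapse-preservation} both ANDs share the collapses $\str{c}C_{A_d}^2\str{b}$ and $\str{c}C_{B_d}^2\str{b}$, and these are distinct. This is exactly the common-collapse hypothesis that OR needs, and OR then outputs distinct collapses that depend only on $d+1$.

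\textbf{Run bound and remaining properties.} The AND bound gives at most $2(5\tau_d+8)+2$ runs. Feeding this into OR with $k=2$ gives
\[
 \tau_{d+1}\le 3\bigl(5(10\tau_d+18)+8\bigr)+2=O(\tau_d),
\]
so $\tau_d=2^{O(d)}$. For true costs, AND sums at most $2\tau_d$ and OR adds $2(5r+8)$, both within $\tau_{d+1}$. False pairs have distance at least $H$ by the gate soundness statements. Endpoints and run lengths are preserved by the gates, and each encoder reads only its own variables and writes its output in time linear in its length, which is $O(H\tau_d)$.

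The main obstacle is the uniformity itself. The padding constants and the padding of shallow subformulas must be realized with exactly the depth-$d$ collapses, and the AND-with-false trick must still satisfy OR soundness; it does, since a false AND child has distance at least $H$. I would check carefully that the constant gadgets are defined recursively by the same layout, so that the induction closes.
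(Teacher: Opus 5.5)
Your proof is correct and follows the paper's argument: the same base case, the same fixed OR-of-two-ANDs layout with the padding identities $f\wedge g=(f\wedge g)\vee(0\wedge0)$, $f\vee g=(f\wedge1)\vee(g\wedge1)$ and $f=(f\wedge1)\vee(0\wedge0)$, and Lemma~\ref{gate:collapse-preservation} for the common and distinct collapses. Two small slips need fixing: the right OR output has $(2k+1)t+2=5t+2$ runs, not $3t+2$, though the bound is still $O(\tau_d)$; and the $O(H\tau_d)$ time bound must account for the four recursive calls, which the paper handles via $T_d\le4T_{d-1}+O(H\tau_d)$ with $\tau_d=C\tau_{d-1}$ and $C>4$.
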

\begin{proof}
We induct on the depth bound $d$. At depth zero, use the two-slot literal encoders from
Section~\ref{sec:literal-encoders}. Treat a constant as a literal
assigned to the first side. These encoders have a constant run bound
$\tau_0$ and the required fixed side collapses.

At every positive depth bound, use an OR of two AND gadgets.
For a root with children $f,g$, write
\[
 f\wedge g=(f\wedge g)\vee(0\wedge0),\qquad
 f\vee g=(f\wedge1)\vee(g\wedge1).
\]
For a leaf $f$, use $f=(f\wedge1)\vee(0\wedge0)$.
The leaf identity fills the remaining
levels when the formula is shorter than the depth bound.
Encode all four inputs, including constants, with depth bound
$d-1$. By induction, their collapses agree on each side. Apply both AND
gadgets with the same child run bound $\tau_{d-1}$, and use their
common output run bound for the final OR gadget. The two AND outputs
have common side collapses, as the OR gadget requires.
Lemma~\ref{gate:collapse-preservation} shows that the final side
collapses are distinct. Since the layout and run parameters depend
only on $d$, so do those collapses.

If $\Phi$ is true, one AND has two true children. Its distance is
$O(\tau_{d-1})$, and selecting it in the OR adds
$O(\tau_{d-1})$. If $\Phi$ is false, both ANDs have a false child,
so AND and OR soundness give distance at least $H$.
The run bound after the two stages is also $O(\tau_{d-1})$, and the
endpoints and allowed run lengths are preserved.
Choose an absolute integer $C>4$ large enough to bound both the
run count and the true-pair distance, and set
$\tau_d=C\tau_{d-1}$. This proves the encoding guarantees and
$\tau_d=2^{O(d)}$.

Let $T_d$ bound the construction time of one encoder. At depth
$d>0$, it makes four recursive calls and writes $O(H\tau_d)$
symbols, so
\[
 T_d\le4T_{d-1}+O(H\tau_d).
\]
Since $\tau_d=C\tau_{d-1}$ with $C>4$, summing the recursive
costs gives $T_d=O(H\tau_d)$. The fixed collapse strings can be
generated along the same recursion. Each encoder uses only its
own assignment.
\end{proof}

\begin{corollary}\label{cor:circuits}
Fix $\kappa\in(0,1/6)$.
Suppose that, for some fixed $\delta\in(0,1)$, DTW on two strings of
length $N$ over the uniform metric on $\Sigma$ admits a deterministic
$N^{\kappa\delta}$-approximation in $O(N^{2-\delta})$ time.
Fix $k\ge1$ and a polynomial-time computable integer function
$h(v)\ge1$ with $h(v)=o(v)$.
Then $\mathsf{E}^{\mathsf{NP}}$ is not contained in the class of functions computed by
nonuniform fan-in-two Boolean circuits of size $O(v^k)$ and depth
$O(h(v))$.
Also, $\mathsf{E}^{\mathsf{NP}}$ does not have nonuniform linear-size Valiant
series-parallel circuits.
\end{corollary}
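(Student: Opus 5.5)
The plan is to turn the assumed DTW algorithm into a deterministic algorithm for a gap version of circuit satisfiability, and then invoke the circuit criterion of Ben-Sasson and Viola~\cite{BenSassonViola2014}. The criterion is used in the following form. Suppose there is a deterministic algorithm that, given a fan-in-two circuit on $v$ inputs of size $O(v^k)$ and depth $O(h(v))$ (or a linear-size Valiant series-parallel circuit), runs in $2^{(1-\Omega(1))v}$ time. Suppose it distinguishes the case of no satisfying assignment from the case where at least half of all assignments satisfy. Then $\mathsf{E}^{\mathsf{NP}}$ lacks such circuits. The work is therefore to build this gap-satisfiability algorithm from Corollary~\ref{lem:fine-grained-transfer} and Lemma~\ref{lem:formula-encoders}.

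Given a circuit $\mathcal C$ of size $s=O(v^k)$ and depth $D=O(h(v))$, unfold it into a fan-in-two formula of the same depth $D$; the formula size is at most $2^D$. I would not even need the formula explicitly beyond what the encoders read. Split the $v$ variables into $x$ (first $\lceil v/2\rceil$) and $y$ (the rest). Form the left list of all assignments to $x$ and the right list of all assignments to $y$, padding the shorter list by repetition, so $p=2^{\lceil v/2\rceil}$. Cross-list pairs then correspond to full assignments, with padding changing the density by at most a constant factor. If that factor matters, I would state the criterion with a constant-fraction promise, or rely on the fact that Lemma~\ref{lem:amplified-gap} absorbs constant densities after adjusting $\varepsilon$. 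Lemma~\ref{lem:formula-encoders} supplies separate encodings with run bound $r=2^{O(D)}=2^{o(v)}=p^{o(1)}$. Hence $r\le p^a$ for the constants $a,\varepsilon$ of Corollary~\ref{lem:fine-grained-transfer} once $v$ is large. Each encoder evaluates its part in time polynomial in the formula size times $H$, which stays within the budget since everything is $p^{O(\gamma)+o(1)}$. The corollary then distinguishes unsatisfiable from half-satisfiable in $O(p^{2-\varepsilon})=2^{(1-\varepsilon/2)v+O(1)}$ time, as the criterion requires. Polynomial-time computability of $h$ is needed so the algorithm knows the depth bound $d$ and can fix the common collapses.

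For the Valiant series-parallel part, I would follow Abboud and Backurs~\cite{AbboudBackurs2017}. A linear-size series-parallel circuit can be written, after fixing $O(v/\log\log v)$ wires, as an OR over $2^{o(v)}$ choices of a conjunction of functions each depending on $v^{\epsilon}$ inputs. This is equivalent to CNF-like objects of subexponential size that reduce to the previous depth-$o(v)$ formula setting, or directly to a small-depth formula per choice. Combining the choices with one more OR gadget keeps $r=2^{o(v)}$. The same gap algorithm then applies, and their criterion yields the second claim.

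I expect the main obstacle to be formally matching the hypotheses of the Ben-Sasson--Viola criterion. One issue is that the depth bound $o(v)$ must make the unfolded run bound $p^{o(1)}$ uniformly for the fixed constants $a$. Another is that the distinguished cases match the promise (no satisfying assignment versus constant fraction) after padding. I would handle the density by invoking Lemma~\ref{lem:small-bias} with density $1/8$ as already done, since the amplification tolerates any constant density.
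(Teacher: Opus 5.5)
Your reduction to a deterministic gap-satisfiability algorithm matches the paper's first step: unfold, split the variables, apply Lemma~\ref{lem:formula-encoders} with $r=2^{o(v)}\le p^a$, then Corollary~\ref{lem:fine-grained-transfer}. The gap is in how you invoke Ben-Sasson and Viola~\cite{BenSassonViola2014}. The form you state is not their criterion, and checking the real hypotheses is most of the paper's proof. Their criterion asks for an algorithm on \emph{ANDs of polynomially many three-input ORs} of functions from $\mathcal C_{v+O(\log v)}$, with inputs projected to $v$ variables. It must distinguish acceptance of every assignment from acceptance of at most a $v^{-10}$ fraction, and the class must be closed under projections and complementation. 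These instances are not circuits of depth $O(h(v))$, since the outer AND adds $\Theta(\log v)$ depth, which can dominate $h$. Their promise is also the complement of yours.

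The missing steps are as follows. First, fix a concrete class that absorbs all hidden constants. The paper uses AND/OR circuits with literal and constant inputs, of size at most $v^{k'}$ with $k'>k$ and depth at most $h'(v)=\lceil\sqrt{v(h(v)+1)}\rceil$, so that $h=o(h')$ and $h'=o(v)$; this also absorbs the cost of removing internal negations. Second, check closure under projections, and under complementation via De Morgan. Third, given a criterion instance, balance the outer AND and complement it. This gives a polynomial-size circuit whose depth is bounded by a polynomial-time computable $o(v)$ function: the maximum of $h'(w)$ over $w$ within $O(\log v)$ of $v$, plus $O(\log v)$. Its two cases are ``unsatisfiable'' and ``satisfied by at least $1-v^{-10}\ge 1/2$ of assignments.'' Only then does your algorithm apply, since it needs just polynomial size and such a depth bound.

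The obstacles you flag are not the real ones. The run bound is automatic once $a$ is fixed. The density issue disappears if you add one unused variable when $v$ is odd, as the paper does: both lists then have $2^{\lceil v/2\rceil}$ entries and a uniform pair is a uniform assignment. Repeating a single entry can push the density below one half, outside the hypothesis of Corollary~\ref{lem:fine-grained-transfer}.

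For the series-parallel claim, the paper needs no new argument. The assumed algorithm refutes GBD by Theorem~\ref{thm:explicit}, and Abboud and Backurs~\cite{AbboudBackurs2017} derived exactly this circuit lower bound from a refutation of GBD. Your sketch re-derives that connection from the wrong structural fact. Fixing $O(v/\log\log v)$ wires so that each remaining function depends on $v^{\epsilon}$ inputs is Valiant's bound for linear-size \emph{logarithmic-depth} circuits. Linear-size series-parallel circuits can have linear depth. The standard decomposition for them removes $\epsilon v$ wires for an arbitrary fixed $\epsilon>0$, so the OR ranges over $2^{\epsilon v}$ choices and your claim $r=2^{o(v)}$ is unjustified. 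A repair would choose $\epsilon$ small relative to $a$ and redo the criterion verification above for ANDs of ORs of such circuits. Citing the GBD route is much simpler.
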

\begin{proof}
Let $a,\varepsilon>0$ be given by Corollary~\ref{lem:fine-grained-transfer}
for the assumed DTW algorithm. Consider a polynomial-size,
fan-in-two circuit on $v$ variables whose depth is bounded by a
polynomial-time computable function $d(v)=o(v)$. Unfold it into a
formula and push negations to the leaves. The formula has depth
at most $d(v)$ and size $2^{O(d(v))}$.
Add one unused variable if $v$ is odd. Split the variables into two
equal groups and list all partial assignments.
Each list has size $p=2^{\lceil v/2\rceil}$.
A uniformly chosen pair induces the uniform distribution on assignments.
Lemma~\ref{lem:formula-encoders} gives run bound
$r=2^{O(d(v))}=p^{o(1)}\le p^a$ for sufficiently large $v$.
Corollary~\ref{lem:fine-grained-transfer} distinguishes an unsatisfiable
circuit from one satisfied by at least half of its assignments in
$O(p^{2-\varepsilon})=O(2^{(1-\varepsilon/2)v})$ time. Reading the circuit and
unfolding the formula take $2^{o(v)}$ time and fit within this bound.

We apply the criterion of Ben-Sasson and
Viola~\cite{BenSassonViola2014} to a family $\mathcal C_v$ of
functions with polynomial-size descriptions. For this application,
we must verify closure under input projections and complementation,
and give an algorithm for the following promise problem. The input
is an AND of polynomially many three-input ORs of functions from
$\mathcal C_{v+O(\log v)}$, with inputs projected to $v$ variables.
The algorithm must distinguish a circuit accepting every assignment
from one accepting at most a $v^{-10}$ fraction of assignments, in
$2^v/v^{\omega(1)}$ time. The criterion then implies that
$\mathsf{E}^{\mathsf{NP}}$ contains a function outside the family.

To absorb constant factors in the target size and depth bounds,
choose an integer $k'>k$ and put
\[
 h'(v)=\left\lceil\sqrt{v(h(v)+1)}\right\rceil.
\]
Let $\mathcal C_v$ consist of the functions represented by fan-in-two
AND/OR circuits of size at most $v^{k'}$ and depth at most $h'(v)$,
with literals and constants as inputs. These functions have
polynomial-size descriptions. Replacing an input variable by a
constant or a possibly negated input variable changes only the input
labels, preserving size and depth. To complement a circuit, interchange
AND and OR and complement every input literal or constant.
De Morgan's laws give the complemented function with the same size
and depth. Both operations are polynomial-time computable on the
descriptions, so the family has the closure properties required by
the criterion. Also, $h'(v)=o(v)$ and $h(v)=o(h'(v))$.

Computing both polarities of each gate removes internal NOT gates
with at most twice the size and no increase in depth.
The choices of $k'$ and $h'$ ensure that every circuit family of
size $O(v^k)$ and depth $O(h(v))$ computes functions in $\mathcal C_v$
for sufficiently large $v$.

It remains to give the algorithm required by the criterion.
Balance the outer AND and complement the resulting circuit.
The two cases become unsatisfiability and acceptance on at least
a $1-v^{-10}$ fraction of assignments. The complemented circuit
has polynomial size and a polynomial-time computable depth bound
$o(v)$. To bound the depth, take the maximum of $h'(w)$ over
indices within $O(\log v)$ of $v$, then add $O(\log v)$ for the
outer gates. The gap-circuit algorithm from the first paragraph
distinguishes the two cases in
$O(2^{(1-\varepsilon/2)v})$ time, which is within
$2^v/v^{\omega(1)}$. This verifies the algorithmic requirement,
so the criterion gives a function in $\mathsf{E}^{\mathsf{NP}}$
outside $\mathcal C_v$.

This proves the stated lower bound.
For the series-parallel conclusion,
the assumed DTW algorithm refutes GBD by Theorem~\ref{thm:explicit},
and the connection of Abboud and Backurs~\cite{AbboudBackurs2017}
gives the claimed circuit lower bound.
\end{proof}

\section{Hardness for run-length encoded strings}\label{sec:rle}

\subsection{From membership to DTW}\label{sec:compiler}

The membership instances use the alphabet
\[
 \Gamma=\{\str{Z}_{11},\str{Z}_{12},\str{Z}_{21},\str{Z}_{22},\str{U},\str{B}_\star\}.
\]
For $x,y\in\{1,2\}$, the symbol $\str{Z}_{xy}$ encodes the
right-vector coordinate pair $(2-x,2-y)$.
For $(\alpha,\beta)\in\bits^2$, the set $C_{\alpha\beta}$ contains
exactly the $\str{Z}$-symbols whose coordinate pair is orthogonal
to $(\alpha,\beta)$, together with $\str{U}$ and, when $\alpha=0$,
the marker $\str{B}_\star$. Thus
\[
\begin{array}{ll}
 C_{00}=\Gamma,&C_{01}=\{\str{Z}_{12},\str{Z}_{22},\str{U},\str{B}_\star\},\\[2pt]
 C_{10}=\{\str{Z}_{21},\str{Z}_{22},\str{U}\},&C_{11}=\{\str{Z}_{22},\str{U}\}.
\end{array}
\]
All four sets contain the filler $\str{U}$. The marker
$\str{B}_\star$ fixes group boundaries. Its presence in $C_{01}$
does not affect source soundness, which uses $\str{B}_\star\notin C_{11}$.
The required types are $\Gamma,C_{01},C_{10},C_{11},\{\str{U}\}$,
and $\{\str{B}_\star\}$; the optional types are
$(\epsilon\mid \str{U})$ and $(\epsilon\mid\Gamma)$.
Each empty factor will use a separate dummy interval in the DTW
construction. A bound on consecutive empty factors controls
the number of dummy intervals that we need between text symbols.

\begin{proposition}\label{prop:three-symbol-compiler}
Let $T\in\Gamma^{L_T}$ and $P=D_1\cdots D_s$ use the factor types
above, with $L_T,s\geq1$. Fix an integer $\ell\geq1$.
Suppose that, whenever $T\in L(P)$, there is an accepting derivation
with at most $L_T$ empty factors at either end and at most $\ell$
between consecutive text symbols. For every integer $H\geq2$,
one can construct two run-length encoded strings $X,Y$ over $\Sigma$ in
$O(n+m)$ time. Their total run count satisfies $n+m=O(s+L_T\ell)$,
and their distance satisfies
\[
 T\in L(P)\ \Longrightarrow\ \DTW(X,Y)\leq n+m-1,
 \qquad
 T\notin L(P)\ \Longrightarrow\ \DTW(X,Y)\geq H.
\]
The run labels and $n+m$ are independent of $H$. Every run has length
one or $H$, and the total expanded length is $N+M=\Theta((n+m)H)$.
\end{proposition}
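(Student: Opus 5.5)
We follow the design sketched in the technical overview and carry it out in four steps.

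\emph{Step 1: fix the gadget words.} Assign each symbol $v\in\Gamma$ a constant-length alternating code $\operatorname{code}(v)$ over $\{\str{a},\str{b}\}$. All six codes must be distinct, and each must begin and end with $\str{a}$ so that neighbouring blocks join cleanly with the $\str{c}$ separators. For instance, take $\operatorname{code}(\str{U})=\str{aba}$ and longer alternating words for the other symbols. The text block $T_v$ makes every code position heavy.

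For each of the eight factor types $D$ we then fix a constant-length alternating word $W_D$ with designated heavy positions. The property we need is this: $v\in D$ if and only if some subsequence of $W_D$ retains every heavy position and collapses to $\operatorname{code}(v)$. Optional types need three further properties: every heavy run is labelled $\str{a}$; the word begins and ends with $\str{a}$; and it has boundedly many light $\str{b}$'s. Required types must contain a heavy $\str{b}$.

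Since codes are alternating, this is a finite combinatorial design. I would verify it by listing the eight words explicitly. The idea is to place heavy runs at the shared skeleton of the codes in $D$ and light runs where the codes diverge. For example, the optional $\str{U}$ factor uses $\str{a}^H\str{b}\str{a}^H$. I expect this explicit table to be the main obstacle. Types such as $C_{01}$ and $(\epsilon\mid\Gamma)$ must admit exactly the right set of codes and forbid all others, so the code lengths should be chosen to make the sets nested by subsequence structure. A natural first attempt uses codes of distinct lengths $(\str{ab})^{i}\str{a}$, so that a subsequence constraint on heavy positions becomes a length window. Admitting an arbitrary set $D$ may then require reordering the symbols, and I would try to choose the ordering so that each $C_{\alpha\beta}$ becomes an interval.

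\emph{Step 2: global layout.} Set
\[
X=\str{c}^H F_1\str{c}^H\cdots F_s\str{c}^H .
\]
The string $Y$ has the same shape, with $L_T$ light dummies $\str{a}$ before $T_{v_1}$, then $\ell$ dummies between consecutive text blocks, then $L_T$ dummies after $T_{v_{L_T}}$. Every block on either side is separated by $\str{c}^H$, and the whole string ends in $\str{c}^H$. The run counts are $n=O(s)$ and $m=O(L_T\ell)$, and the labels are independent of $H$. The expanded length is $\Theta((n+m)H)$ because a constant fraction of runs are heavy separators.

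\emph{Step 3: completeness.} Take the guaranteed accepting derivation and match the factors in order. An emitting factor goes to its text block. An empty factor goes to a distinct dummy, which is available because of the bounds on consecutive empty factors. On the run collapses, the path uses the subsequence from Step 1 inside each emitting factor rectangle. For a factor matched to a dummy, it holds the dummy $\str{a}$; the heavy $\str{a}$ runs of an optional factor then get equality cells. Between rectangles it matches $\str{c}$ to $\str{c}$, holding a left $\str{c}$ to pass unused dummies and their separators. Every designated heavy position now has an equality cell. The collapsed path has at most $n+m-1$ cells, so Lemma~\ref{gate:heavy-witnesses}(ii) lifts it to the expanded strings at the same cost.

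\emph{Step 4: soundness.} Take a path of cost below $H$. By Lemma~\ref{gate:heavy-witnesses}(i), every fully traversed heavy run has an equality cell. The interior of a factor contains no $\str{c}$, so its equality cells cannot lie in two different $Y$ intervals: the path would have to cross a full right $\str{c}^H$, paying $H$. Symmetrically, no interval meets two factors. Monotonicity therefore yields an ordered partial matching.

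The separators' $\str{c}^H$ runs force every factor and every real text block to be matched. Each heavy code run needs a witness inside some factor, and each factor has a heavy run needing a witness in some interval. A required factor cannot be matched to a dummy, because its heavy $\str{b}$ would mismatch throughout. Inside a matched factor--text pair, the retained positions form ordered groups. One factor run cannot witness two code runs, since the path would have to cross a heavy run of the opposite label. Hence the retained subsequence collapses to $\operatorname{code}(v)$, and by Step 1 we get $v\in D$. Factors matched to dummies are optional and emit $\epsilon$. Reading off the matching gives a derivation of $T$, so $T\in L(P)$, which is the contrapositive we need.
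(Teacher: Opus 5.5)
Your Steps 2--4 follow the paper's layout, completeness path and soundness matching closely. The gap is Step 1. The proposition rests entirely on the explicit profile table (the paper's Lemma~\ref{lem:table}), and you leave that table unconstructed. Worse, the design you propose for it cannot exist.

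Your requirement that every code begin and end with $\str{a}$ forces all codes to be odd alternating words $A_j$ of distinct lengths. For an alternating profile $(w,J)$, the accepted codes then form an interval in length order. To see this, suppose $S\supseteq J$ collapses to $A_k$ and some shorter code is accepted. Then some interior group of $S$ contains no heavy position. Otherwise $J$ would contain an alternating pattern $\str{b}\str{a}\cdots\str{b}$ of length $k-2$, and every retained subsequence collapsing to a code would have length at least $k$. Deleting that light group merges its neighbours and gives $A_{k-2}$.

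Now apply this to your optional $(\epsilon\mid\str{U})$ profile. It has only heavy $\str{a}$'s, so retaining just $J$ collapses to $\str{a}$. Hence it accepts every code no longer than $\operatorname{code}(\str{U})$, and $\str{U}$ must receive the shortest code. The $C_{01}$ and $C_{10}$ profiles both accept $\operatorname{code}(\str{U})$, so both are initial segments of the length order, and one contains the other. But $\str{Z}_{21}\in C_{10}\setminus C_{01}$ and $\str{Z}_{12}\in C_{01}\setminus C_{10}$. So no reordering of the symbols rescues your ``natural first attempt''.

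The fix is to drop that requirement. Its justification is spurious: the separators are $\str{c}$, so any word over $\{\str{a},\str{b}\}$ joins them cleanly. The paper gives $\str{Z}_{12}$ and $\str{B}_\star$ the even codes $\str{ab}$ and $\str{ba}$. It gives $\str{U},\str{Z}_{22},\str{Z}_{21},\str{Z}_{11}$ the odd codes of lengths $3,5,7,9$. Heavy endpoint $\str{a}$-positions then filter out the two even codes where needed:
\begin{itemize}
\item $C_{10}$ uses $A_7$ with $J=\{1,2,7\}$, and $C_{11}$ uses $A_5$ with $J=\{1,2,5\}$.
\item $C_{01}$ uses $A_5$ with only $J=\{2\}$, and admits $\str{ab},\str{ba}$ through positions $1,2$ and $2,3$.
\item Optional $\str{U}$ is $\str{a}^H\str{b}\str{a}^H$; its heavy endpoints exclude the even codes, and deleting the $\str{b}$ gives the non-code $\str{a}$.
\item Both $\Gamma$ types use $A_9$ with a single heavy position, $2$ for required and $5$ for optional.
\item The singleton types $\{\str{U}\}$ and $\{\str{B}_\star\}$ make every position heavy.
\end{itemize}
Every profile then has a heavy run, as your covering argument in Step 4 implicitly assumes. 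Once this table is verified case by case, your Steps 2--4 go through as written.
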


\paragraph{Symbol and factor gadgets.}
For $v\in\Gamma$, the symbol gadget $T_v$ replaces each position
of $\operatorname{code}(v)$ in Table~\ref{tab:real} by a run of
length $H$. A dummy gadget is the light symbol $\str{a}$.
Let $A_j$ be the alternating string of odd length $j$ that begins
and ends with $\str{a}$. A factor profile $(w,J)$ consists of an
alternating string $w$ over $\{\str{a},\str{b}\}$ and a set
$J\subseteq[|w|]$ of heavy positions. Position $i$ becomes $w_i^H$
if $i\in J$ and remains light otherwise.
The profiles in Table~\ref{tab:factors} each have a heavy run.

A profile $(w,J)$ \emph{accepts} a code if deleting some
positions of $w$ outside $J$, then collapsing equal neighbors,
produces that code.

\begin{table}[H]
\centering
\begin{minipage}[t]{.39\textwidth}
\centering
\vspace{0pt}
\small
\begin{tabular}{ccl}
\toprule
Symbol & $\operatorname{code}(v)$ & Length\\
\midrule
$\str{U}$ & $\str{aba}$ & $3$\\
$\str{Z}_{22}$ & $\str{ababa}$ & $5$\\
$\str{Z}_{21}$ & $\str{abababa}$ & $7$\\
$\str{Z}_{11}$ & $\str{ababababa}$ & $9$\\
$\str{Z}_{12}$ & $\str{ab}$ & $2$\\
$\str{B}_\star$ & $\str{ba}$ & $2$\\
\bottomrule
\end{tabular}
\caption{Symbol codes. Every run in a symbol gadget has length $H$.}
\label{tab:real}
\end{minipage}\hfill
\begin{minipage}[t]{.59\textwidth}
\centering
\vspace{0pt}
\small
\begin{tabular}{lcc}
\toprule
Factor & Word $w$ & Heavy positions $J$\\
\midrule
Required $\Gamma$ & $A_9$ & $\{2\}$\\
Required $C_{01}$ & $A_5$ & $\{2\}$\\
Required $C_{10}$ & $A_7$ & $\{1,2,7\}$\\
Required $C_{11}$ & $A_5$ & $\{1,2,5\}$\\
Required $\{\str{U}\}$ & $\str{aba}$ & $\{1,2,3\}$\\
Required $\{\str{B}_\star\}$ & $\str{ba}$ & $\{1,2\}$\\
Optional $\str{U}$ & $\str{aba}$ & $\{1,3\}$\\
Optional $\Gamma$ & $A_9$ & $\{5\}$\\
\bottomrule
\end{tabular}
\caption{Factor profiles. Unlisted positions have length one.}
\label{tab:factors}
\end{minipage}
\end{table}

\begin{lemma}\label{lem:table}
For each $v\in\Gamma$, a profile in Table~\ref{tab:factors} accepts
$\operatorname{code}(v)$ if and only if its factor can emit $v$.
The optional profiles, and only those profiles, can give every
heavy run an equality cell against the dummy.
\end{lemma}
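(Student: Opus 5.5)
This is a finite verification, so I would organize it as a characterization of acceptance followed by a table check. A profile $(w,J)$ with $w=A_j$ or $w=\str{ba}$ accepts a code $c$ exactly when some subsequence of $w$ that retains every position of $J$ collapses to $c$. Since $w$ alternates, the collapse of such a subsequence is again an alternating word. Its first letter is the label of the first kept position, and its last letter is the label of the last kept position. Its length can be any value consistent with the parity of those endpoints and bounded by the span available. I would first record the following criterion. Let $c$ be alternating of length $\lambda$, beginning with $\str{a}$ or $\str{b}$. Then $(w,J)$ accepts $c$ if and only if there are kept endpoint positions $i_1\le i_2$ with $w_{i_1}=c_1$, $w_{i_2}=c_\lambda$, $i_1\le\min J$, $i_2\ge\max J$, and $i_2-i_1+1\ge\lambda$. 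Sufficiency holds because, between the endpoints, we may delete light positions in adjacent pairs (deleting a consecutive $\str{ab}$ or $\str{ba}$ pair reduces the collapsed length by two) while keeping $J$. The only care needed is to delete pairs that avoid $J$; since $J$ has at most three elements, placed near the ends or at one interior spot, this can always be done. Necessity holds because the collapse of any subsequence has length at most $i_2-i_1+1$ and the correct endpoint letters.

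With the criterion in hand, I would go through the eight rows of Table~\ref{tab:factors} against the six codes in Table~\ref{tab:real}. The codes starting with $\str{a}$ have odd lengths $3,5,7,9$ for $\str{U},\str{Z}_{22},\str{Z}_{21},\str{Z}_{11}$, and $\str{Z}_{12}$ has code $\str{ab}$ of length $2$. The marker $\str{B}_\star$ has code $\str{ba}$.

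\begin{itemize}
\item Required $\Gamma=A_9$ with $J=\{2\}$: we may take $i_1=1$ and $i_2$ equal to $2$ or any later position, so it accepts every code.
\item Required $C_{01}=A_5$ with $J=\{2\}$: it accepts $\str{ab}$, $\str{ba}$, $\str{aba}$ and $\str{ababa}$, but not lengths $7$ or $9$.
\item Required $C_{10}=A_7$ with $J=\{1,2,7\}$: the kept word must start with $\str{a}$ and end with $\str{a}$ at position $7$, and position $2$ is kept. This gives odd lengths $3$ to $7$, namely $\str{U},\str{Z}_{22},\str{Z}_{21}$. It excludes $\str{Z}_{12}$ and $\str{B}_\star$.
\item Required $C_{11}=A_5$ with $J=\{1,2,5\}$: it gives $\str{aba}$ and $\str{ababa}$.
\item Required $\{\str{U}\}$: $\str{aba}$ with all positions heavy gives only $\str{aba}$.
\item Required $\{\str{B}_\star\}$: $\str{ba}$ with both positions heavy gives only $\str{ba}$.
\item Optional $\str{U}$: $\str{aba}$ with $J=\{1,3\}$ accepts only $\str{aba}$, since collapsing $\str{aa}$ after deleting the middle letter gives $\str{a}$, which is not a code.
\item Optional $\Gamma$: $A_9$ with $J=\{5\}$ accepts every code, including $\str{ba}$ (taking $i_1=4$, $i_2=5$).
\end{itemize}

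Each of these accepted sets must then be compared with the factor's emission set $C_{\alpha\beta}$ using the $\str{Z}_{xy}$ coordinate convention. For instance, $C_{10}=\{\str{Z}_{21},\str{Z}_{22},\str{U}\}$ matches exactly.

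For the second sentence, I would hold a single light $\str{a}$ on the opposite axis. A heavy run has an equality cell against the dummy exactly when its label is $\str{a}$. The optional profiles have heavy positions $\{1,3\}$ in $\str{aba}$ and $\{5\}$ in $A_9$, all labelled $\str{a}$. Every required profile has a heavy position labelled $\str{b}$ (position $2$, or position $1$ for $\str{B}_\star$), so it fails.

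The main obstacle is the sufficiency direction of the criterion, where heavy positions constrain which pairs may be deleted. I would handle it by exhibiting an explicit subsequence per row rather than arguing in general, since there are only eight rows.
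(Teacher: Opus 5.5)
Your plan is correct and is essentially the paper's proof. The paper likewise rules out codes using the forced endpoint labels, the retained heavy positions, and the span bound (your ``only if'' direction), supplies explicit retained subsequences for each accepted code, and settles the dummy claim by observing that optional profiles have only heavy $\str{a}$-runs while every required profile has a heavy $\str{b}$-run.

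Rely on those explicit witnesses rather than on your general criterion, whose ``if'' direction is false as stated. It ignores the lower bound set by the collapse of the forced positions $\{i_1\}\cup J\cup\{i_2\}$: for instance, $A_7$ with $J=\{1,2,7\}$ meets your conditions for the word $\str{a}$ but cannot produce it. Also, for $\str{ba}$ under required $\Gamma$ the first kept position must be $i_1=2$, not $1$.
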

\begin{proof}
For required $\Gamma$, retain the prefix of length $3,5,7$, or $9$
for the four odd-length codes. For $\str{ab}$, retain positions $1,2$.
For $\str{ba}$, retain positions $2,3$.
Each choice retains the heavy position $2$.
Required $C_{01}$ has the same witnesses up to length five and
cannot produce a longer code, so it accepts precisely the codes
for $\str{U},\str{Z}_{22},\str{Z}_{12},\str{B}_\star$.

The required $C_{10}$ profile illustrates how the heavy positions
restrict the accepted codes. Its word is $A_7$, with heavy positions
$1,2,7$. Retaining the two endpoint $\str{a}$-positions forces
the code to start and end with $\str{a}$, and retaining position
$2$ forces it to contain $\str{b}$. Its length is at most seven.
Among the symbol codes in Table~\ref{tab:real}, these conditions
leave precisely $\str{U},\str{Z}_{22},\str{Z}_{21}$.
For example, retaining $1,2,7$ gives $\str{aba}$, the code for
$\str{U}$. Retaining $1,2,3,4,7$ gives $\str{ababa}$, the code
for $\str{Z}_{22}$, and retaining all seven positions gives the
code for $\str{Z}_{21}$. Required
$C_{11}$ is the length-five version, retaining $1,2,5$ for $\str{U}$ or
all five positions for $\str{Z}_{22}$. The two singleton required profiles
have no light positions, so their run collapses cannot change.

Optional $\str{U}$ retains its two endpoint $\str{a}$-positions. Keeping its $\str{b}$ gives
$\str{aba}$; deleting it gives $\str{a}$. For optional $\Gamma$, retain
$3,4,5$ for $\str{U}$. For the other odd-length codes, retain the
prefix of length $5,7$, or $9$. For $\str{ab}$, retain positions $5,6$.
For $\str{ba}$, retain positions $4,5$.
Every choice retains the heavy position $5$.

To align an optional profile with the dummy, hold the dummy $\str{a}$
while consuming the profile. Every heavy run has label $\str{a}$ and receives
an equality cell; only light $\str{b}$ positions incur cost. Every required
profile has a heavy $\str{b}$-run, which has no equality cell against a dummy.
\end{proof}

\paragraph{The output strings.}
Write $T=v_1\cdots v_{L_T}$ and use $T_{v_1},\ldots,T_{v_{L_T}}$ as the
real intervals. Place $L_T$ dummy intervals before the first real
interval, $L_T$ after the last, and $\ell$ in each internal gap.
There are $D_{\mathrm{dum}}=2L_T+(L_T-1)\ell$ dummies.
An accepting derivation satisfying
the proposition's hypothesis assigns each emitting factor to its
text symbol and each empty factor to a distinct dummy in the same
gap. This assignment is injective, preserves order, and covers every
real interval.

Let $F_1,\ldots,F_s$ be the factor gadgets and
$I_1,\ldots,I_{L_T+D_{\mathrm{dum}}}$ the resulting list of real and dummy gadgets.
Set
\begin{align*}
X&=\str{c}^H F_1\str{c}^H F_2\cdots \str{c}^H F_s\str{c}^H,\\
Y&=\str{c}^H I_1\str{c}^H I_2\cdots \str{c}^H I_{L_T+D_{\mathrm{dum}}}\str{c}^H.
\end{align*}
No gadget contains $\str{c}$, so no runs merge at these boundaries.
Figure~\ref{fig:rle-overview} shows the assignment of factors to intervals.

\paragraph{Soundness.}
Fix a warping path of cost less than $H$. A factor and an interval
are \emph{incident} if an equality cell joins their positions.

\begin{lemma}\label{lem:matching}
Incidence has the following properties.
\begin{enumerate}
\item It is an ordered matching covering every factor and every real
interval. Every heavy factor run has an equality cell in its
factor's unique matched interval.
\item If a factor with profile $(w,J)$ is matched to $T_v$, then
$(w,J)$ accepts $\operatorname{code}(v)$.
\end{enumerate}
\end{lemma}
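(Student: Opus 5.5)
The plan is to build both parts from Lemma~\ref{gate:heavy-witnesses}(i), applied to the heavy runs that the path traverses completely, together with the separators $\str{c}^H$. Every factor gadget and every interval gadget is a string over $\{\str{a},\str{b}\}$. Every factor gadget has at least one heavy run, and every real interval consists entirely of heavy runs.

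\emph{Part 1.} I would start by showing that each factor is incident to at most one interval. Suppose a factor $F_i$ had equality cells in two distinct intervals. Between those two cells the right coordinate passes over a complete heavy separator $\str{c}^H$ of $Y$. Throughout that passage the left coordinate stays inside $F_i$, which contains no $\str{c}$. The path therefore visits $H$ mismatching cells with distinct right coordinates, so its cost is at least $H$. The symmetric argument, using the heavy separators of $X$, shows that each interval is incident to at most one factor.

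Next, every factor is incident to some interval. The factor's heavy runs are fully traversed, since the path visits every left position. By Lemma~\ref{gate:heavy-witnesses}(i) each heavy run has an equality cell, and as an $\str{a}/\str{b}$ symbol it can only be matched inside an interval, not a separator. Every real interval is incident to some factor by the same argument applied to its heavy code runs. Combining these facts gives the last claim of Part~1: every heavy factor run has its equality cell in the unique matched interval.

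Order preservation comes from monotonicity of the path. Suppose $F_i$ is matched to $I_a$ and $F_{i'}$ to $I_{a'}$ with $i<i'$. The corresponding equality cells appear in path order, which forces $a\le a'$, and injectivity upgrades this to $a<a'$.

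\emph{Part 2.} Let $F$ be matched to the real interval $T_v$. I would retain the factor positions that have an equality cell whose right coordinate lies in $T_v$. By Part~1 every heavy position is retained, so deletions occur only at light positions, as the definition of acceptance requires.

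Each heavy code run of $T_v$ has an equality cell by Lemma~\ref{gate:heavy-witnesses}(i). That cell must lie in $F$, since any cell in another factor would make that factor incident to $T_v$, contradicting the matching. I then need to show that no factor run has equality cells in two different code runs of $T_v$. Between two such runs the right coordinate would traverse a complete heavy run of the opposite label while the left coordinate stays fixed on a single symbol, costing at least $H$.

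Now choose for each retained position one equality cell. By monotonicity these cells lie in nondecreasing code-run order. Consecutive retained positions with equal labels can share a code run, while those with distinct labels must advance. Every code run is hit, and the code alternates. Hence the retained subsequence, after collapsing equal neighbours, is exactly $\operatorname{code}(v)$.

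The step I expect to need the most care is this last one: showing that the retained groups cover the code runs contiguously and in order, so that the collapse equals the code rather than a proper subsequence of it. Coverage follows from every heavy code run having a witness inside $F$. Correct ordering follows from monotonicity together with the ban on a single factor run spanning two code runs.
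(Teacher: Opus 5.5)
Your proposal is correct and follows essentially the same route as the paper. The separator-crossing argument gives a matching, and heavy witnesses (Lemma~\ref{gate:heavy-witnesses}) give coverage and order. For part~2 you retain the positions with equality cells in $T_v$, rule out one factor run meeting two code runs via the intervening opposite-label heavy run, and read off $\operatorname{code}(v)$ from a monotone assignment that hits every code run. One small point to make explicit: a nondecreasing assignment that hits every code-run index rises by at most one between consecutive retained positions. This is what makes the retained groups consecutive and the collapse equal to the whole code.
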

\begin{proof}
Suppose a factor has equality cells in two distinct
intervals. Between those cells, the right coordinate traverses
a full separator $\str{c}^H$, while the left coordinate stays
in the factor. Since the factor contains only $\str{a},\str{b}$,
all cells in that traversal mismatch, giving cost at least $H$.
Interchanging the axes shows that an interval cannot be incident
to two factors. Incidence is a matching. Monotonicity orders its pairs.

Every factor and every real interval has a heavy run of
$\str{a}$ or $\str{b}$. By Lemma~\ref{gate:heavy-witnesses},
that run has an equality cell, whose other coordinate cannot lie
in a separator. The matching covers every factor and
every real interval. Applying the same lemma to each heavy factor
run shows that it has an equality cell in the factor's unique
matched interval.

For the second assertion, fix a profile $(w,J)$ matched to $T_v$.
We recover a retained subsequence of $w$ that collapses to the code.
Every heavy factor run has an equality cell in $T_v$, and every
run of $T_v$ has an equality cell in the factor. A factor run
cannot have equality cells in two runs of $T_v$. Such code runs
would have the same label, with an opposite-label run between
them. Traversing that intervening heavy run while staying in the
factor run would cost at least $H$.

Retain exactly those positions of $w$ whose runs have equality
cells in $T_v$, and assign each to the unique code run containing
those cells. All heavy positions $J$ are retained. Monotonicity
makes the assigned code-run indices nondecreasing.
Every code run has an equality witness, so every code-run index occurs.
The retained positions assigned to one code run form a consecutive
group with that run's label. These groups visit all code runs in
order, so collapsing them gives $\operatorname{code}(v)$.
\end{proof}

By Lemmas~\ref{lem:matching} and~\ref{lem:table}, each factor
matched to a real interval can emit its symbol, and each factor
matched to a dummy is optional. Let the latter factors emit
$\epsilon$. The ordered matching covers every factor and every
text symbol, so these choices form an accepting derivation.
This proves $T\notin L(P)\Rightarrow\DTW(X,Y)\geq H$.

\paragraph{Completeness.}
Suppose $T\in L(P)$, and choose the accepting derivation and
ordered assignment described above. We first work on the run
collapses. We need a path that visits all positions and gives
every position to be expanded into a heavy run an equality cell.
The expansion assertion of Lemma~\ref{gate:heavy-witnesses} will
then preserve its cost.

For a factor emitting $v$, Lemma~\ref{lem:table} supplies retained
profile positions whose subsequence collapses to
$\operatorname{code}(v)$. The path must also visit the deleted
positions. These positions are light. The maximal groups of equal symbols in the
retained subsequence correspond to the code positions.
Scan the entire profile from left to right. Match each retained
position to its group's code position. Consume a deleted position
while holding the most recent code position, or the first code
position if no retained position has been reached. At the first
retained position of each later group, advance both coordinates.

This gives a full path across the factor-code rectangle. Every
heavy factor position is retained and has an equality cell.
Every code position is matched to at least one retained factor position.
For an empty factor, use the optional-to-dummy path from
Lemma~\ref{lem:table}, which also gives every heavy factor position
an equality cell.

To connect consecutive assigned intervals, hold the intervening
$\str{c}$ of $\shrink(X)$ while traversing the portion of
$\shrink(Y)$ between them. This portion contains only separators
and unused dummies, since every real interval is assigned. All
its separators match the held $\str{c}$ at zero cost; the dummy
positions may cost one each. The initial and final $\str{c}$'s
handle the unused end dummies in the same way. Diagonal steps
connect these portions to the factor-interval rectangles. Every
separator on both axes now has an equality cell.

The resulting full path on the run collapses has at most
$n+m-1$ cells, each of cost at most one. Every position whose run
is to be heavy has an equality cell. Expanding those positions to
length $H$ adds no cost, giving
$\DTW(X,Y)\leq n+m-1$.

\paragraph{Size.}\label{sec:rle-run-counts}
Every gadget has a constant number of runs, and separators prevent
run mergers. The string $X$ has $n=\Theta(s)$ runs, and $Y$ has
$m=\Theta(L_T+D_{\mathrm{dum}})$ runs. Hence
$n+m=\Theta(s+L_T+D_{\mathrm{dum}})=O(s+L_T\ell)$.
Every factor has a heavy run, every run of a real interval is heavy, and each
dummy has a heavy separator. A constant fraction of all runs is
heavy, and every run has length at most $H$.
Hence $N+M=\Theta((n+m)H)$.
The run lists are written directly in $O(n+m)$ time.

\subsection{A hard membership source}\label{sec:serial-source}

We adapt Backurs and Indyk's reduction~\cite{BackursIndyk2016}
to the factor types above and keep explicit bounds on consecutive
empty factors.

\begin{proposition}\label{prop:serial-source}
Let $A=(a^1,\ldots,a^p)$ and $B=(b^1,\ldots,b^p)$ be two lists
of vectors in $\{0,1\}^d$, where $p\geq1$ and $d\geq2$ is even.
One can construct a text $T\in\Gamma^{L_T}$ and a pattern $P$
with $s$ factors of the eight types above in $O(pd)$ time.
They satisfy
\[
 T\in L(P)\quad\Longleftrightarrow\quad
 \exists i,j\in[p]\quad a^i\cdot b^j=0.
\]
We have $L_T+s=O(pd)$. Whenever $T\in L(P)$, it has an accepting
derivation with at most $L_T$ empty factors at either end and at most
$\ell=d/2+1$ between consecutive text symbols.
\end{proposition}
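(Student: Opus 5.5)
The plan is to follow the Backurs--Indyk template, but with every gadget written as a concatenation of the eight factor types. I would then verify the two directions separately. Throughout, put $D=d/2$ and group each vector's coordinates into $D$ consecutive pairs. A pair $(x,y)$ is written as the symbol $\str{Z}_{(2-x)(2-y)}$, so that $\str{Z}_{22}$ represents $(0,0)$ and is orthogonal to every pair.

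\textbf{Text blocks.} Every text block has length $D+1$ and begins with $\str{U}$. There are three kinds.
\begin{enumerate}[label=(\roman*)]
\item A \emph{vector block} for $b^j$ is $\str{U}$ followed by the $D$ pair symbols of $b^j$.
\item A \emph{neutral block} is $\str{U}\str{Z}_{22}^{D}$.
\item A \emph{marker block} is $\str{U}\str{B}_\star^{D}$.
\end{enumerate}
The text is a marker block, then the blocks (neutral, $b^j$) for $j=1,\ldots,p$, each pair interleaved with further neutral blocks as needed, and a final marker block.

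\textbf{Pattern gadgets.}
\begin{enumerate}[label=(\roman*)]
\item The gadget for $a^i$ is the required factor $\{\str{U}\}$ followed by $C_{\alpha_k\beta_k}$ for each coordinate pair $(\alpha_k,\beta_k)$ of $a^i$. Each $C_{\alpha\beta}$ contains $\str{Z}_{22}$, and it contains the pair symbol of $b^j$ exactly on orthogonal pairs. Hence this gadget emits a neutral block always, and emits the block of $b^j$ iff $a^i\cdot b^j=0$.
\item An \emph{auxiliary gadget} is $\{\str{U}\}\,\Gamma^{D}\,(\epsilon\mid\Gamma)^{D+1}$. It emits exactly the words of length $D+1$ to $2D+2$ that begin with $\str{U}$.
\item The \emph{all-one gadget} is $\{\str{U}\}C_{11}^{D}$. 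It cannot emit a marker block because $\str{B}_\star\notin C_{11}$.
\end{enumerate}
The pattern is a required marker gadget $\{\str{U}\}\{\str{B}_\star\}^{D}$, then for each $i$ the sequence (auxiliary, gadget for $a^i$, auxiliary, all-one), and finally a closing marker gadget. Optional $(\epsilon\mid\Gamma)$ padding at both ends turns substring membership into whole-text membership. I would tune the number of blocks per stage and the number of padding factors so that the default emissions shift by exactly one block over the whole pattern. Everything is written in $O(pd)$ time with $L_T+s=O(pd)$.

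\textbf{Completeness.} Given an orthogonal pair $(i,j)$, I would write down the derivation explicitly.
\begin{enumerate}[label=(\arabic*)]
\item Padding absorbs the text before the stage aligned with $b^j$.
\item Every auxiliary gadget uses its default two-block emission, except that the one immediately before gadget $a^i$ uses the one-block emission. This lands gadget $a^i$ on the block of $b^j$.
\item All other vector gadgets read neutral blocks, and the all-one gadgets read neutral blocks.
\item The final padding absorbs the remainder.
\end{enumerate}
In an emitting auxiliary gadget, the empty factors are the trailing $(\epsilon\mid\Gamma)$ factors, at most $D+1=\ell$ of them in a row, and they lie between two emitted symbols. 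The end paddings contribute at most $L_T$ empty factors at each end. This gives the empty-factor bounds in the statement.

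\textbf{Soundness.} This is the main obstacle. Take any accepting derivation and use the block structure. Every gadget begins with a required $\{\str{U}\}$, and $\str{U}$ appears in the text only at block starts. A length count should therefore show that every gadget starts at a block boundary. An intermediate auxiliary length would place the next $\{\str{U}\}$ on a non-$\str{U}$ symbol, so each auxiliary emission spans exactly one or two blocks.

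If all auxiliary gadgets emit two blocks, the closing marker gadget is misaligned with the final marker block. So consider the first auxiliary gadget that emits a single block.
\begin{itemize}
\item If it follows a vector gadget, the next all-one gadget is shifted onto a marker block, or the rigid block pattern forces a contradiction. This is impossible since $\str{B}_\star\notin C_{11}$.
\item Hence it precedes some gadget $a^i$. The one-block shift places that gadget on a block $b^j$ rather than a neutral block, and since the gadget emits it, $a^i\cdot b^j=0$.
\end{itemize}

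The delicate part is choosing the text interleaving, the number of neutral blocks and the padding so that the counting argument is airtight. I would fix the text layout first. Then I would verify soundness by induction on the stages, tracking the offset modulo the block period, and only afterwards check that the padding leaves room for exactly one shift.
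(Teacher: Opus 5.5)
Your gadgets and the block-boundary argument are fine, but the one-block shift you commit to cannot be sound, and no tuning of the layout fixes this. In a stage $(\text{aux},V(a^{i}),\text{aux},\text{all-one})$, your completeness derivation shortens only the auxiliary gadget before $V(a^{i_*})$. Shorten the one after it instead. Stage $i_*$ still emits $2+1+1+1=1+1+2+1$ blocks, so its all-one gadget and every later gadget read exactly the same blocks as before. Meanwhile $V(a^{i_*})$ reads the block at its default position, which your default alignment makes neutral. This derivation certifies no orthogonal pair, yet it succeeds whenever yours does. In particular, your soundness trap cannot exist: the all-one gadget you want on a marker sits on the same block in your own completeness derivation.

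The paper avoids this by shifting three blocks, not one. In group $i_*$ both $V_0$ gadgets take the one-block emission, and so does the first $V_0$ of group $i_*+1$. Its six-block text group $E(0^d)\,E_0\,E(b^j)\,E(0^d)\,E_0\,E_1$ admits two consistent phases. In the default phase $V(1^d)$ reads block $1$ and $V(a^i)$ reads block $4$. In the shifted phase they read blocks $4$ and $1$. During the transition $V(1^d)$ reads the all-$\str{U}$ block $E_0$, and a one-block $V_0$ reads $E_1$. A lone shortening after $V(a^i)$ puts the next $V(1^d)$ on $E_1$. Soundness only examines the first shortening, so this forces it to lie just before some $V(a^i)$, which then reads $E(b^{\tau+i})$. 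The tail $V(1^d)\,V(1^d)\,V_1$ makes the default phase miss the final marker, while the shifted phase reaches it.

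The layout you describe also fails on its own. With marker blocks only at the two ends, the opening and closing marker gadgets must match them, because $\str{B}_\star$ occurs nowhere else. The core is therefore pinned, and your completeness step (1), where padding absorbs a text prefix, is unavailable. Each of your stages emits four to six blocks, so $V(a^i)$ can only start roughly between $4i$ and $6i$ blocks after the opening marker. For $p\geq2$ the windows for $i=1$ and $i=p$ are disjoint, so a single occurrence of $b^j$ cannot be reachable from every $V(a^i)$. You need a marker in every text group and the list $B$ present at every offset relative to the pattern. The paper uses $3\bar p$ groups indexed from $-\bar p$ to $2\bar p$ and sets the out-of-range vectors to $1^d$, so an out-of-range certificate forces $a^i=0^d$. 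It also appends a duplicate final vector, so that group $i_*+1$, which completes the shift, always exists. With that layout and the three-step shift, your auxiliary gadget $\{\str{U}\}\Gamma^{D}(\epsilon\mid\Gamma)^{D+1}$ would also work, with $E(0^d)$ used in place of $E_0$ so that $\str{U}$ occurs only at block starts.
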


\begin{proof}\label{sec:source-proof}
Append a copy of the final vector to each list, obtaining size
$\bar p=p+1$. This preserves the existence of an orthogonal pair
and leaves a pattern group after every original left index.
Put $h=d/2$ and define the text blocks
\[
 E(b)=\str{U}\str{Z}_{\,2-b_1,\,2-b_2}\cdots
      \str{Z}_{\,2-b_{d-1},\,2-b_d},
 \qquad E_0=\str{U}^{h+1},\qquad E_1=\str{U}\str{B}_\star^h.
\]
The block $E(b)$ encodes the coordinates of $b$.
We call $E(0^d)$ and $E_0$ \emph{neutral blocks}. The all-$\str{U}$
block $E_0$ gives an emission available to every vector gadget.
In the text, the marker $\str{B}_\star$ occurs only in $E_1$ and
will fix the endpoints of the matched pattern core.
Set $b^j=1^d$ for $j\notin[\bar p]$, and let
\[
 T=\bigodot_{j=-\bar p}^{2\bar p}
       \bigl(E(0^d)\,E_0\,E(b^j)\,E(0^d)\,E_0\,E_1\bigr).
\]
The six blocks indexed by $j$ form text group $j$. Every block
has length $h+1$, so $L_T=|T|=O(pd)$.

The pattern gadgets are
\[
 V(a)=\str{U}C_{a_1,a_2}\cdots C_{a_{d-1},a_d},\qquad
 V_0=\str{U}(\epsilon\mid \str{U})^{h+1}\Gamma^h,\qquad
 V_1=\str{U}\str{B}_\star^h.
\]
The sets $C_{\alpha\beta}$ give
\[
 E(b)\in L(V(a))\quad\Longleftrightarrow\quad a\cdot b=0.
\]
In particular, every $V(a)$ can emit $E(0^d)$. It can also emit
$E_0$, since all four sets contain $\str{U}$.
The boundary gadget $V_1$ emits exactly $E_1$, whereas
$V(1^d)$ cannot emit $E_1$ because $\str{B}_\star\notin C_{11}$.
The gadget $V_0$ emits $t$ copies of $\str{U}$ followed by $h$ arbitrary
symbols of $\Gamma$, where $1\leq t\leq h+2$.
The choices $t=1$ and $t=h+2$ emit one and two text blocks, respectively.
All three types of pattern gadget begin with $\str{U}$.
Define the core
\[
 Q=V_1\left(\bigodot_{i=1}^{\bar p}
       [V(1^d)\,V_0\,V(a^i)\,V_0]\right)
       V(1^d)\,V(1^d)\,V_1,
\]
where the four bracketed gadgets form pattern group $i$.
We first prove that a substring of $T$ belongs to $L(Q)$ exactly
when the input has an orthogonal pair.

For soundness, suppose a substring of $T$ belongs to $L(Q)$ and fix
an accepting derivation. Since $\str{B}_\star$ occurs only in the final
$h$ positions of an $E_1$ block and $h\geq1$, the initial $V_1$
occupies an entire $E_1$ block, say in text group $\tau$.
After the initial marker, the first pattern group starts at the
first block of text group $\tau+1$. As long as every $V_0$ chooses
$t=h+2$, each pattern group consumes a complete text group.
Specifically, pattern group $i$ starts at the first block of text
group $\tau+i$ and ends at its sixth block. The table shows this
default alignment.
\[
 \begin{array}{c|cccc}
 \text{gadget}&V(1^d)&V_0&V(a^i)&V_0\\
 \text{text blocks}&1&2,3&4&5,6\\
 \text{emission}&E(0^d)&E_0E(b^{\tau+i})&E(0^d)&E_0E_1
 \end{array}
\]
If all groups followed this alignment, the final three gadgets
would have to emit $E(0^d)E_0E(b^{\tau+\bar p+1})$, the first
three blocks of text group $\tau+\bar p+1$. This is impossible
because the final $V_1$ ends in $\str{B}_\star^h$, whereas every $E(b)$
contains no $\str{B}_\star$.

At least one $V_0$ must choose $t<h+2$.
Consider the first such gadget, and let it occur in pattern group $i$.
All earlier gadgets follow
the default alignment, so this $V_0$ starts at one of the two
$E_0$ blocks in text group $\tau+i$. The next text block is
$E(b^{\tau+i})$ for the first $V_0$, or $E_1$ for the second.
Both start with $\str{U}$ followed by $h$ non-$\str{U}$ symbols.

If $2\leq t\leq h+1$, the emitted length $t+h$ consumes the
whole $E_0$ block and the first $t-1$ symbols of the next block.
The following gadget would then start at position $t$ of that
block, whose symbol is not $\str{U}$. Every pattern gadget must
start with $\str{U}$, so these choices are impossible. Hence
$t=1$, and this $V_0$ consumes exactly $E_0$.

The first such choice cannot occur in the second $V_0$ of a
pattern group. It would place the next gadget, $V(1^d)$, on
$E_1$, which that gadget cannot emit.
The first shorter choice occurs in the first $V_0$ of the pattern group.
The next gadget $V(a^i)$ now occupies the whole
block $E(b^{\tau+i})$, giving $a^i\cdot b^{\tau+i}=0$.
Duplicated vectors correspond to original entries. If $\tau+i$
is out of range, then $b^{\tau+i}=1^d$, forcing $a^i=0^d$,
which is orthogonal to every original right vector.

For completeness, choose original indices $i_*,j_*\in[p]$ with
$a^{i_*}\cdot b^{j_*}=0$.
Start the initial $V_1$ at the $E_1$ block of text group $j_*-i_*$.
Use the default alignment for pattern groups $i<i_*$. These groups
end at the sixth block of their text groups, so pattern group $i_*$
starts at the first block of text group $j_*$.
We use the orthogonal pair to change this alignment in groups
$i_*$ and $i_*+1$.
\[
 \begin{array}{c|cccc}
 \text{group}&V(1^d)&V_0&V(a^i)&V_0\\
 i_*&E(0^d)&E_0&E(b^{j_*})&E(0^d)\\
 i_*+1&E_0&E_1&E(0^d)&E_0E(b^{j_*+1})
 \end{array}
\]
Orthogonality permits $V(a^{i_*})$ to emit $E(b^{j_*})$. Every one-block
$V_0$ emission uses $t=1$, and every two-block emission uses
$t=h+2$.
Group $i_*$ consumes the first four blocks of text group $j_*$.
Group $i_*+1$ consumes its last two blocks and the first three
blocks of group $j_*+1$. The next pattern group starts
at the fourth block of a text group.

For each remaining pattern group $i=i_*+2,\ldots,\bar p$, put
$j=i+j_*-i_*$ and use the following emissions.
\[
 \begin{array}{c|cccc}
 \text{gadget}&V(1^d)&V_0&V(a^i)&V_0\\
 \text{emission}&E(0^d)&E_0E_1&E(0^d)&E_0E(b^j)
 \end{array}
\]
This group consumes blocks four through six of text group $j-1$
and blocks one through three of text group $j$. Each
remaining group starts at a fourth block and ends at a third
block of the next text group. After the last group, the final
three gadgets emit the remaining $E(0^d)E_0E_1$ blocks,
completing text group $\bar p+j_*-i_*$.
All groups used lie between $j_*-i_*$ and $\bar p+j_*-i_*$,
hence within $[-\bar p,2\bar p]$.

Finally, put
\[
 P=(\epsilon\mid\Gamma)^{L_T} Q(\epsilon\mid\Gamma)^{L_T}.
\]
Each padding block emits any word of length at most $L_T$, so
$T\in L(P)$ exactly when a substring of $T$ belongs to $L(Q)$.
There are $O(p)$ gadgets of $O(d)$ factors in $Q$, giving
$s=O(pd)$ and the claimed construction bound.

The optional factors in $Q$ occur in blocks of $h+1$ inside
$V_0$, each preceded and followed by required factors. The core
also begins and ends with required factors. In an accepting
derivation, reorder the choices within each padding block,
preserving the order of the emitting choices. Place all empty
choices first in the prefix padding and last in the suffix padding.
The factors within each padding block are identical, so this
preserves the emitted word. Every internal block
of empty factors now lies in a single optional core block, while
each end block has length at most $L_T$. The derivation has
the required bound $\ell=h+1$.
\end{proof}

\subsection{The RLE lower bound}\label{sec:randomized-calibration}

\begin{proof}[Proof of Theorem~\ref{thm:rle}(\ref{thm:rle-constant})]
Propositions~\ref{prop:serial-source} and~\ref{prop:three-symbol-compiler}
reduce Orthogonal Vectors on two lists of $p$ vectors of dimension
$d$ to DTW. The source construction has $L_T,s=\Theta(pd)$ and
$\ell=\Theta(d)$. The factor string has $n=\Theta(s)=\Theta(pd)$
runs. The text string has an extra factor of $d$ because each
internal text gap receives $\ell$ dummy intervals. Indeed,
\[
 D_{\mathrm{dum}}=2L_T+(L_T-1)\ell=\Theta(pd^2),\qquad
 m=\Theta(L_T+D_{\mathrm{dum}})=\Theta(pd^2).
\]
Hence $nm=\Theta(p^2d^3)$ and
$n+m=O(pd^2)$. Choose
$H=(\lceil C\rceil+1)(n+m)$. A successful $C$-approximate estimate
is at most $C(n+m-1)<H$ in the \textsc{Yes} case and at least $H$ in the \textsc{No}
case. Comparing the estimate with $H$ decides the source instance.
The output has $N+M=\Theta_C((n+m)^2)$.

Suppose the stated algorithm exists; it suffices to consider
$0<\delta<1$. Choose the constant $c$ supplied by
Hypothesis~\ref{hyp:rovh} for exponent saving $\delta$, and start
from an OV instance with dimension $\lceil c\log p\rceil$.
Pad vectors with zeros if necessary to make the dimension at least
two and even. The resulting dimension satisfies $d=O(\log p)$, so
\[
 nm=O(p^2\log^3 p),\qquad \log(N+M)=O(\log p).
\]
Including construction, the resulting OV algorithm runs in
\[
 O(pd^2+(nm)^{1-\delta}\polylog(N+M))
 =O(p\log^2p+p^{2-2\delta}\polylog p)
 =O(p^{2-\delta}).
\]
The resulting OV algorithm contradicts Hypothesis~\ref{hyp:rovh}
for the chosen $c$.
\end{proof}

\begin{proof}[Proof of Theorem~\ref{thm:rle}(\ref{thm:rle-large})]
Set $e=\lceil2/\eta\rceil$ and $H=(n+m)^e$ in the same reduction.
We have $N+M\leq(n+m)H=(n+m)^{e+1}$, where
$e+1=O(1/\eta)$. Since $(e+1)\eta>2$, a successful estimate on a
\textsc{Yes} instance satisfies
\[
 \widehat D\leq(n+m-1)(N+M)^{1-\eta}
 <(n+m)^{1+(e+1)(1-\eta)}<(n+m)^e=H.
\]
On a \textsc{No} instance it is at least $H$. We still have
$\log(N+M)=O_\eta(\log(n+m))$, so the running-time analysis in the proof
of part~(\ref{thm:rle-constant}) again contradicts
Hypothesis~\ref{hyp:rovh}.
\end{proof}

For any fixed $\lambda>0$, choose $\eta=2/(\lambda+2)$. Then
$H\ge(n+m)^{\lambda+2}$, so
\[
 (N+M)^{1-\eta}\ge H^{1-\eta}\ge(n+m)^\lambda.
\]
On these instances, an $(n+m)^\lambda$-approximation is also an
$(N+M)^{1-\eta}$-approximation.
Part~(\ref{thm:rle-large}) rules out such an algorithm within the
same running time.

\section*{AI assistance}
OpenAI's Codex tools assisted with developing and checking the constructions
and proofs in Sections~\ref{sec:explicit} and~\ref{sec:rle},
including the formula encodings. The tools also assisted with locating
and comparing references and with preparing and revising the text and
figures. This assistance included proposed proof steps, parameter calculations,
and separate audits of the arguments and exposition.
Anthropic's Claude also assisted with reviewing the proofs, proposing
parameter refinements, comparing related work, and refining the exposition.

\begingroup
\footnotesize
\interlinepenalty=10000
\makeatletter
\renewcommand{\@openbib@code}{%
  \setlength{\itemsep}{1pt}\setlength{\parsep}{0pt}}
\makeatother
\bibliographystyle{alphaurl}
\bibliography{references}

@inproceedings{AbboudBackurs2017,
  author    = {Amir Abboud and Arturs Backurs},
  title     = {Towards Hardness of Approximation for Polynomial Time Problems},
  booktitle = {8th Innovations in Theoretical Computer Science Conference (ITCS 2017)},
  series    = {Leibniz International Proceedings in Informatics},
  volume    = {67},
  pages     = {11:1--11:26},
  publisher = {Schloss Dagstuhl--Leibniz-Zentrum f{\"u}r Informatik},
  year      = {2017},
  doi       = {10.4230/LIPIcs.ITCS.2017.11},
  url       = {https://drops.dagstuhl.de/entities/document/10.4230/LIPIcs.ITCS.2017.11}
}

@inproceedings{AbboudRubinstein2018,
  author    = {Amir Abboud and Aviad Rubinstein},
  title     = {Fast and Deterministic Constant Factor Approximation Algorithms for {LCS} Imply New Circuit Lower Bounds},
  booktitle = {9th Innovations in Theoretical Computer Science Conference (ITCS 2018)},
  series    = {Leibniz International Proceedings in Informatics},
  volume    = {94},
  pages     = {35:1--35:14},
  publisher = {Schloss Dagstuhl--Leibniz-Zentrum f{\"u}r Informatik},
  year      = {2018},
  doi       = {10.4230/LIPIcs.ITCS.2018.35},
  url       = {https://drops.dagstuhl.de/entities/document/10.4230/LIPIcs.ITCS.2018.35}
}

@inproceedings{BenSassonViola2014,
  author    = {Eli Ben-Sasson and Emanuele Viola},
  title     = {Short {PCPs} with Projection Queries},
  booktitle = {41st International Colloquium on Automata, Languages, and Programming (ICALP 2014)},
  series    = {Lecture Notes in Computer Science},
  volume    = {8572},
  pages     = {163--173},
  publisher = {Springer},
  year      = {2014},
  doi       = {10.1007/978-3-662-43948-7_14},
  url       = {https://doi.org/10.1007/978-3-662-43948-7_14}
}

@inproceedings{ChenEtAl2019,
  author    = {Lijie Chen and Shafi Goldwasser and Kaifeng Lyu and
               Guy N. Rothblum and Aviad Rubinstein},
  title     = {Fine-grained Complexity Meets {IP = PSPACE}},
  booktitle = {Proceedings of the Thirtieth Annual ACM-SIAM Symposium on Discrete Algorithms (SODA)},
  pages     = {1--20},
  publisher = {SIAM},
  year      = {2019},
  doi       = {10.1137/1.9781611975482.1},
  url       = {https://doi.org/10.1137/1.9781611975482.1}
}

@inproceedings{BringmannKunnemann2015,
  author    = {Karl Bringmann and Marvin K{\"u}nnemann},
  title     = {Quadratic Conditional Lower Bounds for String Problems and Dynamic Time Warping},
  booktitle = {56th Annual IEEE Symposium on Foundations of Computer Science (FOCS)},
  pages     = {79--97},
  publisher = {IEEE},
  year      = {2015},
  doi       = {10.1109/FOCS.2015.15},
  url       = {https://doi.org/10.1109/FOCS.2015.15}
}

@inproceedings{Kuszmaul2019,
  author    = {William Kuszmaul},
  title     = {Dynamic Time Warping in Strongly Subquadratic Time: Algorithms for the Low-Distance Regime and Approximate Evaluation},
  booktitle = {46th International Colloquium on Automata, Languages, and Programming (ICALP 2019)},
  series    = {Leibniz International Proceedings in Informatics},
  volume    = {132},
  pages     = {80:1--80:15},
  publisher = {Schloss Dagstuhl--Leibniz-Zentrum f{\"u}r Informatik},
  year      = {2019},
  doi       = {10.4230/LIPIcs.ICALP.2019.80},
  url       = {https://drops.dagstuhl.de/entities/document/10.4230/LIPIcs.ICALP.2019.80}
}

@inproceedings{AbboudBackursWilliams2015,
  author    = {Amir Abboud and Arturs Backurs and Virginia Vassilevska Williams},
  title     = {Tight Hardness Results for {LCS} and Other Sequence Similarity Measures},
  booktitle = {56th Annual IEEE Symposium on Foundations of Computer Science (FOCS)},
  pages     = {59--78},
  publisher = {IEEE},
  year      = {2015},
  doi       = {10.1109/FOCS.2015.14},
  url       = {https://doi.org/10.1109/FOCS.2015.14}
}

@inproceedings{BackursIndyk2016,
  author    = {Arturs Backurs and Piotr Indyk},
  title     = {Which Regular Expression Patterns Are Hard to Match?},
  booktitle = {57th IEEE Symposium on Foundations of Computer Science (FOCS)},
  pages     = {457--466},
  publisher = {IEEE},
  year      = {2016},
  doi       = {10.1109/FOCS.2016.56},
  url       = {https://doi.org/10.1109/FOCS.2016.56}
}

@inproceedings{BennettPeikertTang2022,
  author    = {Huck Bennett and Chris Peikert and Yi Tang},
  title     = {Improved Hardness of {BDD} and {SVP} Under {Gap-(S)ETH}},
  booktitle = {13th Innovations in Theoretical Computer Science Conference (ITCS 2022)},
  series    = {Leibniz International Proceedings in Informatics},
  volume    = {215},
  pages     = {19:1--19:12},
  publisher = {Schloss Dagstuhl--Leibniz-Zentrum f{\"u}r Informatik},
  year      = {2022},
  doi       = {10.4230/LIPIcs.ITCS.2022.19},
  url       = {https://drops.dagstuhl.de/entities/document/10.4230/LIPIcs.ITCS.2022.19}
}

@inproceedings{XiKuszmaul2022,
  author    = {Zoe Xi and William Kuszmaul},
  title     = {Approximating Dynamic Time Warping Distance Between Run-Length Encoded Strings},
  booktitle = {30th Annual European Symposium on Algorithms (ESA 2022)},
  series    = {Leibniz International Proceedings in Informatics},
  volume    = {244},
  pages     = {90:1--90:19},
  publisher = {Schloss Dagstuhl--Leibniz-Zentrum f{\"u}r Informatik},
  year      = {2022},
  doi       = {10.4230/LIPIcs.ESA.2022.90},
  url       = {https://drops.dagstuhl.de/entities/document/10.4230/LIPIcs.ESA.2022.90}
}

@article{FroeseEtAl2023,
  author    = {Vincent Froese and Brijnesh J. Jain and Maciej Rymar and Mathias Weller},
  title     = {Fast Exact Dynamic Time Warping on Run-Length Encoded Time Series},
  journal   = {Algorithmica},
  volume    = {85},
  number    = {2},
  pages     = {492--508},
  year      = {2023},
  doi       = {10.1007/s00453-022-01038-3},
  url       = {https://link.springer.com/article/10.1007/s00453-022-01038-3}
}

@inproceedings{AgarwalEtAl2016,
  author    = {Pankaj K. Agarwal and Kyle Fox and Jiangwei Pan and Rex Ying},
  title     = {Approximating Dynamic Time Warping and Edit Distance for a Pair of Point Sequences},
  booktitle = {32nd International Symposium on Computational Geometry (SoCG 2016)},
  series    = {Leibniz International Proceedings in Informatics},
  volume    = {51},
  pages     = {6:1--6:16},
  publisher = {Schloss Dagstuhl--Leibniz-Zentrum f{\"u}r Informatik},
  year      = {2016},
  doi       = {10.4230/LIPIcs.SoCG.2016.6},
  url       = {https://drops.dagstuhl.de/entities/document/10.4230/LIPIcs.SoCG.2016.6}
}

@inproceedings{BonehEtAl2024,
  author    = {Itai Boneh and Shay Golan and Shay Mozes and Oren Weimann},
  title     = {{$\widetilde{O}$}ptimal Dynamic Time Warping on Run-Length Encoded Strings},
  booktitle = {51st International Colloquium on Automata, Languages, and Programming (ICALP 2024)},
  series    = {Leibniz International Proceedings in Informatics},
  volume    = {297},
  pages     = {30:1--30:17},
  publisher = {Schloss Dagstuhl--Leibniz-Zentrum f{\"u}r Informatik},
  year      = {2024},
  doi       = {10.4230/LIPIcs.ICALP.2024.30},
  url       = {https://drops.dagstuhl.de/entities/document/10.4230/LIPIcs.ICALP.2024.30}
}

@inproceedings{BonehGolanKociumaka2026,
  author    = {Itai Boneh and Shay Golan and Tomasz Kociumaka},
  title     = {Dynamic Time Warping in the Low-Distance Regime},
  booktitle = {67th Annual IEEE Symposium on Foundations of Computer Science (FOCS)},
  publisher = {IEEE},
  year      = {2026},
  url       = {https://arxiv.org/abs/2609.40249},
  note      = {To appear}
}

@misc{Kuszmaul2021Binary,
  author       = {William Kuszmaul},
  title        = {Binary Dynamic Time Warping in Linear Time},
  year         = {2021},
  howpublished = {arXiv:2101.01108v2},
  url          = {https://arxiv.org/abs/2101.01108}
}

@article{ImpagliazzoPaturi2001,
  author    = {Russell Impagliazzo and Ramamohan Paturi},
  title     = {On the Complexity of {$k$-SAT}},
  journal   = {Journal of Computer and System Sciences},
  volume    = {62},
  number    = {2},
  pages     = {367--375},
  year      = {2001},
  doi       = {10.1006/jcss.2000.1727},
  url       = {https://doi.org/10.1006/jcss.2000.1727}
}

@article{Williams2005,
  author    = {Ryan Williams},
  title     = {A New Algorithm for Optimal 2-Constraint Satisfaction and Its Implications},
  journal   = {Theoretical Computer Science},
  volume    = {348},
  number    = {2--3},
  pages     = {357--365},
  year      = {2005},
  doi       = {10.1016/j.tcs.2005.09.023},
  url       = {https://doi.org/10.1016/j.tcs.2005.09.023}
}

@article{AlonEtAl1992,
  author    = {Noga Alon and Oded Goldreich and Johan H{\aa}stad and Ren{\'e} Peralta},
  title     = {Simple Constructions of Almost {$k$}-wise Independent Random Variables},
  journal   = {Random Structures \& Algorithms},
  volume    = {3},
  number    = {3},
  pages     = {289--304},
  year      = {1992},
  doi       = {10.1002/rsa.3240030308},
  url       = {https://doi.org/10.1002/rsa.3240030308}
}

@article{BravermanEtAl2020,
  author    = {Vladimir Braverman and Moses Charikar and William Kuszmaul and
               Lin F. Yang},
  title     = {The One-Way Communication Complexity of Dynamic Time Warping Distance},
  journal   = {Journal of Computational Geometry},
  volume    = {11},
  number    = {2},
  pages     = {62--93},
  year      = {2020},
  doi       = {10.20382/jocg.v11i2a4},
  url       = {https://jocg.org/index.php/jocg/article/view/3119}
}

@inproceedings{Bringmann2014Frechet,
  author    = {Karl Bringmann},
  title     = {Why Walking the Dog Takes Time: {F}r{\'e}chet Distance Has No Strongly Subquadratic Algorithms Unless {SETH} Fails},
  booktitle = {55th Annual IEEE Symposium on Foundations of Computer Science (FOCS)},
  pages     = {661--670},
  publisher = {IEEE},
  year      = {2014},
  doi       = {10.1109/FOCS.2014.76},
  url       = {https://doi.org/10.1109/FOCS.2014.76}
}

@article{BringmannMulzer2016,
  author    = {Karl Bringmann and Wolfgang Mulzer},
  title     = {Approximability of the Discrete {F}r{\'e}chet Distance},
  journal   = {Journal of Computational Geometry},
  volume    = {7},
  number    = {2},
  pages     = {46--76},
  year      = {2016},
  doi       = {10.20382/jocg.v7i2a4},
  url       = {https://jocg.org/index.php/jocg/article/view/3000}
}

@article{ChakrabortyEtAl2020,
  author    = {Diptarka Chakraborty and Debarati Das and Elazar Goldenberg and Michal Kouck{\'y} and Michael Saks},
  title     = {Approximating Edit Distance Within Constant Factor in Truly Sub-quadratic Time},
  journal   = {Journal of the ACM},
  volume    = {67},
  number    = {6},
  pages     = {36:1--36:22},
  year      = {2020},
  doi       = {10.1145/3422823},
  url       = {https://doi.org/10.1145/3422823}
}

@inproceedings{GoldenbergRubinsteinSaha2020,
  author    = {Elazar Goldenberg and Aviad Rubinstein and Barna Saha},
  title     = {Does Preprocessing Help in Fast Sequence Comparisons?},
  booktitle = {Proceedings of the 52nd Annual ACM SIGACT Symposium on Theory of Computing (STOC)},
  pages     = {657--670},
  publisher = {ACM},
  year      = {2020},
  doi       = {10.1145/3357713.3384300},
  url       = {https://doi.org/10.1145/3357713.3384300}
}

@article{AndoniNosatzki2026,
  author    = {Alexandr Andoni and Negev Shekel Nosatzki},
  title     = {Edit Distance in Near-Linear Time: It's a Constant Factor},
  journal   = {SIAM Journal on Computing},
  volume    = {55},
  number    = {4},
  pages     = {708--786},
  year      = {2026},
  doi       = {10.1137/21M1392322},
  url       = {https://doi.org/10.1137/21M1392322}
}

@inproceedings{BuchinOpheldersSpeckmann2019,
  author    = {Kevin Buchin and Tim Ophelders and Bettina Speckmann},
  title     = {{SETH} Says: Weak {F}r{\'e}chet Distance Is Faster, but Only If It Is Continuous and in One Dimension},
  booktitle = {Proceedings of the Thirtieth Annual ACM-SIAM Symposium on Discrete Algorithms (SODA)},
  pages     = {2887--2901},
  publisher = {SIAM},
  year      = {2019},
  doi       = {10.1137/1.9781611975482.179},
  url       = {https://doi.org/10.1137/1.9781611975482.179}
}

@inproceedings{VanDerHorstOphelders2026,
  author    = {van der Horst, Thijs and Ophelders, Tim},
  title     = {A Strongly-Subquadratic {$(3+\varepsilon)$}-Approximation for the {F}r{\'e}chet Distance for Paths in Metric Spaces},
  booktitle = {34th Annual European Symposium on Algorithms (ESA 2026)},
  series    = {Leibniz International Proceedings in Informatics},
  volume    = {388},
  pages     = {111:1--111:17},
  publisher = {Schloss Dagstuhl--Leibniz-Zentrum f{\"u}r Informatik},
  year      = {2026},
  doi       = {10.4230/LIPIcs.ESA.2026.111},
  url       = {https://drops.dagstuhl.de/entities/document/10.4230/LIPIcs.ESA.2026.111}
}

@inproceedings{MaoRubinstein2026,
  author    = {Xiao Mao and Aviad Rubinstein},
  title     = {Approximation Schemes for Edit Distance and {LCS} in Quasi-Strongly Subquadratic Time},
  booktitle = {Proceedings of the 58th Annual ACM Symposium on Theory of Computing (STOC)},
  pages     = {744--754},
  publisher = {ACM},
  year      = {2026},
  doi       = {10.1145/3798129.3800789},
  url       = {https://doi.org/10.1145/3798129.3800789}
}

@misc{MaderTavasoliWang2026,
  author       = {Ethan Mader and Borna Tavasoli and Jihan Wang},
  title        = {A Strongly Subquadratic {$(3+\varepsilon)$}-Approximation for Weighted Edit Distance over Arbitrary Metrics},
  year         = {2026},
  howpublished = {arXiv:2609.14873},
  url          = {https://arxiv.org/abs/2609.14873}
}

@misc{DasKipouridisKociumaka2026,
  author       = {Debarati Das and Evangelos Kipouridis and Tomasz Kociumaka},
  title        = {Metric Weighted Edit Distance: {$(3+\varepsilon)$}-Approximation in {$\widetilde O_\varepsilon(N^{1.6})$} Time},
  year         = {2026},
  howpublished = {arXiv:2609.20796},
  url          = {https://arxiv.org/abs/2609.20796}
}

@misc{Gorbachev2026,
  author       = {Egor Gorbachev},
  title        = {Factor Three Approximation for Edit Distance},
  year         = {2026},
  howpublished = {arXiv:2610.01311},
  url          = {https://arxiv.org/abs/2610.01311}
}

@article{ImpagliazzoPaturiZane2001,
  author    = {Russell Impagliazzo and Ramamohan Paturi and Francis Zane},
  title     = {Which Problems Have Strongly Exponential Complexity?},
  journal   = {Journal of Computer and System Sciences},
  volume    = {63},
  number    = {4},
  pages     = {512--530},
  year      = {2001},
  doi       = {10.1006/jcss.2001.1774},
  url       = {https://doi.org/10.1006/jcss.2001.1774}
}
\endgroup
\end{document}